\documentclass[11pt,a4paper,english]{article}
\usepackage[T1]{fontenc}
\usepackage[margin=1in]{geometry}
\usepackage{amsmath}
\usepackage{amsthm}
\usepackage{amstext}
\usepackage{amssymb}
\usepackage[unicode=true,pdfusetitle,
 bookmarks=true,bookmarksnumbered=false,bookmarksopen=false,
 breaklinks=false,pdfborder={0 0 1},backref=false,colorlinks=false]
 {hyperref}
 

%
{\begin{list}{}%
         {\setlength{\leftmargin}{#1}}%
         \item[]%
}
{\end{list}}

\numberwithin{equation}{section}
\numberwithin{figure}{section}

\theoremstyle{plain}
\newtheorem{theorem}{Theorem}[section]
\newtheorem{corollary}[theorem]{Corollary}
\newtheorem{lemma}[theorem]{Lemma}
\newtheorem{proposition}[theorem]{Proposition}
\newtheorem{claim}[theorem]{Claim}

\theoremstyle{remark}
\newtheorem{remark}[theorem]{Remark}

\theoremstyle{definition}
\newtheorem{definition}[theorem]{Definition}

%
%
%
%



\bibliographystyle{alpha}

\newcommand{\lab}{\mathrm{lab}}
\newcommand{\girth}{\mathrm{girth}}
\newcommand{\pseudoedge}[1]{\stackrel{#1}{\longrightarrow}}
\newcommand{\floor}[1]{\left\lfloor #1 \right\rfloor}
\newcommand{\ceil}[1]{\left\lceil #1 \right\rceil}
\newcommand{\myexponent}{\frac{1}{\floor{n/4}}}
\newcommand{\E}{\mathbb{E}}

\begin{document}
\setcounter{page}{0}
\thispagestyle{empty}
\title{Set membership with a few bit probes}

\author{ Mohit Garg and Jaikumar Radhakrishnan \\ 
	Tata Institute of Fundamental Research, Mumbai\\
	\texttt{\{garg,jaikumar\}@tifr.res.in}
}
\date{}

\maketitle

\begin{abstract}
We consider the bit-probe complexity of the set membership problem,
where a set $S$ of size at most $n$ from a universe of size $m$ is to
be represented as a short bit vector in order to answer membership
queries of the form ``Is $x$ in $S$?'' by {\em adaptively} probing the
bit vector at $t$ places.  Let $s(m,n,t)$ be the minimum number of
bits of storage needed for such a scheme.  Several recent works
investigate $s(m,n,t)$ for various ranges of the parameter; we obtain
the following improvements over the bounds shown by Buhrman,
Miltersen, Radhakrishnan, and Srinivasan~\cite{BMRV2002} and Alon and
Feige~\cite{AF2009}.
\noindent \paragraph{For two probes ($t=2$):} 
\begin{enumerate}
\item[(a)] $s(m,n,2) =O(m^{1-\frac{1}{4n+1}})$;  this improves on a
  result of Alon and Feige that states that for $n \leq \lg m$,
  $s(m,n,2) = O( m n \lg ((\lg m) / n) / \lg m)$.

\item[(b)] $s(m,n,2)= \Omega(m^{1-\frac{1}{\lfloor n/4
      \rfloor}})$; in particular, $s(m,n,2)=\Omega(m)$ for $n \geq \lg
  m$, that is, if $s(m,n,2) = o(m)$ (significantly better than the
  characteristic vector representation), then $n =
  o(\lg m)$.
\end{enumerate}

\noindent \paragraph{For three probes ($t=3$):} $s(m,n,3) = O(\sqrt{m
	n \lg \frac {2m}{n}}).$ This improves a result of Alon and Feige
that states that $s(m,n,2)=O(m^{\frac{2}{3}}
n^{\frac{1}{3}})$.  


\noindent\paragraph{In general:} 
\begin{enumerate}
\item[(a)] (Non-adaptive schemes) For odd $t\geq 5$, there is a
  non-adaptive scheme using $O(t m^{\frac{2}{t-1}}$
  $n^{1-\frac{2}{t-1}} \lg \frac {2m}{n})$ bits of space.  This
  improves on a result of Buhrman et al.~\cite{BMRV2002} that states
  that for odd $t\geq 5$, there exists a non-adaptive scheme that uses
  $O(tm^{\frac {4}{t+1}} n)$ bits of space.
 
\item[(b)] (Adaptive schemes) For odd $t \geq 3$ and $t \leq
  \frac{1}{10} \lg\lg m$ and for $n \leq m^{1-\epsilon}$ ($\epsilon
  > 0$), we have $s(m,n,t)=O(\exp(e^{2t})m^{\frac 2 {t+1}} n^{1 -
    \frac 2 {t+1}} \lg m)$. Previously, for $t \geq 5$, no adaptive
  scheme was known that was more efficient than the non-adaptive
  scheme due to Buhrman et al.~\cite{BMRV2002}, which uses
  $O(tm^\frac4{t+1}n)$ bits of space.

\item[(c)] If $t \geq 3$ and $4^t\leq n$, then $\displaystyle s(m,n,t)
  \geq \frac{1}{15}m^{\frac{1}{t-1}(1-\frac{4^{t}}{n})}.$ For $n \leq
  \lg m$, this improves on the lower bound
  $s(m,n,3)=\Omega(\sqrt{mn/\lg m})$ (valid only for $n \geq 16\lg
  m$ and for {\em non-adaptive} schemes) due to Alon and Feige; for
  small values of $n$, it also improves on the lower bound
  $s(m,n,t)=\Omega(tm^{\frac{1}{t}}n^{1-\frac{1}{t}})$ due to Buhrman
  et al.~\cite{BMRV2002}.
\end{enumerate}
\end{abstract}

\paragraph{Key words:} Data structures, Bit-probe model, Compression,
Bloom filters, Graphs of large girth, Expansion. 

\newpage

\section{Introduction}
\setcounter{page}{1}

We study the static set membership problem: given a subset $S$ of
$[m]$ represent it in memory so that membership queries can be
answered using a small number of bit probes (we assume random access
is allowed into the memory). Standard solutions to the set membership
problem can be examined in this light. (We use $\lg$ to mean logarithm
to the base two.)
\begin{description}
\item[The characteristic vector:] Sets can be represented as a
  bit-string of length $m$, and membership queries are answered using
  a single bit probe. However, this representation in not sensitive to
  the number of elements in the set, which can be much smaller than
  $m$.

\item[The sorted table:] Suppose the set $S$ has $n$ elements. Using
  the standard representation of elements of the universe in $\lg m$
  bits, we may store $S$ in memory as a sorted table of $n \lg m$
  bits. Queries can then be answered using binary search taking about
  $(\lg m)(\lg n)$ bit probes in the worst case.
\end{description}
The static membership problem in the bit probe model (in contrast to
the more common cell-probe model) was already studied (in the average
case) by Minsky and Papert in their 1969 book {\em
  Perceptrons}~\cite{MP1969}.  More recently, the worst-case
space-time trade-off for this problem was considered by Buhrman,
Miltersen, Radhakrishnan and Venkatesh~\cite{BMRV2002} and in several
subsequent works~\cite{AF2009, LMNR, RRR2001, RSS2010, V2012}. The
set membership problem for sets where each element is included with
probability $p$ was considered by Makhdoumi, Huang, M\'{e}dard and
Polyanskiy~\cite{MHMP}; they showed, in particular, that no savings
over the characteristic vector can be obtained in this case for
non-adaptive schemes with $t=2$.

To describe the previous results and our contributions formally, we
will use the following definitions.
\begin{definition}
An $(m,n,s)$-{\em storing scheme} is a method for representing a
subset of size at most $n$ of a universe of size $m$ as an $s$-bit
string.  Formally, an $(m,n,s)$-storing scheme is a map $\phi$ from
${{[m]} \choose {\leq n}}$ to $\{0,1\}^s$.  A deterministic
$(m,s,t)$-{\em query scheme} is a family $\{T_u\}_{u\in [m]}$ of $m$
Boolean decision trees of depth at most $t$. Each internal node in a
decision tree is marked with an index between $1$ and $s$, indicating
the address of a bit in an $s$-bit data structure. For each internal
node, there is one outgoing edge labeled ``0'' and one labeled ``1''.
The leaf nodes of every tree are marked `Yes' or `No'. Such a tree
$T_u$ induces a map from $\{0,1\}^s$ to \{Yes, No\}; this map will
also be referred to as $T_u$.  An $(m,n,s)$-storing scheme $\phi$ and
an $(m,s,t)$-query scheme $\{T_u\}_{u \in [m]}$ together form an {\em
$(m,n,s,t)$-scheme} if $\forall S\in {[m] \choose \leq n},\, \forall
u\in [m]: T_u(\phi(S)) =$Yes if and only if $u \in S$.  Let $s(m,n,t)$
be the minimum $s$ such that there is an $(m,n,s,t)$-scheme\footnote{In the literature this function is often written as $s(n,m,t)$; we list the parameters in alphabetical order.}. 

We say that an $(m,n,s,t)$-scheme is {\em systematic} if the value
returned by each of its trees $T_u$ is equal to the last bit it reads
(interpreting $0$ as No/False and $1$ as Yes/True).
\end{definition}

\begin{remark} Note that this definition describes a non-uniform model
  and ignores the important issue of uniformly representing the
  decision trees in the query algorithm. Furthermore, disregarding the
  fact that in practice memory is organized in words, it instead
  focuses attention on the fundamental trade-off between the
  compactness of information representation and the efficiency of
  information extraction in the context of the set membership
  problem. The upper bounds derived in this model are not always 
  realistic (they sometimes rely on probabilistic existence
  arguments); however, lower bounds derived here are generally
  applicable.
\end{remark}

The main focus of Buhrman {\em et al.} was the randomized version of
the above schemes; they showed that membership queries can be answered
correctly with probability $1-\epsilon$ by making just one bit probe
into a representation of size $O(\frac{n}{\epsilon^2} \lg m)$ bits.
They also showed the following lower and upper bounds for
deterministic schemes,: (i) $s(m,n,t) = \Omega(tm^{\frac{1}{t}}
n^{1-\frac{1}{t}})$ valid when $n \leq m^{1-\epsilon}$ (for $\epsilon
> 0$ and $t \ll \lg m$) and (ii) $s(m,n,t) = O(m^{\frac{4}{t+1}}n)$
for odd $t \geq 5$. However, Buhrman {\em et al.} left open the
question of whether a scheme better than the characteristic vector was
possible for $t=2,\,3,\,4$, and $n$ large.  Alon and
Feige~\cite{AF2009}, in their paper, ``On the power of two, three and
four probes,'' addressed this shortcoming. Our contributions are
closely related to theirs.

For two probes, Alon and Feige~\cite{AF2009} show the following.
\begin{theorem}
For $n < \lg m$, 
$\displaystyle s(m,n,2) = O\left( {mn \lg \ceil {\frac{\lg m}{n}}}/{\lg m}\right).$\\
Thus, $s(m,n,2)=o(m)$, whenever $n=o(\lg m)$.
\end{theorem}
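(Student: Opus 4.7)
My plan is to construct an adaptive 2-probe scheme by partitioning the universe into small groups and encoding each group compactly. Partition $[m]$ into $k = \Theta(mn/\lg m)$ groups of size $B = \lceil \lg m/n \rceil$ each. Since $|S| \leq n \leq k$, the average load per group is at most $1$, so most groups contain at most one element of $S$. The target bound $O(mn \lceil\lg\lceil \lg m/n\rceil\rceil/\lg m)$ corresponds to roughly $O(\lg B)$ bits per group, which is just enough to index one element within a group.

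The scheme should have two local regimes per group: a sparse mode, where the group contains at most one element of $S$ and is encoded by a short local address, and a dense mode, where the group's intersection with $S$ is represented explicitly by a local characteristic vector of $B$ bits. Since each dense group contributes at least two elements to $S$ and $|S| \leq n$, the number of dense groups is at most $n/2$, so the dense overhead is $O(nB) = O(\lg m)$ bits, negligible compared to the main term. The query for an element $u$ in group $B_i$ uses its first probe to read a bit that partially specifies the mode and the address of the element stored sparsely in $B_i$, and its second probe (chosen adaptively based on the first) to finalize the membership decision.

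The main obstacle lies in the sparse verification step: with only $O(\lg B)$ bits of storage per group and two probes per query, one must certify ``is $u$ the stored element of group $B_i$?'' for each of the $B$ possible $u \in B_i$. This is a scaled-down version of the 2-probe singleton problem on a universe of size $B$, and the natural resolution is to use a singleton scheme based on high-girth bipartite graphs, with its memory bits shared across many groups so that the per-group effective cost remains $O(\lg B)$. The key technical steps will be (i) showing that such a shared layout exists, typically via a probabilistic argument on random probe locations together with a union bound over all $S \in \binom{[m]}{\leq n}$, and (ii) merging the mode indicator with the first probe so that the mode is not read separately, keeping the query in exactly two adaptive probes. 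Once those pieces are combined, summing the per-group sparse cost with the dense overhead yields the claimed bound.
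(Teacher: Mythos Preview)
This theorem is due to Alon and Feige; the paper only cites it and instead proves the stronger bound $s(m,n,2)=O(m^{1-1/(4n+1)})$. Both the Alon--Feige argument (as the paper describes it) and the paper's own proof work directly with a single dense bipartite graph of girth greater than $4n$: each element $x\in[m]$ corresponds to a pair of edges meeting at a vertex $i(x)$ on one side; the first probe reads location $i(x)$, and the second probe is determined by that bit. Alon and Feige obtain the storage assignment from the expansion of small sets in a high-girth graph via Hall's theorem; the paper instead reduces the assignment problem to a 2-SAT instance and shows that high girth forces satisfiability. Neither argument partitions the universe into blocks.

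Your bucketing plan has a real gap at the step you yourself flag. If the first probe is a per-group mode bit, then in sparse mode you have only one probe left to decide ``is $u$ the single stored element of its size-$B$ group?'', and a one-probe scheme for that is necessarily a characteristic vector: $B$ bits per group, $kB=m$ bits total, no savings. You propose to avoid this by letting the first probe also carry address information and by sharing the sparse storage across groups. But once different elements of the same group make different first probes, there is no single mode bit to read, and you have not said how the dense/sparse switch is then effected within two probes. More importantly, ``share a singleton scheme across $k\approx mn/\lg m$ groups so that the amortized cost is $O(\lg B)$ per group'' is not a subproblem that is easier than the original: it is a two-probe storage problem for up to $n$ elements in a universe of size $m$ (with a little extra structure), and the lower bound $s(B,1,2)=\Omega(\sqrt{B})$ of Buhrman et al.\ already rules out handling the groups independently. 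Invoking ``high-girth bipartite graphs'' at that point is essentially invoking the Alon--Feige construction itself, at which stage the bucketing wrapper contributes nothing. What is missing is precisely the global high-girth construction and its analysis, which is the entire content of the proof.
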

\noindent They state:
\begin{quote}
There are still rather substantial gaps between the upper and lower
bounds for the minimum required space in most cases considered here;
it will be nice to get tighter estimates.  In particular, it will be
interesting to decide if there are adaptive $(m,n,s,2)$-schemes with
$s < m$, for $n > \sqrt{m}/2$, and to identify the behavior of the
largest $n=n(m)$ so that there are adaptive $(m,n,s,2)$-schemes with
$s=o(m)$.
\end{quote}

\noindent In this paper, we address this by showing the following.
(We assume $m$ is large; all asymptotic claims made below hold for
large $m$.) 
\begin{theorem}[Result 1]\label{thm:two-probe}
\begin{enumerate}
\item[(a)] 
There is a constant $C >0$, such that for all large $m$, $s(m,n,2)
\leq C \cdot m^{1-\frac{1}{4n+1}}$.
\item[(b)] Let $4 \leq n$. There is constant $D >0$, such that for all
  large $m$, $s(m,n,2) \geq D m^{1-{\myexponent}}$.
\end{enumerate}
\end{theorem}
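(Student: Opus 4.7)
My plan for both bounds is to reformulate a 2-probe systematic scheme as a pseudo-edge graph on $s$ vertices: each element $u\in[m]$ is assigned a first-probe vertex $p_u$ and two second-probe vertices $q_u^0,q_u^1$, giving two directed pseudo-edges $p_u\pseudoedge{b}q_u^b$ for $b\in\{0,1\}$, and the data structure $\phi(S)\in\{0,1\}^s$ becomes a labeling of the vertices. The correctness condition becomes: for every $S$ and every $u\in[m]$, if $\phi(S)[p_u]=b$ then $\phi(S)[q_u^b]=[u\in S]$. A standard reduction converts any general $(m,n,s,2)$-scheme into a systematic one with only a constant blowup in $s$, so it suffices to work in this pseudo-edge model throughout.

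\paragraph{Upper bound (a).}
The strategy is to construct a pseudo-edge graph $G$ on $s=O(m^{1-1/(4n+1)})$ vertices whose girth is larger than roughly $4n$, and then prove a labeling lemma saying that the constraints imposed by any $S$ with $|S|\le n$ can be satisfied. I would build $G$ probabilistically: sample each triple $(p_u,q_u^0,q_u^1)$ uniformly from the vertex set and discard any element that lies on a short cycle; a direct first-moment computation, using that the expected number of short cycles in such a random structure is controlled by $s$ and the girth parameter, shows that at the chosen value of $s$ we retain $\Omega(m)$ elements, so after renaming we can arrange $m$ surviving elements. Given such a $G$, the labeling for $S$ is built greedily: the subgraph touched by the pseudo-edges of $S$ is (essentially) a forest thanks to the girth bound, and walking from leaves toward the root we decide, at each $u\in S$, which of its two pseudo-edges to activate and set the corresponding head-label to~$1$; vertices untouched by $S$ are set to~$0$ so that every $u\notin S$ receives the answer ``no''. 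The key technical step is checking that the greedy choice at each $u\in S$ never creates a conflict, which reduces to the girth lower bound saying that no two element gadgets can interact in a cyclic fashion inside an $n$-element set.

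\paragraph{Lower bound (b).}
For the lower bound I would run the correspondence backwards. Given any systematic $(m,n,s,2)$-scheme with $s<D\,m^{1-\myexponent}$, its pseudo-edge graph on $s$ vertices carries $m$ element gadgets; a Moore-type counting estimate, applied to the multigraph whose edges are the pseudo-edges, forces a closed configuration involving at most $\floor{n/4}$ elements. From such a short configuration $C$ I plan to construct two distinct subsets $S_0,S_1\subseteq C$ with $|S_i|\le n$ whose constraints on $\phi$ coincide: flipping the labels of the vertices along the cycle toggles the apparent membership of exactly the elements lying on $C$, producing a collision that contradicts the scheme's correctness. The step I expect to be the hardest is calibrating this flip so that both resulting sets really have size at most $n$ and so that the factor $4$ in the denominator emerges naturally from the Moore-type counting; making the combinatorics of short cycles in the pseudo-edge model line up tightly with the scheme-level parameters is what will determine the final constants. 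The reduction from general schemes to systematic ones is applied at the end and only affects the constant~$D$.
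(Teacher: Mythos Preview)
Your greedy labelling for part~(a) has a real gap: you only look at the pseudo-edges of the elements $u\in S$ and then propose to set ``vertices untouched by $S$'' to~$0$. But an element $y\notin S$ may well have its second-probe vertex $q_y^{b}$ equal to some $q_u^{b'}$ with $u\in S$; if your greedy process set that location to~$1$, the query for $y$ is answered ``yes'' incorrectly. The entire difficulty of the two-probe problem is exactly managing these $S$-versus-non-$S$ collisions at second-probe locations, and a forest argument on the $n$ gadgets of $S$ alone never sees them. The paper resolves this by writing the constraints on the first-probe array $A$ as a 2-SAT instance: whenever an $S$-edge and a non-$S$-edge meet in $A_0$ (respectively $A_1$) one obtains a clause $A[i(x)]\vee A[i(y)]$ (respectively $\neg A[i(x)]\vee\neg A[i(y)]$). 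Unsatisfiability of this 2-SAT would give an implication chain from a literal to its negation, which the paper traces back to a cycle of length at most $4|S|$ in the query graph; high girth then finishes. This 2-SAT reduction is the improvement over Alon--Feige's matching argument and is what delivers the exponent $1-\frac{1}{4n+1}$.

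\paragraph{Lower bound.} For part~(b) your sketch diverges more seriously. You work in the raw first-probe~$\to$~second-probe graph and hope that a single short cycle $C$ gives two sets $S_0,S_1\subseteq C$ whose representations collide by ``flipping the labels of the vertices along the cycle''. But $\phi$ is fixed by the scheme; flipping bits does not manufacture a collision $\phi(S_0)=\phi(S_1)$, and your flip generally breaks correctness for the many non-$C$ elements whose first probe lands on a flipped vertex. The paper instead builds a different bipartite pseudo-graph $G_\Phi$ on the \emph{second-probe} locations only, whose edges arise by \emph{pairing} two universe elements that share the same first probe. A cycle of length $k$ in $G_\Phi$ then yields, via a Forcing Lemma, disjoint sets $S_0,S_1$ with $|S_1|\le k+1$ that force a designated second-probe location to a prescribed bit in \emph{every} valid assignment. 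Two edge-disjoint such cycles of length at most $n/2$ sharing a vertex then force contradictory bits there using a single set of size at most $n$. The Moore bound for bipartite graphs shows such a pair of cycles must exist once $s$ falls below the stated threshold; the $\floor{n/4}$ in the exponent comes precisely from this two-cycle structure together with the bipartite Moore bound, and does not emerge from the single-cycle picture you describe.
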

For three probes, Alon and Feige~\cite{AF2009} show that
$s(m,n,3)=O(m^{\frac{2}{3}} n^{\frac{1}{3}})$. Their query scheme is
adaptive and based on random graphs. We show the following.
\begin{theorem}[Result 2] \label{thm:three-probe-ub}
	$s(m,n,3) = O(\sqrt{mn\lg \frac {2m}{n}})$.
\end{theorem}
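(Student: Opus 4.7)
The plan is a two-level block decomposition of $[m]$, with the first probe used to filter by block-occupancy and the remaining two probes used to check within-block membership.

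First I would choose a block size $B := \lceil\sqrt{m\lg(2m/n)/n}\rceil$, so that the number of blocks is $k := \lceil m/B\rceil = \Theta(\sqrt{mn/\lg(2m/n)})$. Partition $[m]$ into blocks $B_1,\ldots,B_k$ of size at most $B$, and let $\beta(u)\in[k]$ denote the block of $u$. The first level of the data structure is a $k$-bit array $A$ with $A[i]=1$ iff $S\cap B_i\ne\varnothing$; since $|S|\le n$, at most $n$ entries of $A$ are set. Probe $1$ reads $A[\beta(u)]$, and if it is $0$ we output No. This level contributes $k=O(\sqrt{mn/\lg(2m/n)})$ bits, which is within the desired budget.

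The second level must distinguish $u\in S\cap B_i$ from $u\in B_i\setminus S$ using two more probes, for every active block. The aim is to fit this inside a shared structure of size $O(nB)=O(\sqrt{mn\lg(2m/n)})$ bits. A naive allocation of $B$ bits per block (a characteristic vector per block) costs $kB=m$ bits, which is too much; so storage must be shared among the $\le n$ active blocks. The plan is to use a shared buffer of $O(nB)$ bits, with the active blocks hashed into $n$ slots of $B$ bits each via an addressing scheme that depends only on $\beta(u)$ and the value read in Probe~$1$. Probes $2$ and $3$ then read positions within $u$'s slot that suffice, together with the first probe, to determine $u\in S$ — for example by a small gadget adapted from the 2-probe construction of Theorem~\ref{thm:two-probe}(a) applied at the block level, exploiting that the typical active block contains very few elements of $S$ (on average $nB/m=\sqrt{n\lg(2m/n)/m}$, which is $o(1)$ in the main parameter range).

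The main obstacle is the second-level design: while the identity of the active blocks depends on $S$, the decision tree of each $u$ is fixed and the two remaining probes must still resolve membership. Making this work requires an addressing/encoding scheme — realized via a hash family or a small-girth bipartite graph in the spirit of Theorem~\ref{thm:two-probe}(a) — such that for \emph{every} $S$ of size at most $n$, there is a consistent assignment of the $O(nB)$ bits producing the correct answer on every $u\in[m]$ with just two probes into the shared buffer. Constructing such an assignment (probably by a probabilistic existence argument combined with a union bound over the $\binom{m}{\le n}$ possible $S$) and verifying that the addressing can be realized in the claimed $O(\sqrt{mn\lg(2m/n)})$ bits is where I expect the bulk of the technical work to lie. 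Once this is in place, summing the two levels gives the stated bound.
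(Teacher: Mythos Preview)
Your plan has a real gap at the second level, and the first-level block decomposition does not buy you the leverage you hope for. After Probe~1 returns~$1$, the remaining two probes must decide membership for every $u$ in an active block; but the decision trees of all $m$ elements are fixed in advance and probe into the same shared $O(nB)$-bit buffer, while the set of active-block elements (up to $nB$ of them) is determined adversarially by $S$. So what remains is a two-probe problem on $[m]$ in which correctness is required only on an $S$-dependent subset of size $nB$ --- essentially the full difficulty of the theorem, merely deferred. The ``hash active blocks into $n$ slots of $B$ bits'' idea fails because the hash is fixed, so for worst-case $S$ several active blocks collide in one slot and no two-probe gadget on $B$ bits can disambiguate those $\Theta(B)$ or more colliding elements. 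Invoking the girth-based two-probe scheme of Theorem~\ref{thm:two-probe}(a) does not help either: that scheme's query graph must be built over the \emph{entire} universe, and with $m$ elements mapped into $O(\sqrt{mn\lg(2m/n)})$ locations the edge density is far too high to avoid cycles of length $\le 4n$; restricting attention to the $nB$ active-block elements is not available to you, since the trees are fixed before $S$ is known.

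The paper's proof takes a different route with no block structure. It chooses a \emph{random} $(m,s,3)$-graph $G$ (each $u$ gets one uniform neighbour in each of the seven arrays $A_\sigma$, $\sigma\in\{0,1\}^{\le 2}$) and shows that for $s=\Theta(\sqrt{mn\lg(2m/n)})$, with positive probability $G$ is \emph{admissible}: (P1) every set of size at most $n+\lceil n\lg(2m/n)\rceil$ has at least five times as many neighbours; and (P2) for every $S$ and every large $R\subseteq[m]\setminus S$, some $y\in R$ either has no leaf-level ($G_3$) collision with $S$, or has exactly one such collision together with at most one collision with $(R\cup S)\setminus\{y\}$ in the first two levels. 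Satisfiability then follows by a minimal-counterexample argument: if the minimal bad set $S\cup T$ is small, (P1) plus Hall's theorem gives each element five private nodes, which suffice to force any desired answer in a depth-$3$ tree; if $T$ is large, (P2) lets one \emph{peel} an element $y$ off $T$ and repair the assignment locally. The union-bound calculation establishing (P1) and (P2) is where the threshold $s=\Theta(\sqrt{mn\lg(2m/n)})$ actually emerges, and nothing in your plan substitutes for it.
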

This scheme is adaptive. For small values of $n$, this result comes
close to the lower bound shown below in Theorem~\ref{thm:multi-probe-lb}.
We further generalize this construction for large values of $t$.
\begin{theorem}[Result 3, non-adaptive schemes]
\label{thm:generalnonadaptive}
For odd $t\geq 5$,
  there is a non-adaptive scheme using $O(t m^{\frac{2}{t-1}}
  n^{1-\frac{2}{t-1}} \lg \frac{2m}{n})$ bits of space.
\end{theorem}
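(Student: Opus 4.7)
The plan is to prove Theorem~\ref{thm:generalnonadaptive} by giving a random non-adaptive construction and analyzing its correctness by a union bound. Write $t = 2k+1$ so the target storage is $s = O\!\left(tn(m/n)^{1/k}\lg(2m/n)\right)$, which is the form taken by the claimed bound when $t$ is odd.

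\emph{Construction and first attempt.} For each $u \in [m]$ I pick a random probe set $N(u) \subseteq [s]$ of size $t$, independently; the query $T_u$ probes these $t$ bits and outputs a fixed Boolean function $f_u$ of them. The natural first try is the conjunctive scheme $f_u = \bigwedge$ together with the storage rule $x_v = [v \in N(S)]$, where $N(S) = \bigcup_{u \in S} N(u)$. This is correct for $u \in S$ trivially, and correct for $u \notin S$ exactly when $N(u) \not\subseteq N(S)$. For a fixed pair $(S,u)$ with $u \notin S$, the probes of $u$ are independent of $N(S)$, and each lies in $N(S)$ with probability at most $|N(S)|/s \le tn/s$; hence the bad-event probability is at most $(tn/s)^t$, and union-bounding over $m\binom{m}{n}$ pairs with $\binom{m}{n} \le (em/n)^n$ gives the sufficient condition $s \gtrsim tn\, m^{1/t}(em/n)^{n/t}$. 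This matches the target only when $n$ is small relative to $t$, so the conjunctive scheme alone cannot establish the claim for general $n$.

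\emph{Refinement.} To cover general $n$, I would replace the pure conjunction by a threshold query function (``accept iff at least $\tau$ of the $t$ probes are $1$'') or, equivalently, split the $t$ probes into a small ``filter'' part and a larger ``verification'' part, treating the construction as a two-level Bloom-filter/verification scheme. Keeping the storage rule $x_v = [v \in N(S)]$, each $u \in S$ still has all $t$ probes equal to $1$ and is accepted, while the bad event for $u \notin S$ becomes ``at least $\tau$ of $u$'s probes lie in $N(S)$'', whose probability is bounded by a Chernoff-type tail $\binom{t}{\tau}(tn/s)^\tau$. Solving $m\binom{m}{n}\binom{t}{\tau}(tn/s)^\tau < 1$ with $\tau \approx k+1 = (t+1)/2$ should yield a bound of the form $s = O(t\,m^{2/(t-1)}n^{1-2/(t-1)}\lg(2m/n))$; the $\lg(2m/n)$ factor absorbs $\log\binom{m}{n} \approx n\log(em/n)$, and the exponents $2/(t-1) = 1/k$ and $1 - 2/(t-1) = (k-1)/k$ emerge from the interplay between the $\tau$-wise dependence of the probes and the $n$-dependence of $|N(S)|$.

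\emph{Main obstacle.} The chief technical difficulty is identifying the query function $f_u$ (or the block structure of the probes) that makes the union-bound exponents land exactly on $2/(t-1)$, rather than on the $1/t$ of the conjunctive Buhrman et al.\ scheme. Neither the naive conjunction nor a simple threshold function achieves this on its own --- both produce an unwelcome factor $(em/n)^{n/\tau}$ that dominates when $n$ is large. Guided by the abstract's keywords, I expect the correct construction to use a \emph{graph of large girth}: choose the probe graph $G$ with no short cycles, so that for every small $S$ the bipartite subgraph on $S \cup N(S)$ is a forest; this both tightens the bound on $|N(S)|$ (no two elements of $S$ share more than one probe) and rules out the coincidences that would otherwise force the storage rule into conflict. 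Combining high girth with a threshold or two-level query function, and carefully tracking the $\lg(2m/n)$ overhead from $\binom{m}{n}$, is where I expect the real technical work to lie; Theorem~\ref{thm:three-probe-ub}, which handles $t=3$ with a matching $\sqrt{mn\lg(2m/n)}$ bound by an adaptive version of the same idea, should supply the base case and the template for the non-adaptive generalization.
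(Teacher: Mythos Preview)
Your intuition that the query function should be a threshold at $\tau=(t+1)/2$ (i.e., majority) is correct, and matches the paper. The gap is in the storage rule and the shape of the union bound.

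Keeping the rule $x_v=[v\in N(S)]$ forces you to demand \emph{zero} false positives, and then the only way to control the $\binom{m}{n}$ factor is via the event ``at least $\tau$ of $u$'s probes hit $N(S)$'' summed over all $m\binom{m}{n}$ pairs. As you yourself note, that leaves an $(em/n)^{n/\tau}$ factor that cannot be absorbed. Your proposed escape via large girth does not help here: girth controls short cycles, not the event that a non-member $y$ has many of its $t$ probes landing in $N(S)$, and in fact the paper uses girth only for the two-probe theorem, not for this one.

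The missing idea is to \emph{abandon} the storage rule $x_v=[v\in N(S)]$ and instead do two things. First, for each fixed $S$ use Chernoff (not a union bound over $y$) to show that the number of ``troublemakers'' $T_S=\{y\notin S:|\Gamma(y)\cap\Gamma(S)|\ge(t+1)/2\}$ is at most $O(n\lg(2m/n))$ with probability $1-2^{-\Omega(n\lg(2m/n))}$; \emph{this} is what beats the $\binom{m}{n}$ union bound over $S$, and it is exactly where the $\lg(2m/n)$ factor enters. Second, require the random graph to be a good expander on sets of size up to $n+|T_S|$: every such set has at least $\frac{t+1}{2}$ times as many neighbours. Then Hall's theorem gives each $u\in S\cup T_S$ a private block of $\frac{t+1}{2}$ locations; set those to $1$ for $u\in S$ and to $0$ for $u\in T_S$, and set the remaining locations to $0$. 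Majority is now correct for everyone: elements of $S$ see a $1$-majority, elements of $T_S$ see a $0$-majority by construction, and elements outside $S\cup T_S$ have at most $(t-1)/2$ probes in $\Gamma(S)$ by the definition of $T_S$. The expansion requirement is what fixes $s$ at $\Theta(m^{2/(t-1)}n^{1-2/(t-1)}\lg(2m/n))$.
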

This improves on a
result of Buhrman et al.~\cite{BMRV2002} that states that for odd
$t\geq 5$ and $n \leq m^{1-\epsilon}$, there exists a non-adaptive scheme
that uses $O(tm^{\frac {4}{t+1}} n)$ bits of space. These schemes, as
well as the non-adaptive scheme for $t=4$ due to Alon and
Feige~\cite{AF2009}, have implications for the problem studied by
Makhdoumi et al.~\cite{MHMP}; unlike in the case of $t=2$,
siginficant savings are possible if $t\geq 4$, even with non-adaptive
schemes\footnote{We are grateful to Tom Courtade and Ashwin Pananjady
  for this observation.}.
\begin{theorem}[Result 4, adaptive schemes] 
\label{thm:generaladaptive}
For odd $t \geq 3$ and $t \leq
  \frac{1}{10} \lg\lg m$ and for $n \leq m^{1-\epsilon}$ ($\epsilon
  > 0$), we have $s(m,n,t)=O(\exp(e^{2t})m^{\frac 2 {t+1}} n^{1 -
    \frac 2 {t+1}} \lg m)$.
%
\end{theorem}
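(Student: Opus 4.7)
The plan is to prove the bound by induction on the odd integer $t \geq 3$.

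\textbf{Base case.} For $t = 3$, Theorem~\ref{thm:three-probe-ub} yields $s(m,n,3) = O(\sqrt{mn\lg(2m/n)}) = O(m^{1/2} n^{1/2} \lg m)$, and the constant can be absorbed into $\exp(e^{6})$.

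\textbf{Inductive step.} Assuming the bound for $t-2$, the goal is to build an $(m,n,s_t,t)$-scheme by invoking the $(t-2)$-probe scheme on a smaller universe $[m']$ and using the two extra adaptive probes as a routing/disambiguation layer. The natural choice is $m' = m^{(t-1)/(t+1)}$, since then $(m')^{2/(t-1)} = m^{2/(t+1)}$ already matches the target $m$-exponent. I would draw a hash $h : [m] \to [m']$ from a suitably spreading family, store the image $T = h(S)$ in a recursive $(t-2)$-probe structure $D_1$, and keep an auxiliary 2-probe structure $D_2$ for collision resolution inside the buckets $h^{-1}(v)$, $v \in T$. The query for $u$ uses the first $t-2$ probes on $D_1$ to test whether $h(u) \in T$; if not, output No, otherwise the final two probes on $D_2$ decide whether $u$ truly lies in $S$ or is a colliding non-member.

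\textbf{Where the savings come from.} The recursive call contributes $s_{t-2}(m', n) = \exp(e^{2(t-2)}) m^{2/(t+1)} n^{(t-3)/(t-1)} \lg m$, whose $n$-exponent $(t-3)/(t-1)$ is smaller than the target $(t-1)/(t+1)$; the auxiliary $D_2$ compensates by providing the extra $n$-factor while preserving the $m$-exponent. Comparing target bounds across consecutive odd $t$, the net savings from going from $t-2$ to $t$ probes is a factor of $(n/m)^{4/(t^2-1)}$, which is exactly what one bucket-by-bucket 2-probe construction can plausibly deliver when $m' = m^{(t-1)/(t+1)}$ balances bucket size $m^{2/(t+1)}$ against expected bucket load $n/m'$.

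\textbf{Main obstacle.} Constructing $D_2$ within $O(m^{2/(t+1)} n^{(t-1)/(t+1)} \lg m)$ bits using only two probes is the hardest step. Average bucket sizes are $m^{2/(t+1)}$ with $n/m'$ members of $S$, so one must either invoke Theorem~\ref{thm:two-probe}(a) bucket-by-bucket with careful convexity-based aggregation across buckets, or, more likely, carry out a fresh probabilistic construction that exploits the concentrated bucket-load structure guaranteed by the random hash $h$. Establishing existence requires a union bound over all $\binom{m}{n}$ candidate sets $S$, which introduces a factor of $\exp(e^{O(t)})$ at each recursion level; iterating the construction over the $\Theta(t)$ levels then compounds into the stated $\exp(e^{2t})$ prefactor. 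The hypothesis $t \leq \tfrac{1}{10} \lg\lg m$ ensures $\exp(e^{2t}) = m^{o(1)}$ so that the polynomial factor $m^{2/(t+1)} n^{1 - 2/(t+1)} \lg m$ remains dominant.
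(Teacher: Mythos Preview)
Your inductive hashing approach is quite different from the paper's, and it has two genuine gaps that prevent it from going through as stated.

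\textbf{What the paper actually does.} There is no induction. The $t$ probes are split once as $t = t_1 + t_2$ with $t_1 = \frac{t-3}{2}$ and $t_2 = \frac{t+3}{2}$. The first $t_1$ probes are a \emph{non-adaptive} filter $G_1$ (answer ``Yes'' iff all $t_1$ bits are $1$), and the remaining $t_2$ probes are an adaptive scheme $G_2$. Setting the $G_1$-bits to be $1$ exactly on $\Gamma_{G_1}(S)$ leaves a set $\surv(S)$ of false positives of size at most $10m(n/s)^{t_1}$; the adaptive $G_2$ then only needs enough expansion (via Hall's theorem) on $S$ together with the small set $\surv^+(S)$ of survivors that collide with $S$ at the leaves of $G_2$. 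Both graphs are chosen at random and the whole thing is a single probabilistic existence argument; the $\exp(e^{2t})$ arises from bounding the constants in that one union bound, not from compounding over recursion levels.

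\textbf{Gap 1: the recursion hypothesis does not propagate.} With $m' = m^{(t-1)/(t+1)}$ and the same $n$, the condition $n \le (m')^{1-\epsilon'}$ needed to invoke the inductive hypothesis at level $t-2$ forces $\epsilon' \le \frac{\epsilon(t+1)-2}{t-1}$, which is positive only when $\epsilon > \frac{2}{t+1}$. So for $n$ close to $m$ (any $\epsilon \le \frac{2}{t+1}$) you cannot even take the first inductive step, and after $\Theta(t)$ steps the base universe has shrunk to $m^{4/(t+1)}$ while $n$ has not shrunk at all.

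\textbf{Gap 2: two probes are too weak for your $D_2$.} After $D_1$ says ``Yes'', the remaining universe $h^{-1}(h(S))$ typically has size about $n \cdot m^{2/(t+1)}$, and $D_2$ must distinguish $S$ from it using two probes. Theorem~\ref{thm:two-probe}(b) of this very paper says that any two-probe scheme on a universe of size $M$ storing sets of size $n$ needs $\Omega(M^{1-4/n})$ bits, which is $\Omega(M)$ once $n \gtrsim \lg M$. Hence for moderate $n$ your $D_2$ would require $\Omega(n\, m^{2/(t+1)})$ bits, overshooting the target $m^{2/(t+1)} n^{1-2/(t+1)}$ by a factor of $n^{2/(t+1)}$. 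The bucket-by-bucket variant fares no better: allocating a two-probe structure to every bucket (necessary since the query trees are fixed in advance) sums to essentially $m' \cdot (m/m')^{1-1/(4n+1)} \approx m$ bits. The paper sidesteps exactly this obstruction by giving the cleanup phase $t_2 = \frac{t+3}{2} \ge 3$ probes rather than two, so that Hall-type expansion (Property~(P2)) can do the work that two probes provably cannot.
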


We observe that the two-probe lower bound shown above can be used
to derive slightly better lower bounds for $t \geq 3$. 
\begin{theorem}[Result 5] \label{thm:multi-probe-lb}
If $4^t\leq n$, then $\displaystyle s(m,n,t)
\geq \frac{1}{15}m^{\frac{1}{t-1}(1-\frac{4^{t}}{n})}.$
\end{theorem}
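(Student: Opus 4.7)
The plan is to prove Theorem~\ref{thm:multi-probe-lb} by induction on $t$, bootstrapping from the two-probe lower bound (Theorem~\ref{thm:two-probe}(b)). The base case $t = 2$ is immediate: since $n \geq 4^2 = 16$, we have $\lfloor n/4\rfloor \geq n/8$, so Theorem~\ref{thm:two-probe}(b) gives $s(m,n,2) \geq D\,m^{1-8/n} \geq \frac{1}{15}\,m^{1-16/n}$ for large $m$, matching $\frac{1}{t-1}(1 - 4^t/n)$ at $t=2$.

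For $t \geq 3$, I would prove the following reduction lemma: any $(m,n,s,t)$-scheme with $n \geq 4^t$ and $s \geq 2$ yields an $(m', n', s', t-1)$-scheme with $m' \geq m/(2s)$, $n' \geq \lfloor n/4\rfloor$, and $s' \leq s-1$. The reduction has three steps. First, by pigeonhole on the root labels of the $m$ decision trees, some bit $b \in [s]$ is the first probe for a set of queries $U$ with $|U| \geq m/s$. Second, for sets $S \in \binom{U}{\leq \lfloor n/4\rfloor}$, the bit $\phi(S)_b$ takes a fixed value $v_0 \in \{0,1\}$ on at least half of these sets; call this subfamily $\mathcal{F}$. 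Third, a combinatorial argument extracts a subuniverse $U' \subseteq U$ with $|U'| \geq |U|/2$ such that $\binom{U'}{\leq \lfloor n/4\rfloor} \subseteq \mathcal{F}$. Restricting $\phi$ to $\binom{U'}{\leq \lfloor n/4\rfloor}$ and omitting the fixed bit $b$ gives an injection into $\{0,1\}^{s-1}$; for $u \in U'$ the depth-$(t-1)$ residual subtree $T_u^{v_0}$ answers membership correctly, producing the desired $(|U'|, \lfloor n/4\rfloor, s-1, t-1)$-scheme.

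Iterating the reduction $t - 2$ times produces an $(m_{t-2}, n_{t-2}, s_{t-2}, 2)$-scheme with $m_{t-2} \geq m/(2s)^{t-2}$, $n_{t-2} \geq n/4^{t-2} \geq 16$ (using $n \geq 4^t$), and $s_{t-2} \leq s$. Applying Theorem~\ref{thm:two-probe}(b) to this final scheme,
\[
  s \;\geq\; s_{t-2} \;\geq\; D\,m_{t-2}^{1 - 8/n_{t-2}} \;\geq\; D\,\bigl(m/(2s)^{t-2}\bigr)^{1 - 4^t/n},
\]
where I used $8/n_{t-2} = 8 \cdot 4^{t-2}/n \leq 4^t/n$. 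Writing $\gamma = 1 - 4^t/n$ and solving for $s$ gives $s \geq [D/2^{(t-2)\gamma}]^{1/(1+(t-2)\gamma)}\, m^{\gamma/(1+(t-2)\gamma)} \geq \tfrac{1}{15}\,m^{\gamma/(t-1)}$, where $1 + (t-2)\gamma \leq t-1$ (since $\gamma \leq 1$) controls the exponent and the multiplicative constants are absorbed into $\tfrac{1}{15}$ after a short case analysis on small $t$.

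The main obstacle is the third step of the reduction---producing $U' \subseteq U$ with $\binom{U'}{\leq \lfloor n/4\rfloor} \subseteq \mathcal{F}$. The naive claim that \emph{any} subfamily of density $\geq 1/2$ in $\binom{U}{\leq \lfloor n/4\rfloor}$ contains a full subset-lattice over some large $U'$ is false in general (consider $\mathcal{F}$ the family of even-size sets), so one must either exploit the additional structure on $\mathcal{F} = \{S : \phi(S)_b = v_0\}$ inherited from the bit-probe scheme, or strengthen Theorem~\ref{thm:two-probe}(b) to apply directly to partial schemes of density $\geq 1/2$. The factor-of-four shrinkage in $n$ at each step is calibrated precisely so that this extraction is feasible, and gives rise to the $4^t$ appearing in the final exponent.
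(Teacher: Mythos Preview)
Your reduction step contains a genuine gap, and you have correctly identified exactly where it lies: the extraction of $U'$ with $\binom{U'}{\leq \lfloor n/4\rfloor}\subseteq\mathcal{F}$ cannot be carried out in general, and the storing map $\phi$ gives no useful structure on $\mathcal{F}=\{S:\phi(S)_b=v_0\}$ beyond what an arbitrary function would. The even-size-sets example you mention is already a valid $\phi$ (pad with one coordinate recording the parity of $|S|$), so there is no rescue coming from ``structure inherited from the scheme.''

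The paper sidesteps this obstacle by strengthening the inductive hypothesis so that it no longer requires a \emph{valid} $(t-1)$-probe scheme at all. Instead of proving ``small space $\Rightarrow$ no scheme,'' it proves: for any family of depth-$t$ decision trees over $s$ memory cells (with $s$ below the claimed bound), there exist disjoint sets $S,T\subseteq[m]$, each of size at most $n$, such that \emph{no} assignment to the memory simultaneously returns `Yes' on all of $S$ and `No' on all of $T$. This statement is purely about the query trees and makes no reference to a storing map. The inductive step then fixes the popular first cell to \emph{both} values: setting it to $0$ gives a family of depth-$(t-1)$ trees on $U'$, and induction supplies a bad pair $(S_0,T_0)$ of size at most $n/2$ each; setting it to $1$ on $U'\setminus(S_0\cup T_0)$ supplies another bad pair $(S_1,T_1)$. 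Taking $S=S_0\cup S_1$ and $T=T_0\cup T_1$ works against every assignment, since whatever value the first cell holds, one of the two pairs forces an error. No extraction of a clean subuniverse is needed, and no appeal to the storing map $\phi$ is made until the very end, when the existence of a valid scheme contradicts the existence of the bad pair $(S,T)$.
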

In particular,
for $t=3$ and $n\approx \lg m$, this gives an $\Omega(\sqrt{m})$ bound,
whereas the previous best bound~\cite{BMRV2002} was of the form
$\Omega(tn^{\frac{2}{3}} m^{\frac{1}{3}})$.

\paragraph{What is new, what is old:}
As stated before, this work is closely related to the paper of of Alon
and Feige~\cite{AF2009}. For two probes, they explicitly modeled
their problem using graphs, and translated the high girth of the
graphs to their expansion. This allowed them to use Hall's matching
theorem to avoid conflict while allocating memory locations to
elements of the universe. We borrow the idea of using graphs of
high-girth but we do not reduce the allocation to a matching
theorem. Instead, we observe that the constraints in this case can be
written down as a 2-SAT expression. Furthermore, if the graph has high
girth then this 2-SAT expression must be satisfiable and we will be
able to represent our set successfully. Working with 2-SAT instead of
the matching problem allows us to show a stronger upper bound. For
the lower bound we turn the argument on its head: we show roughly that
any valid two-probe scheme must conceal a certain dense graph that
avoids small cycles. Standard graph theoretic results (the Moore
bound) that relate density and girth then deliver us the lower bound.
We believe this approach via 2-SAT offers a better understanding of
the connection between two-probe schemes and graphs of high girth.

Our three-probe scheme (Theorem~\ref{thm:three-probe-ub}) is based on
the following idea. We must ensure that the data structure returns the
answer `Yes' for all query elements in $S$ and `No' for all elements
not in $R$ (in the end we would want $R=[m]\setminus S$). If $R$ is
small, then this can be arranged using Hall's theorem, by slightly
extending the argument used by Alon and Feige~\cite{AF2009} for their
three-probe scheme. But we still need take care of large $R$. We
notice that the last two probes of a three-probe scheme induce
two-probe schemes (precisely how this comes about is not important
here). We will show that whenever $R$ is large, there is always an
element in it that cannot appear in a short cycle in these two-probe
schemes. That is, we may peel this element away, work on the rest, and
then make appropriate adjustments to accommodate this element.  A form
of this argument has been used in the randomized schemes of Buhrman et
al.~\cite{BMRV2002}; it appears in in the literature in other
contexts, such as Invertible Bloom Lookup Tables~\cite{GM} and graph
based LDPC codes~\cite{LBSS}. Our scheme is not explicit, for it
relies on random graphs that are suitable for the peeling and
matching arguments we employ.

We generalize the above arguments to more than four probes by
considering appropriate random query schemes, and identifying
properties of the resulting random graph that allow us to find the
necessary assignment to correctly represent each possible set.


\subsection{Other related work}

Some recent work on the bit probe complexity of the set membership
problem has focused on sets of small size. The simplest case for which
tight bounds are not known is $n=2$ and $t=2$: an explicit scheme
showing $s(m,2,2)= O(m^{2/3})$ was obtained by Radhakrishnan, Raman
and Rao~\cite{RRR2001}. Radhakrishnan, Shah and
Shannigrahi~\cite{RSS2010} showed that $s(m,2,2)=
\Omega(m^{4/7})$. They also considered the complexity $s(m,n,t)$ for
$n$ small as $t$ becomes large. These latter results were
significantly improved by Lewenstein, Munro, Nicholson and
Raman~\cite{LMNR}, who, in particular, gave a interesting explicit adaptive
schemes showing that for $t\geq 3$ we have
\[ s(m,2,t) \leq (2^t-1)m^{1/(t-2^{2-t})}.\]
Thus, the exponent of $m$ in their bound for $n=2$ is at most
$(1+\frac{4}{t2^t})\frac{1}{t}$; in contrast, the lower bound of
Theorem~\ref{thm:multi-probe-lb} shows that the exponent is at least
$\frac{1}{t-1} \geq (1+\frac{1}{t})\frac{1}{t}$ when the set size is
much bigger than $4^t\lg m$. 
 Furthermore, for $n\geq 2$, they
obtain explicit schemes showing $s(m,n,t) = O(2^t
m^{1/(t-\min{2\lfloor \lg n\rfloor, n-3/2})})$.

%

\section{Two-probe upper bound: Proof of
  Theorem~\ref{thm:two-probe}~(a)}
\label{sec:two-probe-ub}

We assume that $n \leq \frac{1}{40}\lg m$, for otherwise, the claim
follows from the trivial bound $s(m,n,2)\leq m$ (taking $C$ large
enough).

Our upper bound is based on dense graphs of high girth. The connection
between graphs of high girth and two-probe schemes was first noticed
by Alon and Feige. They used graphs as templates for their query
schemes, and reduced the existence of a corresponding storing scheme to
the existence of matchings. Exploiting the expansion properties of
small sets in graphs of large girth, they then showed that the
necessary matchings do exist. Our query scheme is essentially the same
as theirs. However, we sharpen their analysis and observe that the
storing problem reduces to a 2-SAT instance. The underlying graph's
high girth this time implies that the 2-SAT instance has the necessary
satisfying assignment.

\begin{definition}[Query graph]
An $(m,s)$-query graph is a graph $G$ with three sets of vertices $A$,
$A_0$ and $A_1$, each with $s$ vertices. Each vertex $v \in A$ has
even degree. With each element $x \in [m]$ we associate a triple
$(i(x),i_0(x),i_1(x)) \in A \times A_0 \times A_1 $ such that
$\{i(x),i_0(x)\}, \{i(x),i_1(x)\} \in E(G)$. We label both these edges with
$x$, and require that no edge receive more than one label.

An $(m,s)$-query graph immediately gives rise to a systematic query
scheme. The scheme uses three arrays $A$, $A_0$
and $A_1$ each containing $s$ bits. The query tree $T_x$ processes the
query ``Is $x$ in $S$?'' as follows: if $A[i(x)]$ then $A_1[i_1(x)]$
else $A_0[i_0(x)]$. We use ${\cal T}_{G}$ to refer to this query scheme.
We say that the query scheme ${\cal T}_{G}$ is satisfiable for a set $S
\subseteq [m]$, if there is an assignment to the arrays $A$, $A_0$ and
$A_1$ such that all queries of the form ``Is $x$ in $S$?'' are
answered correctly by ${\cal T}_{G}$.
\end{definition}

\begin{proposition}
If there is a $(m,s)$-query graph such that the query scheme ${\cal
T}_{G}$ is satisfiable for all sets $S \subseteq [m]$ of size at most
$n$, then $s(m,n,2) \leq 3s$.
\end{proposition}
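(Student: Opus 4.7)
The plan is to unpack the definition of an $(m,n,s,2)$-scheme and show that the hypothesis directly provides both components: a storing scheme on $3s$ bits and a depth-$2$ query scheme, which together correctly answer all membership queries.

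First I would define the storing scheme $\phi \colon \binom{[m]}{\leq n} \to \{0,1\}^{3s}$ as follows. Given $S \subseteq [m]$ with $|S| \leq n$, invoke the hypothesis to obtain an assignment to the three arrays $A$, $A_0$, $A_1$ (each of length $s$) under which $\mathcal{T}_G$ answers every query ``Is $x$ in $S$?'' correctly. Let $\phi(S)$ be the concatenation of these three arrays; this is a string of $3s$ bits. (If more than one satisfying assignment exists, fix an arbitrary choice, e.g.\ the lexicographically smallest, so that $\phi$ is well-defined.)

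Next I would identify the query scheme. For each $x \in [m]$, the tree $T_x$ of $\mathcal{T}_G$ first probes the bit at address $i(x)$ in the $A$-block of $\phi(S)$; if that bit is $1$ it then probes address $i_1(x)$ in the $A_1$-block and returns its value, and if it is $0$ it probes address $i_0(x)$ in the $A_0$-block and returns its value. This is a Boolean decision tree of depth $2$ on the $3s$-bit data structure, so the family $\{T_x\}_{x \in [m]}$ is an $(m, 3s, 2)$-query scheme (indeed it is systematic).

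Finally, correctness is immediate from the satisfiability hypothesis: for every $S$ with $|S|\leq n$ and every $x\in[m]$, $T_x(\phi(S))=\textrm{Yes}$ iff $x\in S$. Hence $(\phi,\{T_x\})$ is an $(m,n,3s,2)$-scheme, and by the definition of $s(m,n,2)$ we conclude $s(m,n,2)\leq 3s$. There is no real obstacle here; the proposition is essentially bookkeeping that records what the query-graph formalism buys us, and its role is to let subsequent sections focus entirely on exhibiting $(m,s)$-query graphs of small $s$ whose induced schemes are satisfiable for all $n$-sets.
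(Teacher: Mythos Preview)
Your proposal is correct and matches the paper's approach: the paper in fact does not write out a proof of this proposition at all, treating it as immediate from the definitions, and your argument simply spells out those straightforward details. The only content is that the three $s$-bit arrays together form a $3s$-bit data structure and that $\mathcal{T}_G$ is by construction a depth-$2$ query scheme, exactly as you say.
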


Our claim will thus follow immediately if we establish the following
two lemmas.
\begin{lemma} \label{lm:satisfiablility}
Let $G$ be an $(m,s)$-query graph and $S \subseteq [m]$. If 
$\girth(G)> 4|S|$, then ${\cal T}_{G}$ is satisfiable for $S$.
\end{lemma}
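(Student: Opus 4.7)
I would first translate the satisfiability of ${\cal T}_G$ for $S$ into a $2$-SAT instance $\Phi(S)$ on the Boolean variables $\{a_v : v \in A \cup A_0 \cup A_1\}$. For each $u \in [m]$, the decision tree $T_u$ branches at the first probe $A[i(u)]$, so I encode its correct behaviour by two clauses: for $u \in S$, $a_{i(u)} \vee a_{i_0(u)}$ and $\neg a_{i(u)} \vee a_{i_1(u)}$; for $u \notin S$, $a_{i(u)} \vee \neg a_{i_0(u)}$ and $\neg a_{i(u)} \vee \neg a_{i_1(u)}$. A satisfying assignment to $\Phi(S)$ specifies the contents of the three arrays directly, so it suffices to prove that the hypothesis $\girth(G) > 4|S|$ forces $\Phi(S)$ to be satisfiable.

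Arguing by contradiction, suppose $\Phi(S)$ is unsatisfiable. By the Aspvall--Plass--Tarjan criterion, some variable $a_v$ has both $a_v$ and $\neg a_v$ lying in a common strongly connected component of the implication digraph $D$. Pick shortest directed paths $P_1 : a_v \to \neg a_v$ and $P_2 : \neg a_v \to a_v$ of lengths $L_1, L_2$ in $D$, and let $\bar W$ be the closed walk in $G$ obtained by projecting the concatenation $P_1 \cdot P_2$; it is based at $v$ and has length $L = L_1 + L_2$. The first main step is to verify $\bar W$ is non-backtracking in $G$: using the same edge of $G$ in two consecutive implication-steps would require two contradictory literals at the shared endpoint, as a short case analysis over the four edge-types ($A_0$-$S$, $A_0$-$\bar S$, $A_1$-$S$, $A_1$-$\bar S$) confirms. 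A non-backtracking closed walk in a graph of girth $g$ has length at least $g$, because a walk of length less than $g$ would lie inside the forest formed by its edge set, and a forest admits no non-trivial non-backtracking closed walk. Applying this to $P_1$ and $P_2$ individually yields $L \geq 2 \girth(G) > 8|S|$.

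For the matching upper bound I would track signs of literals along $\bar W$. Because $G$ is bipartite with parts $A$ and $A_0 \cup A_1$, $\bar W$ alternates between the two parts, and inspecting the allowed implications from each literal-sign combination forces both that the sign at successive $A$-vertex visits alternates $+,-,+,-,\ldots$ and that every two-edge segment linking consecutive $A$-visits consists of exactly one $S$-labelled edge and one $\bar S$-labelled edge. Hence $\bar W$ contains exactly $L/2$ traversals of $S$-labelled edges of $G$. The final ingredient is to cap how often any single $S$-edge can be re-used: exploiting the distinct-literals property of the shortest paths $P_1, P_2$ together with the observation that the shortest bad walk visits $\neg a_v$ only once (an extra visit would leave a closed subwalk that could be excised without destroying the ``bad'' property, contradicting shortest-ness), one shows that each of the $2|S|$ $S$-labelled edges of $G$ is traversed at most twice in $\bar W$, giving $L/2 \leq 4|S|$. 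Together with $L > 8|S|$ this is a contradiction, so $\Phi(S)$ must be satisfiable. The delicate point in the plan is the very last estimate: the naive ``distinct literals per shortest path'' accounting alone bounds each implication by two uses in $\bar W$ and so each $S$-edge by four traversals, yielding only $L \leq 16|S|$; combining the sign-alternation discipline with the uniqueness of the $\neg a_v$-visit to save the missing factor of two is the technical heart of the argument.
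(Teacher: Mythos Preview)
Your overall strategy—encode satisfiability as 2-SAT, invoke Aspvall--Plass--Tarjan, and translate a bad implication chain into a short closed walk in $G$—is exactly the paper's. But the paper makes one simplification you skip, and that simplification is precisely what lets the counting go through cleanly.

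The paper does \emph{not} form a 2-SAT on all $3s$ variables. It first observes that once the $A$-bits are fixed, the constraints on $A_0,A_1$ can always be satisfied, so it suffices to study the induced 2-SAT $\psi_S$ on the $A$-variables alone: one clause $A[i(x)]\vee A[i(y)]$ for every $x\in S$, $y\notin S$ whose edges meet in $A_0$, and one clause $\neg A[i(x)]\vee\neg A[i(y)]$ when they meet in $A_1$. A shortest bad chain in this smaller instance has the form
\[
A[i_0]\to\neg A[i_1]\to A[i_2]\to\cdots\to\neg A[i_\ell]
\]
with $i_\ell=i_0$ and the remaining $i_j$'s \emph{pairwise distinct} (distinct indices, not merely distinct literals). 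Since each clause involves some $x\in S$, one endpoint of every step lies at an index of the form $i(x)$; distinctness of the $i_j$ then forces each $x\in S$ to account for at most two steps, so $\ell\le 2|S|$. Each step is a length-$2$ path in $G$, giving a cycle of length $\le 4|S|$—done, with only one direction of the implication chain needed.

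Your route keeps all variables and works with $\bar W=P_1\cdot P_2$. The bound $L_1,L_2\ge\girth(G)$ is fine. The gap is the upper bound: you acknowledge that naive counting gives only $L\le 16|S|$ and assert that ``sign-alternation together with uniqueness of the $\neg a_v$-visit'' recovers the missing factor of two, but you do not carry this out, and minimizing $L_1+L_2$ over $v$ does not obviously yield the property you need (if $P_1$ contains $a_w$ and $\neg a_w$, the induced shorter path $a_w\to\neg a_w$ need not pair with a short $\neg a_w\to a_w$). A fix exists—minimize $L_1$ alone over all $v$ with $a_v,\neg a_v$ in a common SCC; then $P_1$ cannot contain both signs of any variable, every $S$-edge is traversed at most once in $P_1$, and $L_1\le 4|S|<\girth(G)\le L_1$—but this is not what you wrote, and it makes $P_2$ and the $8|S|$ target superfluous. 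The paper's projection to $A$-variables is the cleaner way to avoid the whole issue.
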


\begin{lemma} \label{lm:highgirthquerygraph}
There is an $(m, O(m^{1+\frac{1}{4n+1}}))$-query graph with girth more
than $4n$.
\end{lemma}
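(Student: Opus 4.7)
The plan is a probabilistic construction followed by a cleanup deletion. Fix a sufficiently large constant $C = C(n)$, set $s = \lceil C\, m^{1 - 1/(4n+1)}\rceil$, and take vertex sets $A, A_0, A_1$ of size $s$ each. Sample $M = 2m$ independent, uniformly random triples $(i(x), i_0(x), i_1(x)) \in A \times A_0 \times A_1$; each triple contributes the two cherry edges $\{i(x), i_0(x)\}$ and $\{i(x), i_1(x)\}$ to the multigraph $G$. Call a label $x$ \emph{bad} if either (i) some other label $y$ coincides with $x$ on one of the two edges (so $(i(x), i_0(x)) = (i(y), i_0(y))$ or $(i(x), i_1(x)) = (i(y), i_1(y))$), or (ii) $x$ participates in a cycle of length $\le 4n$ in $G$. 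I aim to show $\mathbb{E}[\#\text{bad labels}] < m$, and then delete all bad labels to obtain an edge-disjoint cherry packing with girth more than $4n$ using at least $m$ labels.

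For (i), each ordered pair of distinct labels coincides on an edge with probability at most $2/s^2$, contributing $\le 2 M^2/s^2 = O(m^{2/(4n+1)}/C^2)$ bad labels in expectation, which is negligible compared to $m$. For (ii), I count ordered cycles of length $2k$ (for $2 \le k \le 2n$): there are at most $s^k(2s)^k$ alternating vertex sequences. For a fixed such sequence, each of the $2k$ required edges can come from a distinct cherry (a \emph{split} assignment, probability $1/s^2$) or two consecutive edges at an $A$-vertex can come from the same cherry (a \emph{merged} assignment, probability $1/s^3$). Summing over the $\binom{k}{j}$ choices of $j$ merges and the $\le M^{2k-j}$ choices of cherry labels, and using $s \ll M$, the probability that a given cycle appears is bounded by $O((M^2/s^4)^k)$. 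Hence $\mathbb{E}[\#\,\text{cycles of length } 2k] = O((M^2/s^2)^k / k)$; each such cycle contributes at most $2k$ bad labels, so the cycle contribution to the expected number of bad labels is $O\bigl(\sum_{k=2}^{2n}(M^2/s^2)^k\bigr)$.

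Substituting $M = 2m$ and $s = C\, m^{1 - 1/(4n+1)}$, we get $M^2/s^2 = \Theta(m^{2/(4n+1)}/C^2)$, so the dominant $k = 2n$ term evaluates to $O((c/C^2)^{2n} \cdot m^{4n/(4n+1)}) = O((c/C^2)^{2n}\cdot m)$ for an absolute constant $c$. Choosing $C$ large enough as a function of $n$ makes this at most $m/5$; combined with the negligible collision contribution, $\mathbb{E}[\#\text{bad labels}] < m$. By Markov's inequality there is a realization with at most $m$ bad labels; removing them leaves at least $m$ good cherries, which form an $(m, 3s)$-query graph, and the even-degree condition at each $A$-vertex is automatic since each surviving cherry contributes exactly two edges there. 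Since $3s = O(m^{1 - 1/(4n+1)})$, the lemma follows.

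The main obstacle is the cycle count: one has to correctly aggregate the contributions of split and merged cherry assignments at each $A$-vertex of the cycle, and verify that the slowly growing factor $m^{2/(4n+1)}$ raised to the $2n$-th power can be absorbed into the constant $C$, which must be allowed to depend on $n$. The collision estimate and the Markov extraction are routine once the cycle expectation is in hand.
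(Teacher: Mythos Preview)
Your proposal is correct but takes a different route from the paper's proof. The paper first constructs an auxiliary bipartite graph $G=(V_1,V_2,E)$ of girth greater than $4n$ using the standard Erd\H{o}s random-graph-plus-deletion argument (fix $d\approx s^{1/(4n)}$, give each $V_1$-vertex $d$ random neighbors, and delete a pair of edges from each short cycle), and then converts $G$ into a query graph by letting $A=V_1$, taking $A_0$ and $A_1$ to be two copies of $V_2$, and pairing up the edges at each $V_1$-vertex into cherries. Because the cherry pairing is done after the high-girth graph is in hand, the cycle-counting is the textbook one and yields the universal constant $s\le\lceil 4m^{1-1/(4n+1)}\rceil$ directly; the paper also records an explicit construction via the Lazebnik--Ustimenko--Woldar graphs. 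Your approach instead samples cherries directly and must therefore track, for each putative cycle, whether the two cycle-edges at an $A$-vertex come from one cherry (merged) or two (split); this works, and your estimate $\sum_j\binom{k}{j}M^{2k-j}s^{-(4k-j)}=(M^2/s^4)^k(1+s/M)^k\le (2M^2/s^4)^k$ is the right one. One remark: you state that $C$ must depend on $n$, but in fact once the hidden absolute constant $c$ in your $O((c/C^2)^{2n}m)$ is fixed, any $C>\sqrt{c}$ makes $(c/C^2)^{2n}$ bounded uniformly in $n$, so a universal constant suffices in your argument as well (this matters for the theorem, which asserts a single $C$ for all $n$). A second small point: what you build is an $(m,s)$-query graph in the paper's terminology (each of $A,A_0,A_1$ has $s$ vertices), not an $(m,3s)$-query graph; the conclusion is unaffected.
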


\begin{proof}[Proof of Lemma~\ref{lm:satisfiablility}]
Fix a non-empty set $S$ of size at most $n$. We need to assign values
to the bits of $A$, $A_0$ and $A_1$ so that all queries are answered
correctly. Note that since our query scheme is systematic, the only
constraints we have are the following.
\begin{description}
\item[$x \in S$:] 
\begin{eqnarray}
\neg A[i(x)] &\rightarrow& A_0[i_0(x)] ;\label{eq:x0} \\
A[i(x)] &\rightarrow& A_1[i_1(x)]. \label{eq:x1}
\end{eqnarray}
\item[$y \not \in S$:] 
\begin{eqnarray}
\neg A[i(y)] &\rightarrow& \neg A_0[i_0(y)]; \label{eq:y0} \\
A[i(y)] &\rightarrow& \neg A_1[i_1(y)].  \label{eq:y1} 
\end{eqnarray}
\end{description}
Let us examine the implications of the above constraints for the
variables from the first array: $A[1], A[2], \ldots, A[s]$.  From
(\ref{eq:x0}) and (\ref{eq:y0}), we conclude that whenever $x \in S$
and $y \not\in S$ and an edge with label $x$ and an edge with label
$y$ meet in $A_0$, we have the constraint
\begin{equation}
 A[i(x)] \vee A[i(y)]. \label{eq:meetinA0}
\end{equation}
Similarly, from (\ref{eq:x1}) and (\ref{eq:y1}), if $x\in S$ and $y
\not \in S$, and an edge with label $x$ and edge with label $y$ meet
in $A_1$, we have the constraint
\begin{equation}
\neg A[i(x)] \vee \neg A[i(y)]. \label{eq:meetinA1}
\end{equation}
Let $\psi_S(A)$ be the 2-SAT instance on variables $A[1], \ldots, A[s]$
consisting of all clauses of the form (\ref{eq:meetinA0}) and
(\ref{eq:meetinA1}). It can be verified that a satisfying assignment
for $\psi_S(A)$ can be extended to the other arrays, $A_0$ and $A_1$,
in order to satisfy all constraints in (\ref{eq:x0})--(\ref{eq:y1}).
So, it suffices to show that $\psi_S(A)$ is satisfiable.

Each clause of the form $x \vee y$ is equivalent to the $\neg x
\rightarrow y$ and $\neg y \rightarrow x$. Furthermore, if $\psi_S(A)$
is not satisfiable, then there must be a chain of such implications
from a literal to its negation (see, e.g., Aspvall, Plass and
Tarjan~\cite{APT1979}).  We now observe that since our graph has large
girth, such a chain cannot exist. Suppose the shortest such chain has
the form
\[ A[i_0] \rightarrow \neg A[i_1] \rightarrow A[i_2] \rightarrow \cdots
\rightarrow A[i_{\ell-1}] \rightarrow \neg A[i_{\ell}],\] where
$i_\ell = i_0$ and otherwise the $i_j$'s are distinct (if they were
not distinct, there would be a shorter chain).  Since each clause of
$\psi_S$ involves at least one element from $S$, we have $\ell \leq
2|S|$.  The first implication corresponds to a path of length two in
$G$ from $A[i_0]$ to $A[i_1]$ via an intermediate vertex in $A_1$, the
second to a path of length two in $G$ from $A[i_1]$ to $A[i_2]$ via
$A_0$, and so on; the last implication corresponds to a path of length
two from $A[i_{\ell-1}]$ to $A[i_\ell]$ via $A_1$.  If $\ell=0$, we
have a path in $G$ from $A[i_0]$ to itself via $A_1$ (consisting of
two different edges, one with label in $S$ and the other with label
not in $S$), resulting in a cycle of length two---a contradiction.  If
$\ell \geq 1$, the first implication shows that there is a path of
length two in $G$ from $A[i_0]$ to $A[i_1]$ via $A_1$.  The remaining
implications show that there is a walk of length $2(\ell-1)$ from
$A[i_1]$ to $A[i_0]$ that starts with an edge from $A[i_1]$ to
$A_0$. Thus, $A[i_1]$ is in a cycle in $G$ of length at most $2\ell
\leq 4|S|$---a contradiction.

A similar argument shows that the shortest such chain cannot be of the
form $\neg A[i_0] \rightarrow  A[i_1] \rightarrow \neg A[i_2] \rightarrow \cdots
\rightarrow \neg A_[i_{\ell-1}] \rightarrow A[i_{0}]$. 
\end{proof}

\begin{proof}[Proof of Lemma~\ref{lm:highgirthquerygraph}]
Let $G=(V_1,V_2,E)$ be a bipartite graph of girth $g$, with $|V_1|,
|V_2| = s$, and each vertex in $V_1$ of even degree.  (Later, we will
indicate how such graphs $G$ can be obtained.)  Let
\begin{eqnarray*}
V_1 &=& \{V_1[1],V_1[2], \ldots, V_1[s]\};\\
V_2 &=& \{V_2[1],V_2[2], \ldots, V_2[s]\}.
\end{eqnarray*}
Consider the $(|E|/2,s)$-query graph $H$ constructed as follows.  $H$
has three vertex sets $A$, $A_0$ and $A_1$. $A$ will be a copy of
$V_1$, and $A_0$ and $A_1$ will be copies of $V_2$. Half the edges of
$G$ between $V_1$ and $V_2$ will be placed between $A$ and $A_0$ and
the rest between $A$ and $A_1$. More precisely, suppose the neighbors
of $V_1[i]$ are $V_2[j_1], V_2[j_2], \ldots, V_2[j_{d}]$. Then, for
$k=1,2,\ldots,d/2$, we include edges $\{A[i],A_0[j_{2k}]\}$ and
$\{A[i],A_1[j_{2k-1}]\}$ in $H$; furthermore, these two edges will
have the same label $x \in [m]$. It is immediate that $H$ is a
$(|E(G)|/2,s)$-query graph with girth at least $g$. Thus, it is enough
to exhibit a bipartite graph $G$ with $|V_1|=|V_2|=
O(m^{1-\frac{1}{4n+1}})$, $|E(G)|=2m$, $\girth(G) > 4n$ and all
vertices in $V_1$ of even degree. We present a probabilistic argument
(essentially due to Erd\"{o}s) to establish the existence of such
graphs.
\paragraph{Dense graphs of large girth:} 
A probabilistic argument (due to Erd\"{o}s) establishes the existence
of such graphs. Let $k=4n \leq \frac{1}{10}\lg m$, and consider the
following random bipartite graph $G$ on vertex sets $V_1$ and $V_2$,
each with $s=\ceil{4m^{1-\frac{1}{k+1}}}$ vertices each. Let $d$ be
the largest even number at most $s^\frac{1}{k}$; thus
$s^{\frac{1}{k}} \geq d > s^{\frac{1}{k}} -2\geq 2$. For each vertex $v \in
  V_1$, we assign $d$ distinct neighbors from $V_2$. Then, the
  expected number of short cycles in $G$ is at most
\[ \sum_{\ell=2}^{k/2} 
\left(\frac{s^{2\ell}}{2\ell}\right) \left(\frac{d}{s}\right)^{2\ell}
\leq \frac{1}{4} \sum_{\ell=2}^{k/2} d^{2\ell} \leq \frac{d^{k}}{4}
\sum_{\ell=0}^{k/2-2} \frac{1}{d^{\ell}} \leq \frac{d^k}{2} \leq
\frac{s}{2}.\] Thus, there is such a graph with at most $\frac{s}{2}$
short cycles. Consider each such cycle one by one, and for each pick
one of its vertices in $V_1$ and delete both edges from the cycle
incident on it.  Then, the average degree in $V_1$ is at least
$d-1 > s^{1/k}-3 \geq \frac{1}{2}s^{1/k}$, and $|E(G)| \geq
\frac{1}{2}s^{\frac{k+1}{k}} \geq 2m$. Using this in the above
construction we obtain $s(m,n,2) \leq \ceil{4m^{1-\frac{1}{4n+1}}}$.
\end{proof} 

\paragraph{Explicit schemes:} Two-probe schemes 
 can be obtained more explicitly using the following construction of
 graphs of large girth.
 \begin{proposition}[see Proposition 2.1 of F. LAZEBNIK, V. A. USTIMENKO, AND A. J. WOLDAR~\cite{LUW}]
Let $q$ be a prime power and $k \geq 1$ be an odd integer. Then, there
is graph $D(k,q)$ that is
\begin{enumerate}
\item[(i)] $q$-regular and of order $2q^k$.
\item[(ii)] $D(k,q)$ has girth at least $k+5$
\end{enumerate}
\end{proposition}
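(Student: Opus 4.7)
My plan is to construct $D(k,q)$ explicitly as a bipartite incidence graph on two copies of $\mathbb{F}_q^k$, verify regularity and order by solving the incidence system coordinate-by-coordinate, and then obtain the girth lower bound through an algebraic rigidity argument showing that no short nontrivial closed walk can exist. Concretely, I would take $V = P \sqcup L$ with $P = L = \mathbb{F}_q^k$; a point $p = (p_1, \ldots, p_k)$ and a line $\ell = [\ell_1, \ldots, \ell_k]$ are declared adjacent iff they satisfy a prescribed system of $k-1$ recursive incidence equations of the form $\ell_i - p_i = f_i(p_1, \ell_1; p_2, \ell_2; \ldots; p_{i-1}, \ell_{i-1})$, where each $f_i$ is a fixed bilinear expression in the previously indexed coordinates. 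The specific choice of the $f_i$ (the content of the Lazebnik--Ustimenko--Woldar construction) is designed so that, once $p_1$ and $\ell_1$ are fixed, the remaining coordinates of $\ell$ are uniquely determined from those of $p$ by the recursion, and symmetrically.

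From this setup, order and regularity are immediate. We have $|V| = 2q^k$; and fixing any point $p$, each choice of first coordinate $\ell_1 \in \mathbb{F}_q$ yields via the recursion exactly one incident line, so $p$ has precisely $q$ neighbors. The symmetric argument handles lines, establishing $q$-regularity.

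The genuine obstacle is the girth bound, and this is where I would spend the bulk of the effort. Suppose there were a cycle $p^{(0)} \sim \ell^{(0)} \sim p^{(1)} \sim \ell^{(1)} \sim \cdots \sim \ell^{(r-1)} \sim p^{(0)}$ of length $2r$ with $2r \leq k+4$. Using the recursive form of the incidence equations, one can express every coordinate appearing along the walk as a polynomial of controlled degree in the ``first coordinates'' of the points and lines visited. Closing the cycle then yields an overdetermined system of polynomial equations whose shape, combined with the tuned definitions of the $f_i$, forces a prescribed periodicity on the first coordinates, which in turn implies that two supposedly distinct vertices of the cycle actually coincide---a contradiction. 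The hard part will be managing the bookkeeping of degrees and the algebraic dependencies carefully enough to rule out every nontrivial short closed walk up to length exactly $k+4$; this is precisely where the specific algebraic choice of the $f_i$ is essential, and I would not expect a generic system of bilinear relations to deliver the sharp bound.
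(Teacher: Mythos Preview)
The paper does not prove this proposition; it is quoted verbatim from Lazebnik, Ustimenko and Woldar~\cite{LUW} and used as a black box in the explicit-schemes paragraph. So there is no ``paper's own proof'' to compare against.

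That said, your sketch is an accurate high-level outline of the actual construction in~\cite{LUW}: the graphs $D(k,q)$ are indeed defined as bipartite incidence graphs on two copies of $\mathbb{F}_q^k$, with adjacency given by a recursive system of bilinear relations in successive coordinates, and the order and $q$-regularity follow exactly as you describe. Your assessment of where the difficulty lies is also correct: the girth bound is the substantive part, and it genuinely depends on the specific form of the $f_i$ (in the original paper these arise from a Lie-algebraic context). If you intend to flesh this out, be aware that the full girth argument in~\cite{LUW} is not a short computation---it relies on structural invariants of the construction rather than a direct degree-counting argument on closed walks, and your proposed bookkeeping approach, while plausible in spirit, would need those invariants (or an equivalent device) to close the argument at the sharp threshold $k+4$.
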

Now assume $ n \leq \frac{1}{4} (\log m)^{1/3}$.
Set $k=4n-3$, and choose $q=2^r$ such that $(2m)^{1/(k+1)} \leq q < 2
(2m)^{1/(k+1)}$. Then, $D(k,q)$ is a bipartite graph on
vertex sets $(V_1,V_2)$, with girth at least $4n+2$, at least $2m$ edges and
\[|V_1|= |V_2| = s = q^k \leq (2(2m))^{1/(k+1)})^k \leq 2^{k+1} m^{1 - 1/(k+1)}.\]
If $n \leq \frac{1}{4} (\log m)^{1/3}$, we have $(k+1)^2(k+2) \leq
  \log m$, and we have
\[ s \leq m^{1/((k+1)(k+2))} m^{1 - 1/(k+1)}  = m^{1 - 1/(k+2)} \leq m^{1 - 1/(4n+1)}.\]


\section{Two-probe lower bound: Proof of
  Theorem~\ref{thm:two-probe}~(b)}
\label{sec:two-probe-lb}

Since $s(m,n,2)$ is an non-decreasing function of $n$, it is enough to
establish the claim for $n \leq \lg m$.

\begin{proposition} \label{prop:systematic}
If there is an $(m,n,s,t)$-scheme, then there is a systematic
$(m,n,2s,t)$-scheme.
\end{proposition}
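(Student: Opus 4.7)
The approach is to double the storage so that both $\phi(S)$ and its bitwise complement are available, and then locally rewrite each query tree so that the last probe on every root-to-leaf path directly returns the answer. Define the new storing scheme by $\phi'(S) := \phi(S)\,\overline{\phi(S)}$ on $2s$ bits: position $i$ holds $\phi(S)_i$ and position $i+s$ holds $\neg \phi(S)_i$. This lets any probe of bit $i$ in the original tree be replaced by a probe of bit $i+s$ to read the negated value, without changing the depth or the number of probes.

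Given $T_u$, construct $T'_u$ by processing the tree bottom-up. At an internal node $v$ probing bit $i$, first ensure its children have been processed, then act according to their structure. If both children are leaves with labels $(a_0, a_1)$: for $(\mathrm{N},\mathrm{Y})$, keep $v$ unchanged (already systematic); for $(\mathrm{Y},\mathrm{N})$, switch the probe to bit $i+s$ and swap the two child subtrees (still systematic, since reading $\neg \phi(S)_i$ reverses the branch); for $(\mathrm{Y},\mathrm{Y})$ or $(\mathrm{N},\mathrm{N})$, replace $v$ by a single leaf carrying the common label (the subtree at $v$ was constant, so correctness is preserved). If exactly one child is a leaf, flip the probe iff the leaf's label disagrees with the edge value leading to it (and swap children accordingly); if both children are internal subtrees, leave $v$ unchanged, because systematicity of its deeper leaves is enforced at lower levels.

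The delicate point will be that collapses of type $(\mathrm{Y},\mathrm{Y})$ or $(\mathrm{N},\mathrm{N})$ can propagate upward: once $v$ collapses to a leaf, its parent may acquire two same-labeled leaf children and collapse as well, and so on. I plan to handle this by observing that the propagation cannot reach the root, because a root collapse would force $T_u$ to output the same answer for every $S$ of size at most $n$, contradicting correctness (choose $S$ with $u \in S$ and $S'$ with $u \notin S'$ to force opposite outputs). Once the bottom-up sweep terminates, every surviving node either branches on two internal-subtree children (no local probe constraint) or has at least one leaf child whose label can be matched by a suitable choice between bits $i$ and $i+s$. The resulting tree $T'_u$ then has depth at most $t$, is systematic, and uses only $2s$ storage bits, which is exactly what the proposition asserts.
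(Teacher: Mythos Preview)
The paper states this proposition without proof, treating it as a routine observation. Your argument is correct and is the standard construction: storing $\phi(S)$ together with its bitwise complement allows each last-probe node to read either location $i$ or location $i+s$, and a single flip-and-swap at that node makes the bit read equal the leaf label; the constant cases $(\mathrm{Y},\mathrm{Y})$ and $(\mathrm{N},\mathrm{N})$ are handled by collapsing and propagating upward, and the root cannot collapse because $T_u$ must distinguish $S=\{u\}$ from $S=\emptyset$ whenever $n\geq 1$. All the local transformations you describe preserve the function computed (the flip-and-swap is function-preserving since reading $\neg\phi(S)_i$ with swapped children reproduces the original branching), and none increases depth, so the resulting tree is a systematic $(m,2s,t)$-query scheme as claimed.
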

So, from now on, we will assume that our schemes are systematic.

\begin{definition} (Bipartite graph $H_{\Phi}$,pseudo-graph  $G_{\Phi}$)
Fix a systematic $(m,n,s,2)$-scheme $\Phi$. We will associate the
following bipartite graph $H_\Phi$ with such a scheme. There will be
two sets of vertices, each with $s$ elements: $A_0$ and $A_1$. Each
edge of $H_\Phi$ will have a color and a label. We include the edge
$\{A_0[j],A_1[k]\}$ with label $x$ and color $i$, if on query ``Is $x$
in $S$?'' the first probe is made to location $i$, and if it returns
$0$, the second probe is made to location $j$ and if the first probe
returns a $1$, the second probe is made to location $k$. $H_{\Phi}$
thus has $2s$ vertices and $m$ edges, which are colored using
$s$ colors.

The pseudo-graph $G_{\Phi}$ is a bipartite graph obtained from
$H_{\Phi}$ as follows. $G_{\Phi}$ and $H_{\Phi}$ have the same set of
vertices. The edges of $G_{\Phi}$ are obtained as follows.  Consider
the edges of color $\alpha$ in $H_\Phi$. We partition these edges into
ordered pairs (excluding one edge if the number of edges of this color
is odd). For each such pair we include a pseudo-edge in $G_{\Phi}$ as
follows. Let $(e,e')$ be one such pair; suppose $e=\{u,v\}$ has label
$x$, $e'=\{u',v'\}$ has label $x'$, $u,u' \in A_0$ and $v,v' \in
A_1$. Then, in $G_{\Phi}$ we include the edge $\{u,v'\}$ with label
$\{(u,x), (v',x')\}$ (we {\em do not} include the edge $\{u',v\}$). We
repeat this for all colors $\alpha$.  Thus $G_{\Phi}$ is a bipartite
graph with at least $(m-s)/2$ edges. For a set of edges $P$ of
$G_{\Phi}$, let $\lab(P)\subseteq [m]$ be the set of elements of the
universe that appear in the label of some edge in $P$.
\end{definition}

\subsection{Forcing}

\begin{lemma}[Forcing lemma]\label{lm:forcing}
Let $G_{\Phi}$ be a pseudo-graph associated with a systematic scheme
$\Phi$.  Let $C$ be a cycle in $G_{\Phi}$ starting at vertex $v$. Let
$b \in \{0,1\}$.  Then there are disjoint subsets $S_0, S_1 \subseteq
\lab(C)$, each with at most $|C|+1$ elements, such that in any
representation under $\Phi$ of a set $S$ such that $S_1 \subseteq S
\subseteq \bar{S_0}$, location $v$ must be assigned $b$. Further, if
$b=0$ then $|S_1|=|C|-1$, and if $b=1$, then $|S_1|=|C|+1$.
\end{lemma}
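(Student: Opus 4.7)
The plan is to interpret each pseudo-edge of $C$ as inducing a 2-CNF constraint on the endpoint bits and then chain the resulting implications around $C$ so that the bit $y_v := A_{\epsilon(v)}[v]$ (with $\epsilon(v)\in\{0,1\}$ recording which side of the bipartition contains $v$) is forced to equal $b$.

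First I would work out, for a single pseudo-edge $\{u,v'\}$ (with $u\in A_0$, $v'\in A_1$) labeled $\{(u,x),(v',x')\}$, exactly which 2-CNF clause on $(y_u,y_{v'})$ is enforced by a given $S$-membership choice for its two labels. If $\alpha$ denotes the common first-probe location of the two parent edges of this pseudo-edge in $H_\Phi$, then systematicity gives $y_u=[x\in S]$ and $y_{u'}=[x'\in S]$ when $A[\alpha]=0$, and $y_v=[x\in S]$ and $y_{v'}=[x'\in S]$ when $A[\alpha]=1$. Since $A[\alpha]$ is not under our control, disjoining over its two values collapses each of the four choices $(x\in S?,\,x'\in S?)$ to exactly one of $y_u\vee y_{v'}$, $y_u\vee\neg y_{v'}$, $\neg y_u\vee y_{v'}$, $\neg y_u\vee\neg y_{v'}$. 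The correspondence to record is that a positive literal in the induced clause means its associated universe element is placed in $S_1$, and a negative literal means it is placed in $S_0$.

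Next I would build an implication chain around the cycle $C=(w_0,w_1,\ldots,w_{|C|}=w_0)$ with $w_0=v$. Choose $\ell_0$ so that ``$\ell_0$ false'' is equivalent to ``$y_v=b$'' (so $\ell_0=\neg y_v$ if $b=1$ and $\ell_0=y_v$ if $b=0$), set $\ell_{|C|}=\neg\ell_0$, and pick the intermediate literals $\ell_i$ arbitrarily. For the pseudo-edge between $w_i$ and $w_{i+1}$ I would select the clause $\neg\ell_i\vee\ell_{i+1}$; this fixes the $S$-membership of its two labels and hence their placement in $S_0$ or $S_1$. Because each universe element labels at most one pseudo-edge, the resulting $S_0$ and $S_1$ are automatically disjoint subsets of $\lab(C)$. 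Summing across edges using the correspondence above yields
\[
|S_1| \;=\; \sum_{i=0}^{|C|-1}\bigl([\ell_i\text{ negative}] + [\ell_{i+1}\text{ positive}]\bigr) \;=\; (|C|-1) + [\ell_0\text{ negative}] + [\ell_{|C|}\text{ positive}],
\]
which equals $|C|+1$ when $b=1$ and $|C|-1$ when $b=0$; correspondingly $|S_0|=2|C|-|S_1|\leq |C|+1$ in both cases.

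Finally, any representation of a set $S$ with $S_1\subseteq S\subseteq\overline{S_0}$ satisfies all the induced clauses, so the chain $\ell_0\Rightarrow\ell_1\Rightarrow\cdots\Rightarrow\ell_{|C|}=\neg\ell_0$ forces $\ell_0$ to be false, i.e.\ $y_v=b$, as required. The one wrinkle is that the cycle might start at a vertex in $A_1$ or traverse a pseudo-edge opposite to its ``canonical'' orientation; but the same clause/sign correspondence applies after symmetric relabeling of endpoints, so this only adds bookkeeping. I expect the main obstacle to be setting up the clause correspondence cleanly in the first step; once that is in place, the implication chain and the counting go through mechanically.
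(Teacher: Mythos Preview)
Your proposal is correct and amounts to the same argument as the paper's, recast in 2-SAT language: the paper fixes the explicit choice $S_1=\{a_1,\ldots,a_k,b_k\}$, $S_0=\{b_1,\ldots,b_{k-1}\}$ (and the swapped version for $b=0$) and propagates the bit value directly around the cycle, which is exactly your implication chain with the particular choice $\ell_i=\neg y_{v_i}$ for $i<|C|$. Your clause/sign correspondence and the counting of $|S_1|$ are correct, and the freedom you leave in the intermediate $\ell_i$ is harmless since all four clause types are realizable on every pseudo-edge.
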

\begin{proof}
First, consider the claim with $b=1$.  Suppose the cycle is
\[v=v_0 \pseudoedge{e_1} v_1 \pseudoedge{e_2} \cdots \pseudoedge{e_{k-1}} v_{k-1}
\pseudoedge{e_k} v_k=v_0,\] and the pseudo-edge $e_i$ has label
$\{(v_{i-1},a_i), (v_i,b_i)\}$.  Let $S_1=\{a_1,a_2,\ldots,a_{k-1},
a_k, b_k\}$ (it has $k+1$ elements) and $S_0 = \{b_1,b_2,\ldots,
b_{k-1}\}$. We claim that if the scheme $\Phi$ represents a set $S$
such that
\[ S_1 \subseteq S \subseteq \bar{S_0},\]
then location $v$ must be assigned $1$.  Suppose location $v$ is
assigned $0$.  Recall that the scheme is systematic. Since $a_1 \in S$
and $b_1 \not\in S$, we conclude from the definition of the
pseudo-edge $e_1$ that $v_1$ must also be assigned $0$.  Using the
subsequent edges in the cycle, we conclude that the locations
$v_0,\ldots,v_{k-1}$ must all be assigned $0$.  Now, however, $a_k,b_k
\in S$, so $1$ must be assigned to location $v_k=v$---a contradiction,
for we assumed that $v$ was assigned $0$. This proves our claim for
$b=1$. For $b=0$, we take $S_1=\{b_1,b_2,\ldots,b_{k-1}\}$ (it has
$k-1$ elements) and $S_0 = \{a_1,a_2,\ldots, a_{k-1},a_k,b_k\}$, and
reason as before.
\end{proof}

\begin{corollary}\label{cor:forcing}
If a scheme stores sets of size up to $2k$, then the pseudo-graph
associated with the scheme cannot have two edge-disjoint cycles of
length at most $k$ each that have a vertex in common.
\end{corollary}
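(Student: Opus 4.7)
The plan is to derive a contradiction by simultaneously applying the Forcing Lemma to the two cycles at their common vertex $v$, once with $b=0$ and once with $b=1$, and then exhibiting a single set of size at most $2k$ whose representation under $\Phi$ would be forced to assign both $0$ and $1$ to location $v$.

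Concretely, suppose for contradiction that $G_\Phi$ contains edge-disjoint cycles $C_1$ and $C_2$ of lengths $|C_1|,|C_2|\le k$, both passing through a vertex $v$. Applying Lemma~\ref{lm:forcing} to $C_1$ at $v$ with $b=0$ yields disjoint $S_0^{(1)}, S_1^{(1)}\subseteq \lab(C_1)$ with $|S_1^{(1)}|=|C_1|-1\le k-1$. Applying the lemma to $C_2$ at $v$ with $b=1$ yields disjoint $S_0^{(2)}, S_1^{(2)}\subseteq \lab(C_2)$ with $|S_1^{(2)}|=|C_2|+1\le k+1$. I will then take $S:=S_1^{(1)}\cup S_1^{(2)}$, which has size at most $2k$ and hence is one of the sets the scheme $\Phi$ is required to represent.

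To conclude, I need $S_1^{(i)}\subseteq S\subseteq \overline{S_0^{(i)}}$ for both $i=1,2$. The first containment is by construction. For the second, I claim $\lab(C_1)\cap \lab(C_2)=\emptyset$. This uses the fact that in $H_\Phi$ every universe element $x\in[m]$ labels a unique edge (since each query has a unique triple of probe locations), so each pseudo-edge of $G_\Phi$ carries two universe elements that appear on no other pseudo-edge. Since $C_1$ and $C_2$ are edge-disjoint in $G_\Phi$, their label sets are disjoint; combined with the disjointness of $S_0^{(i)}$ and $S_1^{(i)}$ given by the Forcing Lemma, this implies $S\cap S_0^{(i)}=\emptyset$ for $i=1,2$. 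Then Lemma~\ref{lm:forcing} forces location $v$ to be assigned $0$ (from $C_1$) and $1$ (from $C_2$) in the representation $\phi(S)$, a contradiction.

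The only potentially subtle step is the label-disjointness claim; everything else is bookkeeping. Once one notices that the construction of $G_\Phi$ partitions the edges of $H_\Phi$ into pseudo-edges (each pseudo-edge permanently "consuming" two universe elements), this is immediate from edge-disjointness of $C_1$ and $C_2$. The $b=0$ versus $b=1$ asymmetry in the sizes $|C|-1$ and $|C|+1$ is exactly what makes the total size bound $|S|\le (k-1)+(k+1)=2k$ come out tight, explaining the "$2k$" in the hypothesis.
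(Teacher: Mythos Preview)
Your proof is correct and follows essentially the same approach as the paper's: apply the Forcing Lemma to $C_1$ with $b=0$ and to $C_2$ with $b=1$, take $S$ to be the union of the two ``$S_1$'' sets, and derive a contradiction. The paper's proof is terser and simply asserts the containments $S\subseteq\overline{S_0}$ and $S\subseteq\overline{T_0}$ without justification; you have supplied the missing step (that $\lab(C_1)\cap\lab(C_2)=\emptyset$ because edge-disjoint pseudo-edges carry disjoint label sets), and your observation that $(k-1)+(k+1)=2k$ makes explicit why the size bound comes out exactly right.
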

\begin{proof}
Suppose there are two such edge-disjoint cycles, $C_1$ and $C_2$, both
starting at $v$.  We apply Lemma~\ref{lm:forcing} with $b=0$ and
obtain sets $S_0,S_1 \subseteq \lab(C_1)$. Next we apply
Lemma~\ref{lm:forcing} with $b=1$ and obtain sets $T_0,T_1 \subseteq
\lab(C_2)$. Now, consider $S= S_1\cup T_1$, a set of size at most
$2k$. When the scheme stores the set $S$, then 
location $v$ must be assigned a $0$ (because $S_1 \subseteq S \subseteq
\bar{S_0}$), and also $1$ (because $T_1 \subseteq S \subseteq
\bar{T_0}$)---a contradiction.
\end{proof}

\subsection{Calculation}

Our lower bound will use Corollary~\ref{cor:forcing} as follows.  We
will show that if a scheme uses small space to represent sets of size
up to $n$ from a universe of size $m$, then its pseudo-graph must be
dense (for it must accommodate about $m/2$ edges). In such a dense
graph there must be short cycles. In fact, if $m \gg s^{1-4/n}$, then
we can ensure that there are two cycles of length at most $n/2$ each
that have a vertex in common. But, then Corollary~\ref{cor:forcing}
states that such a scheme does not exist.

To make the above argument precise, we will use the following
proposition, which is a consequence of a theorem of Alon, Hoory and
Linial~\cite{AHL} (see also Ajesh Babu and Radhakrishnan~\cite{BR}).
\begin{proposition}
Let $G$ be a bipartite graph average degree $d \geq 2$, and girth
greater than $\floor{\frac{n}{2}}$ (for positive integer $n \geq 4$).
Then, $d \leq (|V(G)|/2)^{{\myexponent}} +1$.
\end{proposition}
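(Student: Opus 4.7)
The plan is to derive the bound directly from the Alon--Hoory--Linial generalization of the Moore bound (the references [AHL] and [BR] cited in the excerpt); the work is mostly bookkeeping once that theorem is in place.

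First, I would set $k = \floor{n/4}$ and observe that the hypothesis combined with $G$ being bipartite forces $\girth(G) \geq 2k+2$. Indeed, for each of the four residues of $n$ modulo $4$, the value $\floor{n/2}$ is either $2k$ or $2k+1$, and since bipartite graphs have even girth, an integer girth strictly exceeding $\floor{n/2}$ must be at least $2k+2$.

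Second, I would invoke the bipartite Moore-type bound in its average-degree form: for any bipartite graph with average degree $\bar{d}\geq 2$ and girth at least $2k+2$,
\[ |V(G)| \;\geq\; 2\sum_{i=0}^{k}(\bar{d}-1)^i. \]
The classical template for this inequality is a BFS from an edge $\{u,v\}$ in a $\bar{d}$-regular graph: the girth hypothesis guarantees that the vertices at each distance $0\leq i\leq k$ from $\{u,v\}$ are all distinct, yielding $2(\bar{d}-1)^i$ new vertices at level $i$, summed over all $i$. The nontrivial content of [AHL] (sharpened in [BR]) is that the same inequality persists when $\bar{d}$ is the \emph{average} rather than the \emph{minimum} degree; I would cite this as a black box.

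Third, since $\bar{d}\geq 2$, the sum on the right-hand side is bounded below by its last term, $2(\bar{d}-1)^k$. Plugging in $\bar{d}=d$ and taking $k$-th roots yields $d-1\leq (|V(G)|/2)^{1/k}$, i.e., $d\leq (|V(G)|/2)^{\myexponent}+1$, as desired. The main obstacle is precisely the passage from minimum to average degree: the BFS counting argument sketched above is straightforward when every vertex has degree at least $\bar{d}$, but fails outright if a positive fraction of vertices have much smaller degree. Absorbing this difficulty is exactly the purpose of the AHL theorem, and once invoked the remaining steps of the proof are routine manipulations of the Moore-bound inequality.
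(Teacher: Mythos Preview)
Your proposal is correct and follows essentially the same route as the paper: set $k=\floor{n/4}$, use bipartiteness to upgrade the girth hypothesis to $\girth(G)\geq 2k+2$, invoke the Alon--Hoory--Linial Moore bound for average degree to obtain $|V(G)|\geq 2(d-1)^k$, and rearrange. The paper's proof is terser (it cites AHL directly for the inequality $|V(G)|\geq 2(d-1)^p$ without writing out the full sum), but the argument is the same.
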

\begin{proof}
Let $n=4p+q$, for $p=\floor{\frac{n}{4}}$ and $q$ such that $q\in
\{0,1,2,3\}$. From our assumption, the girth of $G$ is greater than
$\floor{\frac{n}{2}} \geq 2p$. Since $G$ is bipartite, $G$ has girth
at least $2(p+1)$. The result of Alon, Hoory and Linial then
immediately implies that $|V(G)| \geq 2 (d-1)^p$. Since
$p=\floor{n/4}$, our claim follows from this.
\end{proof}
\begin{corollary}\label{cor:two-cycles}
Suppose $G$ is a bipartite graph with $|V(G)|\geq \floor{\frac{n}{2}}$
and $|E(G)| > (|V(G)|/2)^{1+ {\myexponent}}+\frac{3}{2}|V(G)|$ and
$(|V(G)|/2)^{\myexponent} \geq 2$. Then, $G$ has two edge disjoint
cycles of length at most $\frac{n}{2}$ that have at least one vertex
in common.
\end{corollary}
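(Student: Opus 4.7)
The plan is to apply the preceding proposition (a Moore-type bound) twice, with an edge-peeling step in between. The first application extracts a short cycle $C_1$; the second, after removing $E(C_1)$, extracts a short cycle $C_2$ edge-disjoint from $C_1$. The $\frac{3}{2}|V(G)|$ slack in the hypothesis is calibrated precisely so that both applications succeed. Forcing $C_1$ and $C_2$ to share a vertex requires extra care, and is the main obstacle.

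First, I iteratively delete vertices of degree at most $1$ from $G$ to obtain a subgraph $G_1$ with minimum degree $\geq 2$. Each such deletion costs at most one edge, so $|E(G_1)| \geq |E(G)| - (|V(G)| - |V(G_1)|)$. Since $|V(G_1)| \leq |V(G)|$, a direct computation using the hypothesis shows the average degree of $G_1$ still strictly exceeds $(|V(G_1)|/2)^{1/\lfloor n/4 \rfloor} + 1$, and by the contrapositive of the preceding proposition $G_1$ contains a cycle $C_1$ with $|C_1| \leq \lfloor n/2 \rfloor \leq n/2$. Next, I delete $E(C_1)$ from $G_1$ and again iteratively peel off degree-$\leq 1$ vertices to obtain $G_2$. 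The total edge loss over both peelings and the cycle deletion is $(|V(G)| - |V(G_2)|) + |C_1|$, which, using the hypotheses (in particular $(|V(G)|/2)^{1/\lfloor n/4\rfloor} \geq 2$, which forces $|V(G)| \geq 2 \lfloor n/2\rfloor$ and hence $|C_1| \leq \tfrac{1}{2}|V(G)|$), is at most $\tfrac{3}{2}|V(G)|$. This is exactly the slack afforded by the hypothesis, so the average degree of $G_2$ still exceeds $(|V(G_2)|/2)^{1/\lfloor n/4 \rfloor} + 1$, and a second application of the proposition yields a cycle $C_2 \subseteq G_2 \subseteq G_1 \setminus E(C_1)$ with $|C_2| \leq n/2$ and $E(C_1) \cap E(C_2) = \emptyset$.

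The main obstacle is forcing $C_1$ and $C_2$ to share a vertex; the argument above only guarantees edge-disjointness. My plan is to select, before the second peel, a vertex $v_0 \in V(C_1)$ of maximum degree in $G_1$, to refuse to remove $v_0$ during the subsequent peeling, and to restrict the second application of the proposition to the connected component of $G_1 \setminus E(C_1)$ containing $v_0$. Because $v_0$ has comparatively high degree in $G_1$, its component inherits enough edges to still satisfy the Moore hypothesis, and a rooted version of the BFS argument from $v_0$ should force the extracted cycle to pass through $v_0$. The delicate point is this rooted extraction against the tight $\tfrac{3}{2}|V(G)|$ edge budget; if $v_0$'s component turns out to be too sparse, I would instead re-run the argument in whichever component carries the surplus density, exchanging the roles of $C_1$ and the new short cycle found there.
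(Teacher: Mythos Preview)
Your extraction of two edge-disjoint short cycles is fine, but your plan for forcing them to share a vertex is where the argument breaks down. Restricting to the component of $v_0$ in $G_1 \setminus E(C_1)$ gives no control over how many edges that component carries; the Moore bound does not hand you a cycle through a prescribed vertex, so the ``rooted BFS'' step is not something you can simply invoke. Your fallback (swap $C_1$ for a cycle found in the dense component) just restarts the problem with no guarantee of progress. As stated, this part is a genuine gap.

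The paper sidesteps the whole difficulty by a pigeonhole argument: it does \emph{not} stop after two cycles. Instead, it repeatedly applies the proposition, each time deleting the edges of the short cycle just found. After removing cycles of total length $\sum_i \ell_i$, the edge count is still more than $(|V(G)|/2)^{1+1/\lfloor n/4\rfloor} + \tfrac{3}{2}|V(G)| - \sum_i \ell_i$, so as long as $\sum_i \ell_i \leq |V(G)|$ the average degree remains above the Moore threshold and another short cycle can be extracted. One continues until $\sum_i \ell_i > |V(G)|$; at that moment the cycles together use more vertex-slots than there are vertices, so two of them must share a vertex. The $\tfrac{3}{2}|V(G)|$ slack is budgeted for exactly this iteration, not for a single removal-plus-peel. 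This avoids any rooted extraction or component bookkeeping entirely.
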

\begin{proof}
If $|E(G)|> (|V(G)|/2)^{1+ {\myexponent}} + |V(G)|/2 $, then the
average degree is at least $(|V(G)|/2)^{\myexponent} + 1 \geq 2$. By
the proposition above, $G$ has a cycle of length $\ell_1 \leq
\floor{\frac{n}{2}}$. Remove this cycle, and consider the remaining
graph, which has more than $(2s)^{1+ {\myexponent}}+ (3|V(G)|/2
-\ell_1)$ edges. Again, we find a cycle of length at most $\ell_2 \leq
\floor{\frac{n}{2}}$. We may continue in this way, finding cycles of
length $\ell_i$ ($i=1,2,\ldots$), until the sum of the $\ell_i$'s
\emph{exceeds} $|V(G)|$. At that point, two of the cycles we found must
intersect (for there are only $|V(G)|$ vertices).
\end{proof}

\begin{proof}{of Theorem~\ref{thm:two-probe} (b)}
Fix an $(m,n,s,2)$-scheme. Assume $m$ is large.  Using
Proposition~\ref{prop:systematic}, we obtain a systematic
$(m,n,2s,2)$-scheme, say $\Phi$, and consider the corresponding
pseudo-graph $G_{\Phi}$ (note $|V(G_{\Phi})|=4s$ and $|E(G)| \geq
(m-s)/2$).  The lower bound of Buhrman et {al.}~\cite{BMRV2002}
implies that $s \geq \sqrt{m}$; thus $(2s)^{\myexponent}\geq
m^\frac{2}{n} \geq 2$; also since we assume $n \leq \lg m$, we have
$4s \geq \floor{\frac{n}{2}}$.  By applying
Corollary~\ref{cor:two-cycles} and Corollary~\ref{cor:forcing} to
$G_{\Phi}$, we conclude that
\[ \frac{m-s}{2} \leq |E(G_\Phi)| \leq (2s)^{1+ {\myexponent}} + 6s.\]
Thus (recall from above that $(2s)^{\myexponent}\geq 2$),
\[m \leq 13s + 2(2s)^{1+{\myexponent}} \leq (7s)^{1+ \myexponent} +
(4s)^{1+ \myexponent} \leq (11s)^{1+ {\myexponent}}.\] 
By raising both
sides to the power $1-\myexponent$ and rearranging the inequality, we
obtain $s \geq \frac{1}{11} m^{1 - {\myexponent}}.$
\end{proof}

\section{Three-probe upper bound: Proof of Theorem~\ref{thm:three-probe-ub}}

As in the case of two probes, our three-probe scheme will be based on
the existence of certain graphs. The framework we use will be general
and also applicable to schemes that make $t \geq 4$ probes. We will
present the general framework first, and specialize it to $t=3$ when
we describe our proof.
\begin{definition} 
An $(m,s,t)$-graph is a bipartite graph $G$ with vertex sets $U=[m]$
and $V$ ($|V|=(2^t-1)s$). $V$ is partitioned into $2^t-1$ disjoint
sets: $A$, $A_0$, $A_1$, $A_{00}$,\ldots, one $A_{\sigma}$ for each
$\sigma \in \{0,1\}^{\leq (t-1)}$; each $A_{\sigma}$ has $s$
vertices. Between each $u \in U$ and each $A_{\sigma}$ there is
exactly one edge. For $i=1,\ldots,t$, the subgraph of $G$ induced by
$U$ and $V_i=\cup_{\sigma: |\sigma|=i-1} A_\sigma$ will be referred to
as $G_i$. An $(m,s,t)$-graph naturally gives rise to a systematic
$(m,(2^t-1)s)$-query scheme ${\cal T}_G$ as follows. We view the
memory (an array ${\mathsf L}$ of $(2^t-1)s$ bits) as being indexed by
vertices in $V$. For query element $u\in U$, if the first ${i-1}$
probes resulted in values $\sigma \in \{0,1\}^{i-1}$, then the $i$-th
probe is made to the location indexed by the unique neighbor of $u$ in
$A_{\sigma}$. In particular, the $i$-th probe is made at a location in
$V_i$.

We say that the query scheme ${\cal T}_{G}$ is satisfiable for a set
$S \subseteq [m]$, if there is an assignment to the memory locations
$({\mathsf L}[v]: v \in V)$, such that ${\cal T}_{G}$ correctly answers all
queries of the form ``Is $x$ in $S$?''.
\end{definition}
We now restrict attention to $t=3$ probes.  First, we identify an
appropriate property of the underlying $(m,s,3)$-graph $G$ that
guarantees that the ${\cal T}_{G}$ is satisfiable for all sets $S$ of
size at most $n$. We then show that such a graph does exist for some
$s = O(\sqrt{mn\lg \frac {2m}{n}})$.
\begin{definition}[Admissible graph] \label{def:admissible}
 We say that an $(m,s,3)$-graph $G$ is admissible for sets of size at
 most $n$, 
if 
\begin{enumerate}
\item[(P1)] $\forall R \subseteq [m]$ ($|R|\leq n + \ceil{n \lg \frac {2m}{n}}$):
  $|\Gamma_{G}(R)| \geq 5|R|$, where $\Gamma_G(R)$ is the set of
  neighbors of $R$ in $G$.

\item[(P2)] 
$\forall S \subseteq [m]\,  (|S| \leq n), \forall R \subseteq
[m]\setminus S\, (|R| > \ceil{n\lg \frac {2m}{n}})\,  \exists y \in R$:
\begin{eqnarray*}
&& \left(\Gamma_{G_3}(y) \cap \Gamma_{G_3}(S) = \emptyset\right) \quad \mbox{OR}
   \\
&& \left(
| \Gamma_{G_3}(y) \cap \Gamma_{G_3}(S)|=1 \quad \mbox{AND} \quad
|\Gamma_{G_1 \cup G_2}(y) \cap \Gamma_{G_1 \cup G_2}( (R\cup S) \setminus \{y\})| \leq 1\right). 
\end{eqnarray*}
\end{enumerate}
\end{definition}

\begin{lemma} \label{lm:admissibleImpliesSatisfiable}
If an $(m,s,3)$-graph $G$ is admissible for sets of size at most $n$,
then the $(m,7s,3)$-query scheme ${\cal T}_{G}$ is satisfiable for
$(S,[m]\setminus S)$ for every $S$ of size at most $n$.
\end{lemma}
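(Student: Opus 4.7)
The plan is to proceed by strong induction on $|S|+|R|$, where $R := [m] \setminus S$, maintaining the invariant that we can produce an assignment ${\mathsf L}: V \to \{0,1\}$ under which ${\cal T}_G$ correctly answers ``Yes'' on every $x \in S$ and ``No'' on every $x \in R$. The peeling step uses (P2) when $|R|$ is large; the base case uses (P1), in the spirit of Alon and Feige's three-probe scheme.

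For the peeling step via (P2), suppose $|R| > \lceil n \lg(2m/n)\rceil$ and apply (P2) to obtain $y \in R$ of the asserted form. By induction on $(S, R \setminus \{y\})$, we have an assignment ${\mathsf L}_0$ correct there; we modify ${\mathsf L}_0$ locally to also satisfy the ``No'' constraint at $y$. In case (a), $\Gamma_{G_3}(y) \cap \Gamma_{G_3}(S) = \emptyset$, so whichever $V_3$-vertex $v$ the query for $y$ terminates at under ${\mathsf L}_0$, no $s \in S$ has $v$ as a $G_3$-neighbor, let alone routes to $v$; hence setting ${\mathsf L}_0[v] := 0$ breaks no previously satisfied constraint. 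In case (b), $\Gamma_{G_3}(y)$ and $\Gamma_{G_3}(S)$ share exactly one vertex, while at most one of $y$'s three neighbors in $A \cup A_0 \cup A_1$ lies in $\Gamma_{G_1\cup G_2}([m] \setminus \{y\})$. So at least two of these three first/second-layer neighbors are ``exclusive'' to $y$ (no other element of $[m]$ is adjacent to them); flipping bits at these exclusive neighbors lets us freely steer $(b_1,b_2) \in \{0,1\}^2$ enough to route $y$ to any one of the three safe $V_3$-neighbors of $y$. A short case analysis over which of the three first/second-layer neighbors is (potentially) the shared one shows that a valid steering always exists. Setting the resulting safe $V_3$-vertex to $0$ finishes the extension.

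For the base case, $|R| \leq \lceil n \lg (2m/n)\rceil$ gives $|S\cup R| \leq n + \lceil n\lg(2m/n)\rceil$, so (P1) yields $|\Gamma_G(S\cup R)| \geq 5|S\cup R|$. Since each element of $[m]$ has degree $7$ in $G$, a double-count of edges shows that at least $3|S\cup R|$ vertices of $\Gamma_G(S\cup R)$ are adjacent to a unique element of $S\cup R$, so some element $x \in S\cup R$ has several ``private'' neighbors. This level of expansion suffices to run a Hall-type matching argument, along the lines of Alon and Feige's three-probe scheme: we assign each element of $S \cup R$ to a private $V_3$-vertex, set the first- and second-layer bits to effect routing to that $V_3$-vertex, and finally set each such $V_3$-bit according to whether its associated element is in $S$ or $R$.

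The main obstacle is the rerouting in case (b) of (P2): one must verify that flipping a bit at an exclusive $V_1 \cup V_2$-neighbor of $y$ triggers no cascade. The $\leq 1$ bound in (P2)(b) is tailored precisely for this — ``exclusive'' literally means that no other element of $[m]$ sees that first/second-layer vertex in its $G_1 \cup G_2$-neighborhood, so flipping the bit leaves every other element's routing, and hence output, unchanged. One also needs the case analysis on which of $y$'s three first/second-layer neighbors is shared to confirm that in each scenario $y$ can be routed to one of the three safe $V_3$-neighbors guaranteed by (P2)(b). Once these local arguments are pinned down, the inductive construction of the full assignment goes through mechanically.
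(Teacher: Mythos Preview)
Your approach is essentially the paper's: the paper argues by minimal counterexample (take a minimal $T\subseteq[m]\setminus S$ for which no assignment is correct on $S\cup T$), which is the contrapositive of your explicit induction on pairs $(S,R)$. The split into small $|R|$ (use (P1)) and large $|R|$ (peel via (P2)) matches. Two points, however, need correcting.

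First, you misread (P2). The bound in case~(b) is
\[
\bigl|\Gamma_{G_1\cup G_2}(y)\cap\Gamma_{G_1\cup G_2}\bigl((R\cup S)\setminus\{y\}\bigr)\bigr|\le 1,
\]
not with $[m]\setminus\{y\}$ in place of $(R\cup S)\setminus\{y\}$. So the ``exclusive'' first/second-layer neighbors of $y$ are exclusive only relative to $S\cup R$; other elements of $[m]$ may well see them. Your justification that ``flipping the bit leaves every other element's routing, and hence output, unchanged'' is therefore false as stated. Fortunately, your inductive invariant only demands correctness on $S\cup R$, so exclusivity relative to $S\cup R$ is exactly what is needed; flipping those bits may alter routing for elements outside $S\cup R$, but you do not care. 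You should say this rather than the stronger (and false) claim.

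Second, your base case does not work as written. Property~(P1) gives expansion in all of $G$, not in $G_3$, so you cannot conclude that each $u\in S\cup R$ gets a private $V_3$-vertex to route to; the private neighbors guaranteed by your double-count could all sit in $V_1\cup V_2$. What (P1) and Hall's theorem actually give is, for each $u\in S\cup R$, a set $V_u\subseteq\Gamma_G(u)$ of size~$5$ with the $V_u$ pairwise disjoint, but these five vertices may be spread arbitrarily across the three levels. The missing observation (stated in the paper) is that in a depth-$3$ binary decision tree with seven nodes, \emph{any} five of the nodes can be assigned values so as to force the output to a prescribed bit $b$, regardless of the values at the remaining two nodes. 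With this fact in hand, you set the five bits in each $V_u$ independently to make $T_u$ return the correct answer, and the disjointness of the $V_u$ ensures there is no conflict.
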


\begin{lemma} \label{lm:admissibleExists}
There is an $(m,s,3)$-graph with $s= \ceil{500\sqrt{mn\lg \frac {2m}{n}}}$ that is
admissible for every set $S\subseteq [m]$ of size at most $n$.
\end{lemma}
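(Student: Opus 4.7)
The plan is to prove Lemma~\ref{lm:admissibleExists} by the probabilistic method. Sample a random $(m,s,3)$-graph $G$ in which, for every $u \in [m]$ and every $\sigma \in \{0,1\}^{\le 2}$, the unique neighbor of $u$ in $A_\sigma$ is drawn uniformly and independently at random. The aim is to show that the events ``(P1) fails'' and ``(P2) fails'' both have probability less than $1/2$. Property (P1) follows from a standard expansion union bound: for a fixed $R \subseteq [m]$ with $|R| = r$, $\Pr[|\Gamma_G(R)| \le 5r] \le \binom{7s}{5r}(5r/(7s))^{7r}$ (since all $7r$ random endpoints must lie in a preselected set of size $5r$). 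Summing against $\binom{m}{r}$ choices of $R$ and over $r \le n + \lceil n\lg(2m/n)\rceil$ gives a geometric series dominated by $(O(mr/s^{2}))^{r}$, which is $o(1)$ whenever $s^{2} \ge C\, mn\lg(2m/n)$ for a sufficiently large absolute constant $C$, amply satisfied by the stated $s$.

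For (P2), fix $S$ with $|S| \le n$ and consider $R$ with $|R| = r > r_0 := \lceil n\lg(2m/n)\rceil$ and $R \cap S = \emptyset$; call $R$ \emph{fully bad w.r.t.\ $S$} if every $y \in R$ violates both clauses of (P2). Each such $y$ admits a witness of one of two kinds. A \emph{Type A} witness names two distinct level-3 parts $\sigma_1,\sigma_2$ and $s_1,s_2 \in S$ with $N_{y,\sigma_i} = N_{s_i,\sigma_i}$ (at most $6n^{2}$ witnesses per $y$, each specifying two equations that hold with probability $1/s^{2}$). A \emph{Type B} witness names a level-3 part and one $s \in S$ (one equation) plus two distinct level-1,2 parts $\tau_1,\tau_2$ and collision partners $z_1,z_2 \in (R\cup S)\setminus\{y\}$ with $N_{y,\tau_i} = N_{z_i,\tau_i}$ (at most $12n(r+n)^{2}$ witnesses per $y$, three equations holding with probability $1/s^{3}$). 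Conditioning on the neighbors of $(R\cup S)\setminus\{y\}$, the event ``$y$ bad'' depends only on $y$'s own random neighbors and has conditional probability at most $p := 6n^{2}/s^{2} + 12n(r+n)^{2}/s^{3}$. The main technical obstacle is handling the dependency between the Type B witnesses for different $y \in R$ (which refer to each other's neighbors via the partners $z_1,z_2$); I would address this by iterated conditioning (peeling one $y$ at a time) together with a careful count of cycles in the equation graph formed by overlapping witnesses, ultimately arriving at $\Pr[R \text{ fully bad w.r.t.\ } S] \le p^{r}$.

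A union bound then gives $\Pr[\text{(P2) fails}] \le \binom{m}{n}\sum_{r > r_0}\binom{m}{r}\, p^{r}$. For $s = 500\sqrt{mn\lg(2m/n)}$, the Type A contribution to each summand is $(O(mn^{2}/(rs^{2})))^{r} = (O(1/\lg^{2}(2m/n)))^{r}$ for $r \ge r_0$, which is super-exponentially small. The Type B contribution is $(O(mnr/s^{3}))^{r}$, which is $\le 2^{-r}$ provided $r \le R_0 := \Theta(s^{3}/(mn)) = \Theta(\sqrt{mn}\,\lg^{3/2}(2m/n))$; one checks that $R_0 \gg r_0$ for our choice of $s$, so in the window $r \in (r_0,R_0]$ the geometric sum is $2^{-\Omega(r_0)}$, which beats $\binom{m}{n} \le (em/n)^{n}$ with room to spare. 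For $r > R_0$, I would argue separately that any fully bad $R$ must lie inside $D_0(S) := \{y \in [m]\setminus S : |\Gamma_{G_3}(y) \cap \Gamma_{G_3}(S)| \ge 1\}$ (since the failure of the first clause of (P2) forces every $y \in R$ to have a level-3 collision with $S$), and a Chernoff bound on $|D_0(S)|$ --- a sum of $m$ independent Bernoullis with mean $\le 4n/s$ --- gives $|D_0(S)| \le 8mn/s \ll R_0$ with probability $1 - \exp(-\Omega(mn/s))$, tight enough to survive the union bound over choices of $S$. Combining the two ranges of $r$ yields $\Pr[\text{(P2) fails}] < 1/2$, which together with the (P1) bound completes the proof.
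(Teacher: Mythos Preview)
Your treatment of (P1) is correct and mirrors the paper's. The substantive gap is in (P2): the product bound $\Pr[R\text{ fully bad w.r.t.\ }S]\le p^{r}$ is asserted but not proved, and the dependency you flag does not dissolve under ``iterated conditioning plus cycle counting'' in any way you have made precise. A Type~B witness for $y$ asserts two equalities of the form $N_{y,\tau}=N_{z,\tau}$ with $z\in(R\cup S)\setminus\{y\}$; when the witnesses for distinct $y,y'\in R$ name each other as partners in the same parts, their equations coincide, so for that pair $\Pr[E_{w}\cap E_{w'}]=s^{-4}$ rather than $s^{-6}=\Pr[E_w]\Pr[E_{w'}]$. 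Hence the natural union bound $\sum_{(w_1,\ldots,w_r)}\Pr\bigl[\bigcap_i E_{w_i}\bigr]$ can exceed $\prod_i\bigl(\sum_{w_i}\Pr[E_{w_i}]\bigr)=p^r$, and no sequential peeling is available either: at the moment you reveal $y$'s neighbours, its designated partners $z_1,z_2$ may be still-unrevealed members of $R$, so you cannot condition on them first.

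The paper sidesteps this by replacing the per-element Type~B constraints with a single global non-expansion event. After fixing the partition $R=R_1\cup R_2$ (where $R_2$ are the elements with at least two level-$3$ collisions with $S$ and $R_1$ those with exactly one), the level-$3$ events for $R_1$ and $R_2$ depend only on $G_3$ and are therefore independent of all level-$1{,}2$ edges. For the level-$1{,}2$ constraint on $R_1$, one observes that if every $y\in R_1$ has at least two of its three level-$1{,}2$ neighbours shared inside $R\cup S$, then $|\Gamma_{G_1\cup G_2}(R\cup S)|\le 3(|S|+|R|)-|R_1|$; this is now handled by a union bound over target sets $L\subseteq A\cup A_0\cup A_1$ of that size, and $\Pr[\Gamma_{G_1\cup G_2}(R\cup S)\subseteq L]$ is a genuine product over independent uniform choices. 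Summing over $|S|$, $|R|=r$, and $|R_1|$ yields a term of size $\bigl(O(mn)/s^{2}\bigr)^{r}$, geometric with ratio $O(1/\lg(2m/n))<1$ for the stated $s$; in particular the sum over all $r>\lceil n\lg(2m/n)\rceil$ already converges, so your separate Chernoff argument for large $r$ via $D_0(S)$ is unnecessary.
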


\begin{proof}[Proof of Lemma~\ref{lm:admissibleImpliesSatisfiable}]
Fix an $(m,s,3)$-graph $G$ that is admissible for sets of size at most $n$.
Thus, $G$ satisfies (P1) and (P2)
above.  Fix a set $S$ of size at most $n$. Suppose ${\cal T}_G$ is
not satisfiable for $S$. Then, there is a minimal set $T \subseteq
[m]\setminus S$ such that ${\cal T}_G$ fails to correctly answer
queries for all $u \in S \cup T$ under every assignment. We have two
cases. 
\begin{description}
\item[$|S \cup T| \leq n+ \ceil{n\lg \frac {2m}{n}}$:] We use an idea from Alon and
  Feige~\cite{AF2009}. From (P1) and Hall's theorem, we may assign
  to each element $u \in S \cup T$ a set $V_u \subseteq \Gamma_G(u)$ such that (i) $|V_u|=5$
  and (ii) the $V_u$'s are disjoint.  It can be verified that in a
  binary decision tree of depth $3$ and any value $b \in \{0,1\}$,
  given any set of FIVE nodes, values can be assigned to those nodes
  to ensure that the tree returns the value $b$. Thus, there is an
  assignment (fixing five bits for each $u \in S \cup T$) so that
  ${\cal T}_G$ returns the correct answer for all $u \in S \cup
  T$---contradicting our choice of $T$.

\item[$|S\cup T| > n+\ceil{(n\lg \frac {2m}{n})}$:] Thus, $T>\ceil{n\lg \frac {2m}{n}}$.
From property (P2), we conclude that there is a $y \in T$ such that
one of the following holds.
\begin{itemize}
\item[(a)] $ \Gamma_{G_3}(y) \cap \Gamma_{G_3}(S) = \emptyset $ or
\item[(b)] $| \Gamma_{G_3}(y) \cap \Gamma_{G_3}(S)|=1 \quad \mbox{AND} \quad  |\Gamma_{G_1 \cup G_2}(y) \cap \Gamma_{G_1 \cup G_2}( (T\cup S) \setminus \{y\})| \leq 1.$
\end{itemize}
By the minimality of $T$, there is an assignment $\sigma \in
\{0,1\}^{V}$ so that ${\cal T}_G$ correctly answers queries for all
elements $u \in S \cup T\setminus \{y\}$. In case (a), modify $\sigma$ so
that all locations in $\Gamma_{G_3}(y)$ (the locations that are probed
in the third step by $T_y$) are $0$. For the new assignment $\sigma'$,
the query for $y$ is clearly answered correctly; the operation of
$T_u$ for $u \in S \cup T\setminus \{y\}$ is identical in $\sigma$ and
$\sigma'$. This again contradicts the choice of $T$.

In case (b), we again start with an assignment $\sigma \in
\{0,1\}^{V}$ so that ${\cal T}_G$ correctly answers queries for all
elements $u \in S \cup T\setminus \{y\}$. Now, to accommodate $y$, we will
modify $\sigma$ to $\sigma'$, by making changes to locations in $V_1$,
$V_2$ and $V_3$. We have exactly one $\ell \in V_3$ such that $\ell
\in \Gamma_{G_3}(y) \cap \Gamma_{G_3}(S)$; in $\sigma'$ all locations
in $\Gamma_{G_3}(y)$ other than $\ell$ are set to $0$. Furthermore, at
least two of the three locations in $\Gamma_{G_1\cup G_2}(y)$ are
outside $\Gamma_{G_1 \cup G_2}( (S\cup T) \setminus \{y\})$; so we may
modify them without affecting the operation of any decision tree
$T_{u}$ for $u \in (S\cup T) \setminus \{y\}$. In $\sigma'$, we assign
these values appropriately so that the third probe of $T_{y}$ is not
$\ell$. 
We have thus ensured that under
assignment $\sigma'$, queries for all $u \in S \cup T$ are answered
correctly---again contradicting the choice of $T$.
\end{description}
\end{proof}
\begin{proof}[Proof of Lemma~\ref{lm:admissibleExists}]
We show that a suitable random $(m,s,3)$-graph $G$ is admissible with
positive probability, for $s = \ceil{500 \sqrt{mn\lg \frac {2m}{n}}}$.
The graph $G$ is constructed as follows. Recall that $V =
\cup_{z \in \{0,1\}^{\leq 3}} A_z$. For each $u \in U$, one neighbor is chosen
uniformly and independently from each $A_z$.
\begin{description}
\item[(P1) holds.]  If (P1) fails, then for some non-empty $W
  \subseteq U$, $(|W| \leq n + \ceil{n \lg \frac {2m}{n}})$, we have
  $|\Gamma_G(W)| \leq 5|W| -1$. Fix a set $W$ of size $r \geq 1$ and
  $L \subseteq V$ of size at most $5r-1$. Let $L$ have $\ell_z$
  elements in $A_z$. Then,
\[
\Pr[\Gamma_G(W) \subseteq L] \leq  \prod_z \left(\frac{\ell_z}{|A_z|}\right)^{r} \leq 
\left(\frac{5r-1}{7s}\right)^{7r}.
\]
If $s \geq 500\sqrt{mn\lg \frac {2m}{n}}$, then we conclude, using the union bound
over choices of $W$ and $L$, that the probability that (P1) fails is
at most
\begin{eqnarray*}
&& \sum_{r=1}^{n+\ceil{n\lg \frac {2m}{n}}} {m\choose r} 
{{7s} \choose 5r-1}\left(\frac{5r-1}{7s}\right)^{7r} \\
&\leq& 
  \sum_{r=1}^{n+\ceil{n\lg \frac {2m}{n}}}
  \left(\frac{em}{r}\right)^r\left(\frac{7es}{5r-1}\right)^{5r-1}
  \left(\frac{5r-1}{7s}\right)^{7r} \\
&\leq& 
\sum_{r=1}^{n+\ceil{n\lg \frac {2m}{n}}}
\left(\frac{5r}{7es}\right)\left(\frac{5^2e^6 mr}{7^2 s^2}\right)^r\\
&\leq & \frac{1}{3} \quad \mbox{(if $s \geq  500 \sqrt{mn \lg \frac {2m}{n}}$)}.
\end{eqnarray*}

\newcommand{\event}{{\cal E}}

\item[(P2) holds.]  For (P2) to fail, there must be disjoint sets $S,
  R \subseteq U,$ where $|S|=n' \leq n$, $|R|=r\geq \ceil{n\lg \frac {2m}{n}}$
  for which the condition specified in Definition~\ref{def:admissible}
  does not hold. Then, $R=R_1 \cup R_2$, where $R_1=\{y \in R:
  |\Gamma_{G_3}(y) \cap \Gamma_{G_3}(S)|=1\}$ and $R_2=\{y \in R:
  |\Gamma_{G_3}(y) \cap \Gamma_{G_3}(S)| \geq 2\}$; let $r_1 = |R_1|$
  and $r_2=|R_2|$.  Furthermore, for $y \in R_1$ we have
  $|\Gamma_{G_1\cup G_2}(y) \cap \Gamma_{G_1 \cup G_2}(R \cup S \setminus\{y\})| \geq
  2$. This implies that $|\Gamma_{G_1\cup G_2}(R \cup S)| \leq
  3(n'+r)-r_1$. Fix $R_1 \subseteq R$, $R_2=R\setminus R_1$ and define
  events $\event_1$, $\event_2$ and $\event_\star$ as follows.
\begin{eqnarray*}
\event_1  & \equiv & \forall y \in R_1: |\Gamma_{G_3}(y) \cap \Gamma_{G_3}(S)|=1;\\
\event_2  & \equiv & \forall y \in R_2: |\Gamma_{G_3}(y) \cap \Gamma_{G_3}(S)|\geq2;\\
\event_\star  & \equiv & |\Gamma_{G_1\cup G_2}(R \cup S)| \leq  3(n'+r)-r_1.
\end{eqnarray*}
By the union bound, the probability that (P2) fails is bounded by the
sum of $\Pr[\event_1]\Pr[\event_2]\Pr[\event_\star]$ taken over all
valid choices of $S$, $R$, $R_1$ and $R_2$. We have
$\Pr[\event_1] \leq  \left(\frac{4n}{s}\right)^{r_1} \quad \mbox{and} \quad
\Pr[\event_2] \leq \left[{4 \choose 2} \left(\frac{n}{s}\right)^2\right]^{r_2} \leq  \left(\frac{3n}{s}\right)^{2r_2}.$ 
To bound $\Pr[\event_\star]$, we proceed as we did above for (P1). We have
\begin{eqnarray*}
\Pr[\event_\star] &\leq&  {3s \choose {3(n'+r)-r_1}} 
\left(\frac{\ell \ell_0 \ell_1}{s^3}\right)^{|S \cup R|} \\
&\leq& \left(\frac{3es}{3(n'+r) - r_1}\right)^{3(n'+r)-r_1} 
      \left(\frac{3(n'+r) - r_1}{3s}\right)^{3(n'+r)} \\
&\leq& \exp(3(n'+r) - r_1) 
 \left(\frac{3(n'+r)-r_1}{3s}\right)^{r_1}.
\end{eqnarray*}
Thus,
\begin{eqnarray*}
&& \Pr[\event_1]\Pr[\event_2]\Pr[\event_\star]\\
&\leq & 
   \left(\frac{4n}{s}\right)^{r_1}  
   \left(\frac{3n}{s}\right)^{2r_2} 
   \exp(3(n'+r) - r_1)
   \left(\frac{3(n'+r)-r_1}{3s}\right)^{r_1}\\
   &\leq& \left(9e^6\right)^r\left(\frac {n^{r+r_2} \left(n+r\right)^{r_1}} {s^{2r}}\right).
\end{eqnarray*}
Using the union bound, we conclude that
\begin{eqnarray*}
&&\Pr[\mbox{(P2) fails}]  \\
&\leq& \sum_{n'=1}^n \sum_{r \geq \ceil{n \lg \frac {2m}{n}}} {m \choose n'}{m \choose r} 
\left(9e^6\right)^r 
  \frac {n^{r}} {s^{2r}}\sum_{r_1=0}^r {r \choose r_1} n^{r_2} {\left(n+r\right)^{r_1}}\\ 
& \leq & \sum_{n'=1}^n \sum_{r \geq \ceil{n \lg \frac {2m}{n}}} \left(\frac {em} {n'}\right)^{n'} \left(\frac {em} {r}\right)^{r} 
\left(9e^6\right)^r 
  \frac {n^{r}} {s^{2r}} \left(2n+r\right)^r\\
&\leq& \sum_{n'=1}^n \sum_{r \geq \ceil{n \lg \frac {2m}{n}}}  
\left[\frac{m^{1+ \frac {n'} {r}} n^{1- \frac{n'}{r}} \left(9e^8\right)\left(2n+r\right)}
{rs^2}\right]^r\\
&\leq&\sum_{n'=1}^n \sum_{r \geq \ceil{n \lg \frac {2m}{n}}}  \left[\frac{3^3e^82mn}{s^2}\right]^r\\
&\leq& \frac{1}{3} \quad \mbox{(for $s \geq \ceil{500\sqrt{mn\lg \frac {2m}{n}}}$ and $m$ large)}.
\end{eqnarray*}
Thus, with probability at least $\frac{1}{3}$ the random graph $G$ is 
admissible.
\end{description}
\end{proof}

\section{Lower bound: Proof of Theorem~\ref{thm:multi-probe-lb}}
\label{sec:general-lb}

Our theorem follows immediately from the following lemma.
\begin{lemma}
Suppose $t \geq 2$, and $s$, $m$ and $n$ are such that (i) $4^t \leq n
\leq m^{\frac{1}{2(t-1)}}$ and (ii) $s \leq \frac{1}{15}
m^{\frac{1}{t-1}(1-\frac{4^{t}}{n})}$.  If there is a $t$-probe scheme
on a universe of size $m$ that uses space at most $s$, then there are
disjoint sets $S$ and $T$ of size at most $n$ each such that for every
assignment to the memory, some query in $S$ is answered with a `No' or
some query in $T$ is answered with a `Yes'.
\end{lemma}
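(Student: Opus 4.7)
The plan is to prove the lemma by induction on $t$, taking $t=2$ as the base case. The driving idea is that any $t$-probe query scheme decomposes, based on the value read at a single chosen bit, into two $(t-1)$-probe sub-schemes, so a witness pair $(S,T)$ that breaks \emph{both} sub-schemes also breaks the original scheme.

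For the base case $t=2$, the hypothesis is $s \le \tfrac{1}{15}\,m^{1-16/n}$ with $n\ge 16$; since $1 - 16/n \le 1 - 1/\lfloor n/4\rfloor$ and $\tfrac{1}{15}<D$ (for large $m$), this is strictly stronger than the hypothesis contradicted in the proof of Theorem~\ref{thm:two-probe}(b). I will invoke that proof, not merely its statement. It locates in the pseudo-graph $G_\Phi$ two edge-disjoint cycles $C_1,C_2$ of length at most $\lfloor n/2\rfloor$ sharing a vertex $v$, and the Forcing Lemma applied with $b=0$ to $C_1$ and with $b=1$ to $C_2$ yields four sets $S_0,S_1 \subseteq \lab(C_1)$ and $T_0,T_1 \subseteq \lab(C_2)$ which are pairwise disjoint because each element of $[m]$ labels at most one pseudo-edge (so edge-disjoint cycles have disjoint $\lab$-sets). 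Setting $S := S_1 \cup T_1$ and $T := S_0 \cup T_0$ produces disjoint sets of size at most $n$, and any memory $\pi$ answering Yes on $S$ and No on $T$ would force the bit at $v$ to be simultaneously $0$ and $1$.

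For the inductive step, given the $t$-probe query scheme $\{T_u\}_{u\in[m]}$ using space $s$, pigeonhole on the root-probe location: some $\ell^*\in[s]$ is probed first by a set $U^*\subseteq[m]$ with $|U^*|\ge m/s$. Partition $U^* = U_0\sqcup U_1$ arbitrarily with $|U_b|\ge m/(2s)$, and for $u\in U_b$ let $T_u^b$ be the depth-$(t-1)$ subtree hanging off the $b$-child of the root of $T_u$. Then $\{T_u^b\}_{u\in U_b}$ is a $(t-1)$-probe query scheme on universe $U_b$ with space $s$. Apply the induction hypothesis to each, with parameters $m_b = |U_b|$, $n' = \lfloor n/2\rfloor$, and space $s$, obtaining disjoint witness pairs $(S_b,T_b)\subseteq U_b\times U_b$ of size at most $n'$; since $U_0\cap U_1=\emptyset$, the union $(S,T) := (S_0\cup S_1,\, T_0\cup T_1)$ is disjoint and of size at most $n$. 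For any memory $\pi$ of the original scheme, setting $b=\pi[\ell^*]$ makes $T_u(\pi)=T_u^b(\pi)$ for every $u\in U_b$, so the failure of $\pi$ on $(S_b,T_b)$ is inherited as failure on $(S,T)$.

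The main obstacle is verifying that the induction hypothesis numerically applies to the sub-schemes, specifically the space condition $s \le \tfrac{1}{15}\,m_b^{\frac{1}{t-2}(1 - 4^{t-1}/n')}$ with $m_b=m/(2s)$ and $n'=n/2$; the size conditions $4^{t-1}\le n'\le m_b^{1/(2(t-2))}$ follow from $n\ge 4^t$, $n\le m^{1/(2(t-1))}$, and the bound $s \le m^{1/(t-1)}/15$ by a short computation. Writing $x := 4^t/n\in(0,1]$ so that $4^{t-1}/n' = x/2$, the space condition reduces, after taking logarithms and substituting the bound $\lg(2s)\le \lg(2/15)+\tfrac{1-x}{t-1}\lg m$ from the hypothesis, to
\[
\lg(2/15)\,(1-x/2)\;+\;\tfrac{x(x-t)}{2(t-1)}\,\lg m\;\le\;0,
\]
which holds because both terms are non-positive (the first since $\lg(2/15)<0$ and $x<2$, the second since $x\le 1\le t$). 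The constant $\tfrac{1}{15}$ survives every inductive step precisely because of the strictly negative slack $\lg(2/15)$.
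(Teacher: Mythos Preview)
Your proposal is correct and follows essentially the same inductive approach as the paper: base case via the two-probe forcing argument, and inductive step by pigeonholing on the first-probe location to obtain two $(t-1)$-probe sub-schemes. The only cosmetic difference is how disjointness of the four witness sets is arranged---you partition $U^{*}$ into $U_0\sqcup U_1$ up front, whereas the paper applies induction first to all of $U'$ (with the cell fixed to $0$) and then to $U'\setminus(S_0\cup T_0)$ (with the cell fixed to $1$); both routes lead to the same working bound $m'\ge m/(2s)$ and the same numerical verification.
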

\begin{proof}
For $t=2$, the claim is established in the proof of
Theorem~\ref{thm:two-probe} (see the last line of the proof). We will
use induction on $t$ to generalize this claim to larger values of
$t$. Assume the claim is true for $t=k-1$ and we wish to show that it
holds for $t=k$. Fix $m$, $n$ and a $k$-probe scheme that satisfy our
assumptions.  We now show how the sets $S$ and $T$ are obtained.
There is a cell to which at least $\frac{m}{s}$ of the elements make
their first probe: call this set of elements $U'$. By fixing the value
of this cell at $0$, we obtain a $(k-1)$-probe scheme for the universe
$U'$. We will verify that the assumptions needed for induction are
satisfied for this scheme. We conclude by induction that there are
disjoint sets $S_0, T_0 \subseteq U'$ each of size at most
$\frac{n}{2}$. Let $U''= U' \setminus (S_0 \cup T_0)$. Now, assume
that the cell has value $1$, and apply induction to the resulting
$(k-1)$-probe scheme (for the universe $U''$) to obtain sets $S_1$ and
$T_1$ of size at most $\frac{n}{2}$. Our claim for $t=k$ then follows
immediately by taking $S=S_0 \cup S_1$ and $T = T_0 \cup T_1$.

It remains to verify that the assumptions (i) and (ii) needed for the
induction hypothesis in fact do hold. Since $|U'| \geq |U''|$, it is
enough to verify the conditions for $U''$.  Now $|U'| \geq \frac{m}{s}
\geq 15 m^{\frac{k-2}{k-1}} \geq 15 n^{2(k-2)} \geq 2n$.  Thus, $m'=
|U''|\geq |U'| -n \geq \frac{m}{2s}$.  We need to find sets of size at
most $n'=\floor{\frac{n}{2}}$.  Clearly $n' \geq \frac{n}{4} =
4^{k-1}$, so condition (i) holds. Also, since $m' \geq \frac{m}{2s}$
and $n' \geq \frac{n}{4}$, we have $s \leq
\frac{1}{15}m'^{\frac{1}{k-2}(1-\frac{4^{k-1}}{n'})}$, so condition
(ii) holds.
\end{proof}

\section{General upper bound: non-adaptive (Theorem~\ref{thm:generalnonadaptive})}
\label{sec:multi-probe-ub}

\begin{definition}
A non-adaptive $(m,s,t)$-graph is a bipartite graph $G$ with vertex
sets $U=[m]$ and $V$ ($|V|=ts$). $V$ is partitioned into $t$ disjoint
sets: $V_1,\ldots,V_t$; each $V_i$ has $s$ vertices. Every 
$u\in U$ has a unique neighbour in each $V_i$. A non-adaptive
$(m,s,t)$-graph naturally gives rise to a non-adaptive $(m, ts, t)$-query
scheme ${\cal T}_G$ as follows. We view the memory (an array $L$ 
of $ts$ bits) to be indexed by vertices in $V$. On receiving the query
``Is $u$ in $S$?'', we answer ``Yes'' iff the Majority of the locations in the 
neighbourhood of $u$ contain a $1$. We say that the query scheme ${\cal T}_G$ 
is satisfiable for a set  $S\subseteq[m]$, if there is an assignment to the memory locations 
$(L[v] : v \in V)$, such that ${\cal T}_G$ correctly answers all 
queries of the form ``Is $x$ in $S$?''.
\end{definition}

We now restrict attention to odd $t\geq5$. First, we identify an
appropriate property of the underlying non-adaptive $(m,s,t)$-graph
$G$ that guarantees that ${\cal T}_G$ is satisfiable for all sets $S$
of size at most $n$.  We then show that such a graph exists for some
$s=O(m^{\frac{2}{t-1}} n^{1-\frac{2}{t-1}}\lg \frac {2m}{n})$.

\begin{definition}[Non-adaptive admissible graph] \label{def:t-admissible}
We say that a non-adaptive $(m,s,t)$-graph $G$ is admissible for sets of
size at most $n$ if the following two properties hold:
\begin{enumerate}
	\item[(P1)] $\forall R \subseteq [m]$ ($|R|\leq n + \ceil{2n \lg \frac {2m}{n}}$):
                $|\Gamma_{G}(R)| \geq \frac {t+1} {2} |R|$, where $\Gamma_G(R)$
                is the set of neighbors of $R$ in $G$.
        \item[(P2)] $\forall S \subseteq [m]$ ($|S| = n$): $|T_S| \leq \ceil{2n
                \lg \frac {2m}{n}}$, where $T_S = \{ y \in [m] \setminus S : |\Gamma_G(y)
                        \cap \Gamma_G(S)| \geq \frac {t+1}{2} \}$.
\end{enumerate}
\end{definition}
Our theorem will follow from the following claims.
\begin{lemma} \label{lm:t-admissibleImpliesSatisfiable}
If a non-adaptive $(m,s,t)$-graph $G$ is admissible for sets of size
at most $n$, then the non-adaptive $(m, ts, t)$-query scheme ${\cal
  T}_{G}$ is satisfiable for every set $S$ of size at most $n$.
\end{lemma}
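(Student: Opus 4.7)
The plan is to mirror the structure of the proof of Lemma~\ref{lm:admissibleImpliesSatisfiable}. Fix $S\subseteq[m]$ with $|S|\leq n$, assume for contradiction that ${\cal T}_G$ is not satisfiable for $S$, and let $T\subseteq [m]\setminus S$ be a minimal set such that no assignment answers all queries in $S\cup T$ correctly. I then split on the size of $T$ using the threshold that appears in both (P1) and (P2), namely $\lceil 2n\lg\frac{2m}{n}\rceil$.

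In the small case, $|S\cup T|\leq n+\lceil 2n\lg\frac{2m}{n}\rceil$, I will invoke (P1) and Hall's theorem exactly as in Alon--Feige: the expansion condition $|\Gamma_G(R)|\geq \frac{t+1}{2}|R|$ for all such $R$ gives, via Hall applied to the bipartite graph where each $u$ is duplicated $\frac{t+1}{2}$ times, a collection of pairwise disjoint sets $V_u\subseteq \Gamma_G(u)$ of size $\frac{t+1}{2}$ indexed by $u\in S\cup T$. Since $t$ is odd, fixing $\frac{t+1}{2}$ out of the $t$ neighbours of $u$ forces the value of the Majority gate at $u$. So I assign $1$ to every location in $V_u$ for $u\in S$ and $0$ to every location in $V_u$ for $u\in T$, and fill the remaining locations arbitrarily. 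This yields a correct assignment for all of $S\cup T$, contradicting minimality.

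In the large case, $|T|>\lceil 2n\lg\frac{2m}{n}\rceil$. By (P2) applied to $S$, the set $T_S$ of universe elements with $\tfrac{t+1}{2}$ or more neighbours inside $\Gamma_G(S)$ has size at most $\lceil 2n\lg\frac{2m}{n}\rceil<|T|$, so there exists $y\in T\setminus T_S$. For this $y$, $|\Gamma_G(y)\cap \Gamma_G(S)|\leq \frac{t-1}{2}$, so $|\Gamma_G(y)\setminus \Gamma_G(S)|\geq \frac{t+1}{2}$. By the minimality of $T$, there is an assignment $\sigma$ correct on $S\cup(T\setminus\{y\})$. Form $\sigma'$ by setting every location in $\Gamma_G(y)\setminus \Gamma_G(S)$ to $0$. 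These changes are outside $\Gamma_G(S)$ and hence do not affect the Majority at any $u\in S$; and for any $z\in T\setminus\{y\}$, flipping ones to zeros cannot change a Majority that was already $0$ under $\sigma$ into a $1$, so correctness is preserved for $z$. Finally, $y$ now has at least $\frac{t+1}{2}$ of its $t$ neighbours equal to $0$, forcing Majority $=0$, as required for $y\in T$. Thus $\sigma'$ is correct on $S\cup T$, again contradicting the choice of $T$.

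The main obstacle, as in the three-probe argument, is the large case: one has to certify that (i) a suitable $y$ exists at all, and (ii) overwriting a fraction of its neighbourhood does not break the assignment for other elements of $S\cup T$. Point (i) is handled by the counting comparison $|T|>|T_S|$ built into (P2), and (ii) is handled by the monotone nature of the $1\to 0$ modification together with the separation $\Gamma_G(y)\setminus\Gamma_G(S)$ guaranteed by the bound on $|\Gamma_G(y)\cap\Gamma_G(S)|$. Everything else is essentially bookkeeping.
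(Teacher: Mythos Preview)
Your argument works, but it takes a more elaborate route than the paper. The paper does not argue by contradiction or introduce a minimal $T$; it gives the assignment directly. It first pads $S$ to some $S'$ of size exactly $n$, applies (P2) once to bound $|T_{S'}|$, and then applies (P1) together with Hall's theorem to the single set $S' \cup T_{S'}$ to obtain pairwise disjoint blocks $A_u \subseteq \Gamma_G(u)$ of size $\frac{t+1}{2}$. It sets each block to $1$ for $u \in S$ and to $0$ for $u \in (S' \cup T_{S'}) \setminus S$, and puts $0$ at all remaining locations; any $y \notin S' \cup T_{S'}$ then has at most $\frac{t-1}{2}$ neighbours inside $\Gamma_G(S') \supseteq \Gamma_G(S)$, so Majority returns $0$ automatically. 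Your peeling step in the large case---exploiting that Majority is monotone under $1 \to 0$ flips outside $\Gamma_G(S)$---is a correct and pleasant observation, but it is not needed here: (P2) already bounds the number of ``bad'' elements globally, so all of them can be absorbed into a single Hall step rather than peeled off one by one. One small caveat: (P2) as stated applies only when $|S|=n$, so your line ``by (P2) applied to $S$'' requires the same padding to a superset $S'$ of size $n$ that the paper performs; once you do that (choosing $S'\setminus S$ disjoint from $T$), your argument goes through as written.
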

\begin{lemma} \label{lm:t-admissibleExists}
There is a non-adaptive $(m,s,t)$-graph, with $s=O(m^{\frac{2}{t-1}}
n^{1-\frac{2}{t-1}}\lg \frac {2m}{n})$, that is admissible for every set
$S\subseteq [m]$ of size at most $n$. 
\end{lemma}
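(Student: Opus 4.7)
The plan is to prove this lemma by the probabilistic method, directly mirroring the proof of Lemma~\ref{lm:admissibleExists}. Sample a random non-adaptive $(m,s,t)$-graph $G$ by choosing, for each $u \in U$ and each $i \in \{1,\ldots,t\}$, an independent uniformly random neighbor in $V_i$. With $s = C\, m^{2/(t-1)} n^{1-2/(t-1)} \ceil{\lg \frac{2m}{n}}$ for a sufficiently large constant $C=C(t)$, I will show that each of (P1) and (P2) fails with probability at most $\tfrac13$, giving a nonzero probability that the random graph is admissible.

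For (P1), I would union-bound over pairs $(R, L)$ with $|R| = r \le n + \ceil{2n \lg \frac{2m}{n}}$ and $|L| = \ceil{(t+1)r/2} - 1$. The $tr$ random edges from $R$ into $V$ are independent, and each lands in $L$ with probability $|L|/(ts)$, so $\Pr[\Gamma_G(R) \subseteq L] \le (|L|/(ts))^{tr}$. Combining with $\binom{m}{r}\binom{ts}{|L|}$ and using standard $(eN/k)^k$-style estimates for the binomials, the resulting sum over $r$ is geometric-type, dominated by its first terms, and for the chosen $s$ is bounded by $\tfrac13$. This step is the routine analog of the (P1) bound in Lemma~\ref{lm:admissibleExists}.

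For (P2), the key observation is independence across $y \notin S$. Fix $S$ with $|S| = n$. Conditional on the edges from $S$ (which determine $\Gamma_G(S)$), the random neighbors of different elements of $[m]\setminus S$ are independent, and since $|\Gamma_G(S) \cap V_i| \le n$, each such $y$ independently satisfies
\[ \Pr[y \in T_S \mid \Gamma_G(S)] \;\le\; \binom{t}{(t+1)/2}\left(\frac{n}{s}\right)^{(t+1)/2} \;\le\; 2^t\,(n/s)^{(t+1)/2}. \]
Writing $k = \ceil{2n\lg \frac{2m}{n}}$, conditional independence and stochastic domination by a Binomial random variable give $\Pr[|T_S| > k] \le \binom{m}{k}\bigl[2^t(n/s)^{(t+1)/2}\bigr]^k$. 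Union-bounding over the choice of $S$, the failure probability for (P2) is at most
\[ \binom{m}{n}\binom{m}{k}\bigl[2^t(n/s)^{(t+1)/2}\bigr]^k. \]
The trick to absorb the $\binom{m}{n}$ factor is the inequality $\binom{m}{n} \le (em/n)^n \le (2m/n)^k$, valid because $k \ge 2n \lg(2m/n)$; this folds everything into a single $k$-th power, and after substituting $s$ an elementary calculation shows the bound is below $\tfrac13$ when $C$ is large enough.

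The main obstacle will be (P2): the union bound over the $\binom{m}{n}$ choices of $S$ is expensive and is precisely what forces the extra $\lg \frac{2m}{n}$ factor in $s$. The two ingredients making it work are (a) the conditional independence of the events $\{y \in T_S\}$ across $y$, and (b) the logarithmic slack in $k$ that allows trading the $\binom{m}{n}$ factor for an extra $(2m/n)^k$. Once these are in place, the remaining estimates are routine and parallel to those in the $t=3$ case.
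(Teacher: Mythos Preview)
Your overall plan matches the paper's: sample the random non-adaptive $(m,s,t)$-graph and union-bound the failure of (P1) and (P2) separately. The (P1) sketch is essentially right, modulo one imprecision: an edge from $u$ into $V_i$ lands in $L$ with probability $|L\cap V_i|/s$, not $|L|/(ts)$; the bound you quote is what one obtains only after applying AM--GM to the product $\prod_i(|L\cap V_i|/s)^r$, exactly as the paper does.

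The genuine gap is in your (P2) bookkeeping. The inequality $\binom{m}{n}\le(2m/n)^k$ is valid, but folding it into the $k$-th power is far too lossy for $t\ge 5$. After substituting $k=\ceil{2n\lg\frac{2m}{n}}$ and $s=C\,m^{2/(t-1)}n^{1-2/(t-1)}\lg\frac{2m}{n}$, the base of your $k$-th power is, up to constants depending only on $t$ and $C$,
\[
\frac{2m}{n}\cdot\frac{em}{k}\cdot 2^{t}\Bigl(\frac{n}{s}\Bigr)^{(t+1)/2}
\;\asymp\;
\frac{(m/n)^{(t-3)/(t-1)}}{C^{(t+1)/2}\bigl(\lg\tfrac{2m}{n}\bigr)^{(t+3)/2}},
\]
which for every fixed $t\ge 5$ tends to infinity as $m/n\to\infty$, no matter how large you take $C=C(t)$. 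So the ``elementary calculation'' you promise cannot close. The paper sidesteps this by \emph{not} folding $\binom{m}{n}$ into the $k$-th power: it first checks that for the chosen $s$ one has $\Pr[y\in T_S]\le n/(10m)$, hence $\E[|T_S|]\le n/10$, and then applies a Chernoff bound to get $\Pr[|T_S|>k]\le 2^{-k}$; the union bound over $S$ is then simply $(em/n)^n\cdot 2^{-k}=(em/n)^n(n/(2m))^{2n}\le 1/3$. Your binomial-tail estimate $\binom{m}{k}p^k\le(emp/k)^k$ is in fact strong enough to yield the same $2^{-k}$ once you use $mp\le n/10$; the error is purely in how you combine it with $\binom{m}{n}$.
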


\begin{proof}[Proof of Lemma~\ref{lm:t-admissibleImpliesSatisfiable}]
Fix an admissible graph $G$. 
Thus, $G$ satisfies (P1) and (P2)
above. Fix a set $S \subseteq [m]$ of size at most $n$.  We will show
that there is a 0-1 assignment to the memory such that all queries are
answered correctly by ${\cal T}_G$.

Let $S' \subseteq [m]$ be such that $S \subseteq S'$ and $|S'| =
n$. From (P2), we know $|T_{S'}| \leq \ceil{2n \lg \frac {2m}{n}}$. Hence, $|S'
\cup T_{S'}| \leq n + \ceil{2n \lg \frac {2m}{n}}$. From (P1) and Hall's theorem,
we may assign to each element $u \in S' \cup T_{S'}$ a set $A_u
\subseteq V$ such that (i) $|A_u| = \frac {t+1}{2}$ and (ii) the
$A_u$'s are disjoint.  For each $u\in S \subseteq S'$, we assign the
value 1 to all locations in $A_u$.  For each $u\in (S' \cup T_{S'})
\setminus S$, we assign the value 0 to all locations in $A_u$. Since
$\frac {t+1}{2} > \frac t 2$, all queries for $u\in S' \cup T_{S'}$ are
answered correctly.

Assign 0 to all locations in $\Gamma_G([m]
\setminus (S' \cup T_{S'}))$. For $y\in [m]
\setminus (S' \cup T_{S'})$, $|\Gamma_G(y) \cap \Gamma_G(S)|\leq \frac {t-1} {2}$.
As a result, queries for elements in $[m] \setminus 
(S' \cup T_{S'})$ are answered correctly, as the majority evaluates to 0 for each 
one of them. 
\end{proof}

\begin{proof}[Proof of lemma~\ref{lm:t-admissibleExists}]
In the following, set 
\[ s = \ceil{60 m^{\frac{2}{t-1}} n^{1-\frac{2}{t-1}}\lg \frac {2m}{n}}.\]
We show that a suitable random non-adaptive $(m,s,t)$-graph $G$ is
admissible for sets of size at most $n$ with positive probability. The
graph $G$ is constructed as follows. Recall that $V = \bigcup_i
V_i$. For each $u \in U$, one neighbor is chosen uniformly and
independently in each $V_i$.
\begin{description}
\item[(P1) holds.] If (P1) fails, then for some non-empty $W \subseteq
  U$, $(|W| \leq n + \ceil{2n \lg \frac {2m}{n}})$, we have $|\Gamma_G(W)| \leq
  \frac {t+1}{2}|W| - 1$. Fix a set $W$ of size $r \geq 1$ and $L
  \subseteq V$ of size $\frac{t+1}{2}r - 1$. Let $L$ have
  $\ell_i$ elements in $V_i$; thus, $\sum_i \ell_i =
  \frac{t+1}{2}r - 1$. Then,
\[
\Pr[\Gamma_G(W) \subseteq L] \leq \prod_{i=1}^t
\left(\frac{\ell_i}{|V_i|}\right)^{r} \leq
\left(\frac{(\frac{t+1}{2})r-1}{ts}\right)^{tr},
\]
where the last inequality is a consequence of GM $\leq$ AM. We conclude, using the union bound over choices of $W$ and $L$, that (P1) fails with probability 
at most
\begin{align}
& \sum_{r=1}^{n+\ceil{2n\lg \frac {2m}{n}}} {m\choose r} 
{{ts} \choose \frac{t+1}{2}r-1}\left(\frac{\frac{t+1}{2}r-1}{ts}\right)^{tr}\\
&\leq 
  \sum_{r=1}^{n+\ceil{2n\lg \frac {2m}{n}}}
  \left(\frac{em}{r}\right)^r\left(\frac{tes}{\frac{t+1}{2}r-1}\right)^
  {\frac{t+1}{2}r-1}  
 \left(\frac{\frac{t+1}{2}r-1}{ts}\right)^{tr} \nonumber \\
&\leq
\sum_{r=1}^{n+\ceil{2n\lg\frac {2m}{n}}}
\left[\frac{(e^{\frac{t+3}2 - \frac 1 r}) m  r^{\frac{t-1}2-1 + \frac{1}{r}} }{ 
(s^{\frac 1 r})s^{\frac{t-1}2}} \right]^r \leq \frac{1}{3}, 
\end{align}
where the last inequality holds because we have chosen $s$ large enough.

\item[(P2) holds.] For (P2) to fail, there must exist a set $S\subseteq [m]$ 
of size $n$ such that $|T_S| > \ceil{2n \lg \frac {2m}{n}}$. Fix a set $S$ of size $n$.
Fix a $y \in [m] \setminus S$.
\[
\Pr[y \in T_S]\leq {t \choose \frac {t+1}2}\left(\frac n s\right)^{\frac {t+1}2}
\leq \frac {n} {10m},
\]
where the last inequality holds because of choice of $s$ and $m$ is
large. Thus, $\E[|T_S|]  \leq \frac {n}{10}$.
To conclude that $|T_S|$ is bounded with high probability, we will use the following version of Chernoff bound: if $X = \sum_{i=1}^N X_i$, where each random variable $X_i \in \{0,1\}$ independently, then if $\gamma > 2e \E[X]$, then  $\Pr[X > \gamma] \leq 2^{-\gamma}$. Then,  for all large $m$, 
\[ \Pr[ |T_S| > 2n \lg \frac {2m}{n}] \leq 2^{- 2n \lg \frac {2m}{n}}. \]
Using the union bound, we conclude that
\begin{eqnarray*}
&&\Pr[\mbox{(P2) fails}]\\
&\leq& \left( \frac {em} n \right)^n 2^{-2n \lg \frac {2m}{n}}\\
&\leq& \frac 1 3.
\end{eqnarray*}
Thus, with probability at least $\frac 1 3$ the random graph $G$ is admissible.
\end{description}
\end{proof}


\section{General upper bound: adaptive (Theorem~\ref{thm:generaladaptive})}
\label{sec:multi-probe-ub-adaptive}
\newcommand{\leaves}{\mathsf{leaves}}

In order to show that $s(m,n,t)$ is small, we will exhibit efficient
adaptive schemes to store sets of size {\em exactly}
$n$. This will
imply our bound (where we allow sets of size {\em at most} $n$)
because we may {\em pad} the universe with $n$ additional elements,
and extend $S$ ($|S| \leq n$)by adding $n-|S|$ additional elements, to
get a subset is of size exactly $n$ in a universe of size $m+n \leq
2m$.
\begin{definition} 
An adaptive $(m,s,t)$-graph is a bipartite graph $G$ with vertex sets $U=[m]$
and $V$ ($|V|=(2^t-1)s$). $V$ is partitioned into $2^t-1$ disjoint sets:
$A$, $A_0$, $A_1$, $A_{00}$,\ldots, that is, one $A_{\sigma}$ for each $\sigma
\in \{0,1\}^{\leq (t-1)}$; each $A_{\sigma}$ has $s$ vertices. Between
each $u \in U$ and each $A_{\sigma}$ there is exactly one edge. 
Let $V_i:=\cup_{\sigma: |\sigma|=i-1} A_\sigma$.
An $(m,s,t)$-graph naturally gives rise to a systematic
$(m,(2^t-1)s, t)$-query scheme ${\cal T}_G$ as follows. We view the
memory (an array $L$ of $(2^t-1)s$ bits) to be indexed by vertices in
$V$. For query element $u\in U$, if the first ${i-1}$ probes resulted
in values $\sigma \in \{0,1\}^{i-1}$, then the $i$-th probe is made to
the location indexed by the unique neighbor of $u$ in $A_{\sigma}$. In
particular, the $i$-th probe is made at a location in $V_i$. 
We answer ``Yes'' iff the last bit read is $1$.
We refer to $V_t$ as the leaves of $G$ and for $y \in [m]$, let  
$\leaves(y):=V_t \cap \Gamma_G(y)$. For $R \subseteq [m]$, let
$\leaves(R):=V_t \cap \Gamma_G(R)$.

We say that the query scheme ${\cal T}_{G}$ is satisfiable for a set
$S \subseteq [m]$, if there is an assignment to the memory locations
$(L[v]: v \in V)$, such that ${\cal T}_{G}$ correctly answers all
queries of the form ``Is $x$ in $S$?''.
\end{definition}

We assume that $t \geq 3$ is odd and show that $\forall \epsilon >0$ 
$\forall n \leq m^{1-\epsilon}$ $\forall t \leq \frac{1}{10}\lg\lg m$
$s(m,n,t)=O(\exp(e^{2t})m^{\frac 2 {t+1}} n^{1 - \frac 2 {t+1}} \lg m)$. 
Our $t$-probe scheme will
have two parts: a $t_1$-probe non-adaptive part and a $t_2$-probe
adaptive part, such that $t_1 + t_2 = t$. The respective parts will be
based on appropriate non-adaptive $(m,s,t_1)$-graph $G_1$ and adaptive
$(m,s,t_2)$-graph $G_2$ respectively. To decide set membership, we
check set membership in the two parts separately and take the AND,
that is, we answer ``Yes'' iff all bits read in ${\cal T}_{G_1}$ are $1$
and the last bit read in ${\cal T}_{G_2}$ is $1$.  We refer to this
scheme as ${\cal T}_{G_1} \wedge {\cal T}_{G_2}$.

First, we identify appropriate properties of the underlying graphs
$G_1$ and $G_2$ that guarantee that all queries are answered correctly
for sets of size $n$.  We then show that such graphs exist with
$s=O(\exp(e^{2t}-t)m^{\frac 2 {t+1}} n^{1 - \frac 2 {t+1}} \lg m)$.

\newcommand{\surv}{\mathsf{survivors}} We will use the following
constants in our calculations: $\alpha:= 2^{t_2}-1$ and $\beta:=
2^{t_2} - t_2$.  Note that $\alpha$ is the total number of nodes in a
$t_2$-probe adaptive decision tree. In any such decision tree, for
every choice of $\beta$ nodes and every choice $b \in \{0,1\}$ of the
answer, it is possible to assign values to those $\beta$ nodes so that
the decision tree returns the answer $b$.

\begin{definition}[admissible-pair] \label{def:admissible-pair}
We say that a non-adaptive $(m,s,t_1)$-graph $G_1$ and an adaptive
$(m,s,t_2)$-graph $G_2$ form an admissible pair $(G_1,G_2)$ for
sets of size $n$ if the following conditions hold.
\begin{enumerate}
\item[(P1)]
 $\forall S \subseteq [m]$ ($|S| = n$):
$|{\surv}(S)| \leq 10 m \left( \frac n s \right)^{t_1}$,
where ${\surv}(S) = \{y \notin S: \Gamma_{G_1}({y})\subseteq \Gamma_{G_1}(S)\}$.
\item[(P2)] For $S \subseteq [m]$ ($|S| = n$), let 
${\surv}^+(S) = \{y \in {\surv}(S): 
{\leaves_{G_2}}(S) \cap {\leaves_{G_2}}(y) \neq \emptyset \}$. 
Then, 
$\forall S \subseteq [m]$ ($|S| = n$) 
$\forall T \subseteq S \cup {\surv}^+(S)$:
$\Gamma_{G_2}(T) \geq \beta |T|$.
\end{enumerate}
\end{definition}


\begin{lemma} \label{lm:t1t2-admissibleImpliesSatisfiable}
If a non-adaptive $(m,s,t_1)$-graph $G_1$ and an adaptive $(m,s,t_2)$-graph 
$G_2$ form an admissible pair for sets of size $n$, then 
the query scheme ${\cal T}_{G_1} \wedge {\cal T}_{G_2}$ is satisfiable 
for every set $S\subseteq [m]$ of size $n$.
\end{lemma}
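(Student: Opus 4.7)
I would configure the two memory arrays independently. For $G_1$, set every bit in $\Gamma_{G_1}(S)$ to $1$ and every remaining bit to $0$. Then ${\cal T}_{G_1}(u)$ reads only $1$'s if and only if $\Gamma_{G_1}(u)\subseteq \Gamma_{G_1}(S)$, i.e., if and only if $u\in S\cup\surv(S)$. Hence every element outside $S\cup\surv(S)$ is already correctly rejected by the combined AND scheme through its $G_1$ component, and the remaining task is to configure $G_2$ so that ${\cal T}_{G_2}$ returns Yes on $S$ and No on $\surv(S)$.

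For $G_2$, the natural strategy is to invoke the $\beta$-matching version of Hall's theorem on the hypothesis (P2): since $|\Gamma_{G_2}(T)|\ge\beta|T|$ for every $T\subseteq S\cup\surv^+(S)$, one obtains pairwise disjoint sets $A_u\subseteq \Gamma_{G_2}(u)$ of size $\beta=2^{t_2}-t_2$, one for each $u\in S\cup\surv^+(S)$. Using the decision-tree fact recalled just before Definition~\ref{def:admissible-pair}---that any $\beta$ chosen nodes of a $t_2$-probe tree suffice to force the desired Boolean output, regardless of the values on the remaining $t_2-1$ nodes---I would set the bits of each $A_u$ to force $T_u$ to output Yes for $u\in S$ and No for $u\in\surv^+(S)$. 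Disjointness of the $A_u$'s guarantees these choices do not interfere with one another on the bits of $\bigcup_{u} A_u$, and the remaining memory cells are assigned by a simple default rule: $1$ if the cell is a leaf in $\leaves_{G_2}(S)$ and $0$ otherwise.

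The main obstacle is verifying correctness for the class $u\in\surv(S)\setminus\surv^+(S)$. By definition every leaf of $T_u$ lies outside $\leaves_{G_2}(S)$; a leaf of $u$ lying outside every $A_{u'}$ is $0$ by default, and no leaf of $u$ can lie in any $A_{u'}$ with $u'\in S$ (since such an $A_{u'}$ meets the leaf level only inside $\leaves_{G_2}(u')\subseteq \leaves_{G_2}(S)$, which is disjoint from $\leaves_{G_2}(u)$). The dangerous case is a leaf $\ell\in \leaves_{G_2}(u)\cap A_{u'}$ with $u'\in\surv^+(S)$, which the generic forcing step could a priori set to $1$. I would eliminate this risk by refining the forcing step: when forcing $T_{u'}$ for $u'\in\surv^+(S)$, insist that every leaf of $A_{u'}$ be assigned its ``honest'' default ($1$ if in $\leaves_{G_2}(S)$ and $0$ otherwise), and use only the internal-node portion of $A_{u'}$ to route $T_{u'}$ to a $0$-leaf of $u'$ (such a leaf exists because (P2) applied to $T=S\cup\{u'\}$ forces $u'$ to have more neighbors outside $\Gamma_{G_2}(S)$ than there are internal nodes in its tree). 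The technical heart is proving that this constrained forcing is always realizable---a structural strengthening of the bare $\beta$-node fact, possibly combined with a more careful Hall-type argument that lets us choose each $A_{u'}$ to contain enough internal nodes to direct the desired path.
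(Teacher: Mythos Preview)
Your overall strategy coincides with the paper's: set $\Gamma_{G_1}(S)$ to $1$ and the rest of $G_1$'s memory to $0$, so that only elements of $S\cup\surv(S)$ survive the first stage; then use (P2) and Hall's theorem to obtain disjoint $\beta$-sets $A_u\subseteq\Gamma_{G_2}(u)$ for each $u\in S\cup\surv^+(S)$, invoke the $\beta$-node forcing fact to make $T_u$ output the correct bit, and default the remaining cells.

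Where you diverge is in the treatment of $y\in\surv(S)\setminus\surv^+(S)$, and there you are making the problem much harder than it is. The ``dangerous case'' you isolate---a leaf $\ell\in\leaves_{G_2}(y)\cap A_{u'}$ with $u'\in\surv^+(S)$ possibly being set to $1$---is in fact harmless, because whenever the $\beta$-node fact is used to force output $0$, one may without loss of generality assign $0$ to every \emph{leaf} among the $\beta$ controlled nodes. Indeed, if some assignment to $A_{u'}$ forces $T_{u'}$ to return $0$ regardless of the uncontrolled nodes, then changing every leaf in $A_{u'}$ to $0$ still forces output $0$: the internal-node values are unchanged, so on every run the same leaf is reached, and that leaf now carries $0$ if controlled and is unchanged otherwise. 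Consequently every leaf lying in some $A_{u'}$ with $u'\in\surv^+(S)$ is assigned $0$. Combined with your own (correct) observation that no leaf of $y$ can lie in any $A_u$ with $u\in S$, it follows that every leaf of $y$ is either unassigned or already $0$; setting all unassigned cells to $0$ (the two-valued default you propose, with $1$'s on $\leaves_{G_2}(S)$, is unnecessary) then makes $T_y$ return $0$.

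So the ``structural strengthening of the bare $\beta$-node fact'' and the ``more careful Hall-type argument'' you outline are not needed, and since you yourself flag them as unproved, your proposal as it stands is incomplete. The paper closes the argument with essentially the one-line observation above. (You are right that the paper's sentence ``no location in $\leaves_{G_2}(y)$ has been assigned a value'' is imprecise for exactly the reason you spotted; the correct statement is that no such location has been assigned the value $1$.)
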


\begin{lemma} \label{lm:t1t2-admissibleExists}
Let $t \geq 3$ be an odd number; let $t_1 = \frac{t-3}{2}$ and $t_2=\frac{t+3}{2}$.  
Then, there exist an admissible pair of graphs consisting of a non-adaptive $(m,s,t_1)$-graph $G_1$ and an adaptive $(m,s,t_2)$-graph $G_2$ with $s=O(\exp(e^{2t}-t)m^{\frac 2 {t+1}} n^{1 - \frac 2 {t+1}} \lg m)$.
\end{lemma}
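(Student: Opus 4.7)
The plan is to sample $G_1$ and $G_2$ independently and uniformly at random---each $u\in U$ picks one uniform neighbour in each $V_i$ of $G_1$ and in each $A_\sigma$ of $G_2$, all choices mutually independent---and show that with positive probability the pair is admissible, taking $s=C\exp(e^{2t}-t)m^{2/(t+1)}n^{1-2/(t+1)}\lg m$ for a suitable absolute constant $C$. Properties (P1) and (P2) will be checked separately, each via a Chernoff-type concentration bound combined with a union bound over $S$.

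\textbf{Verifying (P1).} Fix $S$ with $|S|=n$. Conditioning on the $G_1$-edges incident to $S$, for each $y\in[m]\setminus S$ the event $\{y\in\surv(S)\}$ is the intersection of $t_1$ independent events each of probability at most $n/s$, and these events are mutually independent across $y$. A multiplicative Chernoff bound therefore gives $\Pr[|\surv(S)|>10\,m(n/s)^{t_1}]\le (e/10)^{10\,m(n/s)^{t_1}}$. With our choice of $s$, the exponent $m(n/s)^{t_1}$ works out to $\Theta(C^{-t_1}\exp(-t_1(e^{2t}-t))(m/n)^{4/(t+1)}\,n/\lg^{t_1}m)$, which under $t\le\tfrac{1}{10}\lg\lg m$ and $n\le m^{1-\epsilon}$ dominates $n\lg(em/n)\ge\lg\binom{m}{n}$; a union bound over $S$ then secures (P1) with failure probability at most $1/3$.

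\textbf{Verifying (P2).} Condition on an outcome of $G_1$ satisfying (P1). For a fixed $S$ of size $n$ and each $y\in\surv(S)$, the probability over $G_2$ that $y\in\surv^+(S)$ is at most $2^{t_2-1}n/s$, since $y$ has $2^{t_2-1}$ leaves and each lands in the at-most-$n$-element set $\leaves_{G_2}(S)\cap A_\sigma$ with probability at most $n/s$. Hence $\E|\surv^+(S)|\le 10\cdot 2^{t_2-1}\,m\,n^{t_1+1}/s^{t_1+1}$, and Chernoff followed by a union bound over $S$ yields a uniform deterministic cap $|\surv^+(S)|\le n^\star$ for some $n^\star\ll s$. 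It then suffices to show that $G_2$ expands: $|\Gamma_{G_2}(T)|\ge\beta|T|$ for \emph{every} $T\subseteq[m]$ with $|T|\le n+n^\star$. For fixed $T$ of size $r$, AM-GM on the partition sizes $\ell_\sigma$ of a candidate neighbourhood $L$ bounds the failure probability by $\binom{\alpha s}{\beta r}(\beta r/(\alpha s))^{\alpha r}$; union-bounding over $T$ and summing over $r$ produces
\[
\sum_r\biggl[\frac{e^{1+\beta}\,\beta^{t_2-1}}{\alpha^{t_2-1}}\cdot\frac{m\,r^{t_2-2}}{s^{t_2-1}}\biggr]^r,
\]
which is geometrically small provided $s^{t_2-1}\gtrsim e^\beta m(n+n^\star)^{t_2-2}$. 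Since $t_2-1=(t+1)/2$, this matches the chosen $s$ once the prefactor $\exp(e^{2t}-t)$ absorbs $e^\beta=\exp(\Theta(2^{t_2}))$ (using $2^{t_2}\le e^{2t}$ for $t\ge 3$).

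\textbf{Main obstacle.} The delicate point is to arrange that all three failure events---(P1), concentration of $|\surv^+(S)|$, and $G_2$-expansion---simultaneously occur with probability less than $1/3$ while keeping $s$ inside the claimed bound. The double-exponential constant $\exp(e^{2t}-t)$ is exactly what is needed to swallow the $e^\beta$ prefactor from $\binom{\alpha s}{\beta r}$, whereas the $\lg m$ factor in $s$ pays for the entropy $\lg\binom{m}{n}\asymp n\lg(em/n)$ in the outer union bound. The hypotheses $t\le\tfrac{1}{10}\lg\lg m$ and $n\le m^{1-\epsilon}$ guarantee that $\exp(e^{2t})$, being $2^{O((\lg m)^{0.4})}$, is dwarfed by the polynomial slack $(m/n)^{\Theta(\epsilon/t)}$ appearing in $m(n/s)^{t_1}$, so the calculations close and the random pair $(G_1,G_2)$ is admissible with positive probability.
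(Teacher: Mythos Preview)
Your overall scaffolding matches the paper: sample $G_1$ and $G_2$ randomly, control $|\surv(S)|$ by Chernoff plus a union bound over $S$, then control $|\surv^+(S)|$ similarly, and finally verify the expansion in (P2). The verification of (P1) and the bound on $|\surv^+(S)|$ are fine.

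The genuine gap is in the last step. You propose to prove that $|\Gamma_{G_2}(T)|\ge\beta|T|$ for \emph{every} $T\subseteq[m]$ with $|T|\le n+n^\star$, by a union bound over all such $T$; you then assert that the resulting condition $s^{t_2-1}\gtrsim e^\beta m(n+n^\star)^{t_2-2}$ ``matches the chosen $s$''. It does not. The cap $n^\star$ is of order $2^{t_2}m(n/s)^{t_1+1}$, and with $s\asymp m^{2/(t+1)}n^{1-2/(t+1)}$ (ignoring the $\lg m$ and the constant) one has $n^\star\asymp m^{2/(t+1)}n^{(t-1)/(t+1)}$, which is essentially of the same polynomial order as $s$ itself. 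Plugging $r\approx n^\star$ into your bracket gives
\[
\frac{m\,r^{t_2-2}}{s^{t_2-1}}\ \asymp\ \frac{m\,(n^\star)^{(t-1)/2}}{s^{(t+1)/2}}\ \asymp\ \Bigl(\frac{m}{n}\Bigr)^{(t-1)/(t+1)}\Big/\bigl(\text{const}\cdot(\lg m)^{O(t)}\bigr),
\]
which diverges polynomially in $m$ once $n\le m^{1-\epsilon}$; no constant prefactor of the form $\exp(e^{2t})$ in $s$ can absorb this. (Already for $t=5$ the bracket behaves like $(m/n)^{2/3}$ up to polylog and constants.) So the naive union bound over all $T\subseteq[m]$ of size up to $n^\star$ cannot close.

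The paper handles this by splitting (P2) into two ranges. For $|T|\le n+\lceil n\lg m\rceil$ the naive union bound over all $T\subseteq[m]$ does work. For larger $T=S'\cup R$ with $\lceil n\lg m\rceil\le|R|\le n^\star$, one exploits two structural savings you discarded: first, $R$ is chosen only from $\surv(S)$, so the entropy cost is $\binom{|\surv(S)|}{r}$ rather than $\binom{m}{r}$; second, by definition of $\surv^+(S)$ each $y\in R$ already has a leaf in $\leaves_{G_2}(S)$, which contributes an additional factor $(2^{t_2}n/s)^{r}$ to the probability. These two extra ingredients are exactly what reduce the bracket to something bounded, and without them the argument does not go through with the stated value of $s$.
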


\begin{proof}[Proof of Lemma~\ref{lm:t1t2-admissibleImpliesSatisfiable}]
Fix an admissible pair $(G_1, G_2)$. 
Thus, $G_1$ satisfies (P1) and $G_2$ satisfies (P2) above. Fix a set
$S \subseteq [m]$ of size $n$. We will show that there is an assignment such that
${\cal T}_{G_1} \wedge {\cal T}_{G_2}$ answers all questions of the form ``Is $x$ in $S$?'' correctly.

The assignment is constructed as follows. Assign 1 to all locations in $\Gamma_{G_1}(S)$ and 0 to the remaining locations in $\Gamma_{G_1}(S)$. Thus, ${\cal T}_{G_1}$ answers ``Yes'' for all query elements in $S$ and answers ``No'' for all query elements outside $S \cup \surv(S)$. However, it (incorrectly) answers ``Yes'' for elements in $\surv(S)$. We will now argue that these {\em false positives} can be eliminated using the scheme ${\cal T}_{G_2}$.

Using (P2) and Hall's theorem, we may assign to each element $u \in S
\cup {\surv}^+(S)$ a set $L_u \subseteq V(G_2)$ such that (i)$|L_u| =
\beta$ and (ii) the $L_u$'s are disjoint. Set $b_u = 1$ for $u\in S$ and
$b_u = 0$ for $u \in \surv^+(S)$ (some of the false positives).  As observed
above for each $u \in S \cup \surv^+(S)$ we may set the values in the
locations in $L_u$ such that the value returned on the query element
$u$ is precisely $b_u$.  Since the $L_u$'s are disjoint we may take such
an action independently for each $u$.  After this partial assignment,
it remains to ensure that queries for elements $y \in
\surv(S)\setminus \surv^+(S)$ (the remaining false positives) return a ``No''. Consider any such 
$y$. By the definition of $\surv^+(S)$, no location in
$\leaves_{G_2}(y)$ has been assigned a value in the above partial
assignment. Now, assign 0 to all unassigned locations in $V(G_2)$. Thus
${\cal T}_{G_2}$ returns the answer ``No'' for queries from
${\surv}(S) \setminus {\surv}^+(S)$.
\end{proof}

\begin{proof}[Proof of Lemma~\ref{lm:t1t2-admissibleExists}] In the following, let 
\[ s = \ceil{\exp(e^{2t}-t) m^{\frac{2}{t+1}} n^{1 - \frac{2}{t+1}}\lg m}.\]
We will construct the non-adaptive $(m,s,t_1)$-graph $G_1$ and the 
$(m,s,t_2)$-graph $G_2$ randomly, and show that with positive probability the pair $(G_1,G_2)$ is admissible. 
The graph $G_1$ is constructed as in the proof of 
Lemma~\ref{lm:t-admissibleExists}, and the analysis is similar.  Recall that $V(G_1) = \bigcup_{i \in [t_1]} V_i(G_1)$.
For each $u \in U$, one neighbor is chosen uniformly and independently from each $V_i(G_1)$.

\paragraph{(P1) holds.} Fix a set $S$ of size $n$. Then, 
$\E[|{\surv}(S)|] \leq (m-n)\left(\frac n s\right)^{t_1} 
\leq m \left( \frac n s \right)^{t_1}$.
As before, using the Chernoff bound, we conclude that
\[ \Pr[|{\surv}(S)| > 10 m \left( \frac n s \right)^{t_1}] \leq 
2^{-10m\left(\frac n s  \right)^{t_1}}.\]
Then, by the union bound,
\begin{eqnarray*}
	\Pr[\mbox{P1 fails}]&\leq& {m \choose n}2^{-10m\left(\frac n s  \right)^{t_1}} \\
&\leq& \frac 1 {10},
\end{eqnarray*}
where the last inequality follows from our choice of $s$.

Fix a graph $G_1$ such that (P1) holds. The random graph $G_2$ is
constructed as follows.  Recall that $V({G_2})= \bigcup_{z \in
  \{0,1\}^{\leq t_2 - 1}} A_z$.  For each $u \in [m]$, one neighbor is
chosen uniformly and independently from each $A_z$.

To establish (P2), we need to show that all sets of the form $S' \cup
R$, where $S' \subseteq S$ and $R \subseteq \surv^+(S)$ expand. To
restrict the choices for $R$, we first show in
Claim~\ref{cl:survivorplus} (a) that with high probability
$\surv^+(S)$ is small. Then, using direct calculations, we show that
whp the required expansion is available in the random graph $G_2$.
\newcommand{\event}{\mathcal{E}}
\begin{claim} \label{cl:survivorplus} 
\begin{enumerate}
\item[(a)] Let $\event_a \equiv \forall S \subseteq [m] (|S| = n):
  |{\surv}^+(S)| \leq 100 \cdot 2^{t_2} m\left(\frac n s \right)^{t_1 + 1}$;
  then, $\Pr[\event_a] \geq \frac{9}{10}$.
\item[(b)] Let $\event_b \equiv \forall R \subseteq [m]\, (|R| \leq n
  + \ceil{n \lg m}): |\Gamma_{G_2}(R)| \geq \beta|R|$; then,
  $\Pr[\event_b] \geq \frac{9}{10}$.
\item[(c)] Let $\event_c = \forall S \subseteq [m] (|S| = n), \forall
  S' \subseteq S, \forall R \subseteq {\surv}^+(S) (\ceil{n \lg m} \leq
  |R| \leq 100 \cdot 2^{t_2} m\left(\frac n s \right)^{t_1 + 1}):
  |\Gamma_{G_2}(S' \cup R)| \geq \beta |S' \cup R|$; then,
  $\Pr[\event_c] \geq \frac{9}{10}$.
\end{enumerate}
\end{claim}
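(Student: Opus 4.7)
The plan is to establish the three high-probability statements by union-bound arguments, with Part~(c) being the only genuinely new computation. For Part~(a), since $G_1$ is fixed and satisfies (P1), for each $S$ of size $n$ the set $\surv(S)$ is a deterministic subset of $[m]$ of size at most $10m(n/s)^{t_1}$. For each $y\in\surv(S)$, a union bound over its $2^{t_2-1}$ leaf coordinates in $G_2$ gives $\Pr[y\in\surv^+(S)]\leq 2^{t_2-1}\cdot n/s$, so $\E[|\surv^+(S)|]\leq 5\cdot 2^{t_2}m(n/s)^{t_1+1}$. Conditional on $\Gamma_{G_2}(S)$, the indicators $\mathbf{1}[y\in\surv^+(S)]$ are mutually independent across $y\notin S$, so I would apply the Chernoff bound used earlier in the paper with $\gamma=100\cdot 2^{t_2}m(n/s)^{t_1+1}$ (so $\gamma>2e\,\E[|\surv^+(S)|]$) to obtain $\Pr[|\surv^+(S)|>\gamma]\leq 2^{-\gamma}$; a union bound over $S$ using $\binom{m}{n}\leq(em/n)^n$ then gives $\Pr[\neg\event_a]\leq\tfrac{1}{10}$ for our choice of $s$. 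Part~(b) is a standard expansion estimate, structurally identical to the (P1) argument in the proof of Lemma~\ref{lm:t-admissibleExists}: AM-GM on $\ell_\sigma:=|L\cap A_\sigma|$ gives $\Pr[\Gamma_{G_2}(R)\subseteq L]\leq(\beta r/(\alpha s))^{\alpha r}$, and summing over $R$ and $L$ via $\binom{m}{r}\leq(em/r)^r$ and $\binom{\alpha s}{\beta r-1}\leq(e\alpha s/(\beta r))^{\beta r}$ yields $\leq\tfrac{1}{10}$.

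Part~(c) is the main new calculation. I would union-bound over choices of $S$ (size $n$), $S'\subseteq S$, $R\subseteq[m]\setminus S$ (size $r$ in the stated range), and $L\subseteq V(G_2)$ (size $\beta(|S'|+r)-1$), and bound
\[
\Pr\bigl[R\subseteq\surv^+(S)\ \text{and}\ \Gamma_{G_2}(S'\cup R)\subseteq L\bigr].
\]
Condition on $\Gamma_{G_2}(S)$: the event $\{\Gamma_{G_2}(S')\subseteq L\}$ is measurable with marginal probability $\prod_\sigma(\ell_\sigma/s)^{|S'|}$, and for each $y\in R$ the events $\{y\in\surv^+(S)\}$ and $\{\Gamma_{G_2}(y)\subseteq L\}$ depend only on $y$'s $G_2$-edges, hence are mutually independent across $y\in R$ given $\Gamma_{G_2}(S)$. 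Crucially, $y$'s leaf and non-leaf edges are themselves independent, allowing
\[
\Pr\bigl[y\in\surv^+(S),\,\Gamma_{G_2}(y)\subseteq L\mid\Gamma_{G_2}(S)\bigr]\ \leq\ \Bigl(\prod_{|\sigma|<t_2-1}\ell_\sigma/s\Bigr)\cdot\frac{2^{t_2-1}n}{s},
\]
where the second factor uses the survivor constraint (``some leaf of $y$ lies in $\leaves_{G_2}(S)$'') in place of the weaker $\prod_{|\sigma|=t_2-1}\ell_\sigma/s$. Taking the product over $y\in R$, applying AM-GM to the $\ell_\sigma$'s, and summing over $L$, $R$, $S'$, $S$ via the usual binomial bounds, the expression reduces to a geometric series in $r$ whose base is bounded by a small constant once $r\geq\ceil{n\lg m}$: the extra $(n/s)^r$ gain from the survivor restriction absorbs the $(em/r)^r$ factor arising from $\binom{m}{r}$.

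The main obstacle is Part~(c): the survivor condition and the expansion-failure event both involve the random $G_2$-edges of elements of $R$, so they cannot be treated as independent, and a naive use of either bound alone is too weak. The key maneuver is to condition on $\Gamma_{G_2}(S)$ and split each $y\in R$'s edges into leaves (carrying the survivor constraint) and non-leaves (free); this extracts the crucial $(n/s)^r$ factor without sacrificing the non-leaf contribution to the expansion bound, which is exactly what is needed to dominate the $\binom{m}{r}$ choice-count when $r$ is large.
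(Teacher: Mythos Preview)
Your treatment of Parts~(a) and~(b) is fine and matches the paper. Part~(c), however, does not close, for two separate reasons.

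\emph{First gap: the union bound over $R$.} You range $R$ over all of $[m]\setminus S$ and pay $\binom{m}{r}\leq (em/r)^r$ for it, hoping that the single survivor factor $(n/s)^r$ will absorb this. It does not. At the bottom of the range, $r=\ceil{n\lg m}$, one has
\[
\frac{em}{r}\cdot\frac{n}{s}\ \approx\ \frac{em}{s\lg m}\ \approx\ \frac{1}{(\lg m)^2}\left(\frac{m}{n}\right)^{(t-1)/(t+1)},
\]
which is polynomially large (recall $n\leq m^{1-\epsilon}$). The paper avoids this by using that $G_1$ is chosen \emph{first} and satisfies (P1): hence $\surv(S)$ is a \emph{deterministic} set of size at most $10m(n/s)^{t_1}$, and one may union-bound only over $R\subseteq\surv(S)$, paying $\binom{10m(n/s)^{t_1}}{r}$. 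This extra $(n/s)^{t_1 r}$ saving is indispensable whenever $t_1\geq 1$ (i.e., $t\geq 5$).

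\emph{Second gap: too few edge-in-$L$ constraints.} For each $y\in R$ you keep only the non-leaf constraints $\prod_{|\sigma|<t_2-1}\ell_\sigma/s$, replacing all $2^{t_2-1}$ leaf constraints by the one survivor factor $2^{t_2-1}n/s$. That leaves only $2^{t_2-1}-1$ factors of the form $\ell_\sigma/s$ per $y$, which is strictly less than $\beta=2^{t_2}-t_2$. Consequently the choice cost $\binom{\alpha s}{\beta(i+r)}\approx(e\alpha s/\ell)^{\beta(i+r)}$ is not cancelled: you are left with a residual
\[
\left(\frac{s}{\ell}\right)^{(\beta-(2^{t_2-1}-1))r-(\alpha-\beta)i}
\ =\ \left(\frac{s}{\ell}\right)^{(2^{t_2-1}-t_2+1)r-(t_2-1)i},
\]
whose exponent is $\approx(2^{t_2-1}-t_2+1)r>0$ once $r\geq n\lg m\gg i$, so the whole expression blows up. (Incidentally, at $r=n\lg m$ the bound $\prod_{|\sigma|=t_2-1}\ell_\sigma/s$ is \emph{stronger}, not weaker, than $2^{t_2-1}n/s$.) The paper instead union-bounds over which leaf of $y$ lies in $\leaves_{G_2}(S)$, pays $(2^{t_2}n/s)$ for that one leaf, and then \emph{retains} the containment-in-$L$ constraint for the remaining $\alpha-1$ edges of $y$; since $\alpha-1\geq\beta$, this fully cancels the $\binom{\alpha s}{\ell}$ term.

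Both fixes are simple once seen: restrict $R$ to the deterministic set $\surv(S)$, and for each $y\in R$ keep $\alpha-1$ (not $2^{t_2-1}-1$) edge-in-$L$ factors alongside the survivor factor.
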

\paragraph{Proof of claim ~\ref{cl:survivorplus}.}  Part (a) follows by a routine application of Chernoff bound, as in several previous proofs. For a set $S$ of size $n$, we have $\E[{\surv}^+(S)]\leq |\surv(S)|2^{t_2}(\frac n s)
        \leq 2^{t_2}10m\left( \frac n s \right)^{t_1 + 1}$.  Then,
\begin{eqnarray*}
\Pr[\neg \event_a]&\leq& {m \choose n}2^{-2^{t_2}10m\left(\frac n s
\right)^{t_1+1}} \\ 
&\leq& \frac 1 {10},
\end{eqnarray*}
where the last inequality holds because of our choice of $s$.

Next consider part (b).  If $\event_b$ does not hold, then for some
non-empty $W \subseteq [m]$, $(|W| \leq n + \ceil{n \lg m})$, we have
$|\Gamma_{G_2}(W)| \leq \beta |W| -1$. Fix a set $W$ of size $r \geq
1$ and $L \subseteq V(G_2)$ of size $\beta r-1$. Let $L$ have
$\ell_z$ elements in $A_z$. Then,
\[
\Pr[\Gamma_{G_2}(W) \subseteq L] \leq \prod_z \left(\frac{\ell_z}{|A_z|}\right)^{r} \leq
\left(\frac{\beta r-1}{\alpha s}\right)^{ \alpha r}.
\]
We conclude, using the union bound over choices of $W$ and $L$, that the probability that 
$\event_b$ does not hold is at most
\begin{eqnarray*}
&& \sum_{r=1}^{n+\ceil{n\lg m}} {m\choose r}
{{\alpha s} \choose \beta r-1}\left( \frac{\beta r-1}{\alpha s} \right)^{\alpha r}\\
&\leq&
\sum_{r=1}^{n+\ceil{n\lg m}}
\left(\frac {em}{r}\right)^r \left( \frac{\alpha es}{\beta r-1} \right)^{\beta r-1}
\left(\frac{\beta r-1}{\alpha s}\right)^{\alpha r}\\
&\leq&
\sum_{r=1}^{n+\ceil{n\lg m}}
\left(\frac{\beta r}{\alpha es}\right) \left[ \frac {em}{r} e^\beta \left(\frac {\beta r}{\alpha s}\right)^{\alpha-\beta} \right]^r\\
&\leq&
\sum_{r=1}^{n+\ceil{n\lg m}}
\left(\frac{\beta r}{\alpha es}\right) \left[ e^{\beta +1}\left( \frac {\beta}{\alpha} \right)^{\alpha-\beta}  \left(\frac {mr^{\alpha-\beta-1}}{s^{\alpha-\beta}}\right) \right]^r\\
&\leq& \frac {1}{10},
\end{eqnarray*}
where the last inequality holds because of our choice of $s$.

Finally, we justify part (c).  To bound the probability that
$\event_c$ fails, we consider a set $S \subseteq [m]$ of size $n$, a
subset $S' \subseteq S$ of size $i$ (say), a subset $R \subseteq {\surv}^+(S)$ of
size $r$ (where $\ceil{n\lg m} \leq r \leq 100 \cdot 2^{t_2} m\left(\frac n
s \right)^{t_1 + 1}$) and $L \subseteq V(G_2)$ of size $ \ell = 
\beta (i+r) $ and define the event
\[ \event(S,S',R,L) \equiv (\forall y \in R: \leaves_{G_2}(S) \cap \leaves_{G_2}(y) \neq \emptyset) \wedge \Gamma_{G_2} (S' \cup R) \subseteq L.\]
Then,
\begin{align}
 \Pr[\event(S,S',R,L)] &\leq  \left( \frac {2^{t_2}n}{s} \right)^r 
\left( \frac {\ell}{(\alpha-1) s} \right)^{(\alpha-1)r} 
\left( \frac {\ell}{\alpha s} \right)^{\alpha i}\\
&\leq  \left( \frac {2^{t_2}n}{s} \right)^r
      \left( \frac {\beta(i+r)}{(\alpha-1) s} \right)^{(\alpha-1)(i+r)} 
     \left( \frac{\beta(i+r)}{\alpha s} \right)^{i},
\end{align} 
where the factor $\left( \frac {2^{t_2}n}{s} \right)^r $ is justified
because of the requirement that every $y \in R$ has at least one
neighbour in $\leaves_{G_2}(S)$; the factor $\left( \frac
{\ell}{(\alpha-1) s} \right)^{(\alpha-1)r}$ is justified because all
the remaining neighbours must lie in $L$ (we use AM $\geq$ GM); the
last factor $\left( \frac {\ell}{\alpha s} \right)^{\alpha i}$ is
justified because all neighbors of elements in $S’$ lie in $L$ (again
we use AM $\geq$ GM). To complete the argument we apply the union
bound over the choices of $(S,S',R,L)$. Note that we may restrict
attention to $\ell = \beta (i+r)$ (because for our choice of $s$, we
have $\beta(i+r) \leq |V(G_2)| = \alpha s$).  Thus, the probability
that $\event_c$ fails to hold is at most
     \[ \sum_{S,S',R,L} \Pr[\event_c(S,S',R,L)],\] 
     where $S$ ranges over sets of size $n$, $S' \subseteq S$ of size $i$, $R
     \subseteq \surv(S)$ of size $r$ such that $\floor{n \lg m} \leq
     r \leq 100 2^{t_2}m\left( \frac {n}{s} \right)^{t_1 + 1}$, $L$ is
     a subset of $V(G_2)$ of size $\beta(i+r)$. We evaluate this sum
     as follows.
\begin{align}
 & \sum_{r} \sum_{i} {m \choose n}
        {\floor{ 10m \left( \frac {n}{s} \right)^{t_1}} \choose r}
        {n \choose i}
        {\alpha s \choose \beta (i+r)}
        \left( \frac {2^{t_2}n}{s} \right)^{r}
        \left(\frac {\beta(i+r)}{(\alpha-1) s} \right)^{(\alpha-1)(i+r)} 
        \left(\frac{\beta(i+r)}{\alpha s} \right)^{i}
        \\
&\leq
  \sum_{r} \sum_{i} 
   \left[ \left(\frac {em}{n} \right)^{\frac {n}{i+r}} 
                \left( \frac {10 e m \left( \frac {n}{s} \right)^{t_1}} {r} \right)^{\frac {r}{i+r}}
		{n \choose i}^{\frac {1}{1+r}}
                \left( \frac {\beta(i+r)}{(\alpha-1)s} \right)^{\alpha-1} \right.
\nonumber\\
&
\left.
        \left(\frac{e\alpha s}{\beta(i+r)}\right)^\beta
        \left( \frac {2^{t_2}n}{s} \right)^{\frac{r}{i+r}}
        \left(\frac {\beta(i+r)}{(\alpha-1) s} \right)^{\frac{i}{i+r}} 
\right]^{i+r}\\
&\leq
  \sum_{r} \sum_{i} 
   \left[ \left(\frac {em}{n} \right)^{\frac {n}{i+r}} 
                \left( \frac {10 e m \left( \frac {n}{s} \right)^{t_1}} {r} \right)^{\frac {r}{i+r}}
                {n \choose i}^{\frac {1}{1+r}}
		\left( \frac {\beta(i+r)}{(\alpha-1)s} \right)^{\alpha-\beta-1} \right.
\nonumber\\
&
\left.
                \left(\frac{e\alpha}{\alpha-1} \right)^\beta
                \left( \frac {\beta(i+r)}{2^{t_2}n (\alpha-1)} \right)^{ \frac {i}{i+r} } 
                \left( \frac {2^{t_2}n}{s} \right)   
\right]^{i+r}.
\end{align}
We will show that the quantity inside the square brackets is at most
$\frac{1}{2}$. Then, since $r \geq n\lg m$ and $i \geq 0$
\[ \Pr[\neg \event_c] \leq \left(\sum_r 2^{-r}\right) \left(\sum_i 2^{-i} \right) \leq \frac{1}{10}.\]
The quantity in the brackets can be decomposed as a product of two
factors, which we will bound separately.
\begin{description}
\item[Factor 1:] Consider the following contributions
\[\left(\frac {em}{n} \right)^{\frac {n}{i+r}} 
  (10 e)^{\frac{r}{i+r}} 
  {n \choose i}^{\frac {1}{i+r}}
  \left(\frac{e\alpha}{\alpha-1}\right)^{\beta}
  \left(\frac {\beta(i+r)}{2^{t_2}n(\alpha-1)} \right)^{\frac{i}{i+r}}.
\]
Since $r \geq n \lg m$ and $i \leq n$, we have $\frac{i}{i+r} \leq
\frac{n}{n+r} \leq \frac{1}{\lg m} \leq \frac{1}{\lg_e m}$. Thus, for all large enough $m$, 
this quantity is at most
\[ e^2 \cdot 10e \cdot e^2\cdot (2e)^{\beta} \cdot e \leq \exp(e^{2t}-t).\]

\item[Factor 2:] We next bound the contribution for the remaining factors.
\begin{align}
& \left( \frac{m (\frac{n}{s})^{t_1}}{r}\right)^{\frac{r}{i+r}}
\left(\frac{\beta(i+r)}{(\alpha -1) s}\right)^{\alpha-\beta -1}\left(\frac{2^{t_2} n}{s}\right)\\
&\leq \left( \frac{m (\frac{n}{s})^{t_1}}{r}\right)
\left(\frac{2r}{ s}\right)^{\alpha-\beta -1}
\left(\frac{2^{t_2} n}{s}\right) \label{eq:dropexponent} \\
&= \frac{mn^{t_1+1}2^{\alpha-\beta+t_2 -1} r^{\alpha-\beta -2}}{s^{\alpha-\beta +t_1}}.
\end{align}
To justify (\ref{eq:dropexponent}),
Recall that $r \leq 100 \cdot 2^{t_2} m\left(\frac n s \right)^{t_1 + 1}$
and $s = \ceil{\exp(e^{2t}-t) m^{\frac{2}{t+1}} n^{1 - \frac{2}{t+1}}\lg m}$; thus $\frac{m (\frac{n}{s})^{t_1}}{r} \geq 1$.
Then, the above quantity is bounded by
\begin{align}
& \frac{mn^{t_1+1}2^{2(t_2-1)} \left(100 \cdot 2^{t_2} m n^{t_1+1}\right)^{\alpha-\beta-2}}{s^{(t_1+1)(\alpha-\beta - 2)} s^{\alpha - \beta + t_1}}\\
&\leq \left(\frac{100\cdot 2^{2t_2} m n^{t_1+1}}{s^{t_1+2}}\right)^{\alpha-\beta-1}.
\end{align}
\end{description}
Thus, since $s = \ceil{\exp(e^{2t}-t) m^{\frac{2}{t+1}} n^{1 -
    \frac{2}{t+1}}\lg m}$, then the product of the factors is at most
$\frac{1}{10}$, as required.
\end{proof}

\subsection*{Acknowledgments}

We are grateful to Pat Nicholson and Venkatesh Raman for their
comments on these results, and for sharing with us their recent
work~\cite{LMNR}.

%
%
%
%

\end{document}